\title{The Asymmetric Travelling Salesman Problem in~Sparse Digraphs\thanks{This research is a part of projects that have received funding from the European Research Council (ERC)
		under the European Union's Horizon 2020 research and innovation programme
		Grant Agreement 677651 (Ł. Kowalik).}}
\newtheorem{theorem}{Theorem}[section]
\newtheorem{lemma}[theorem]{Lemma}
\newtheorem{corollary}[theorem]{Corollary}
\newtheorem{claim}{Claim}
\newcommand{\heading}[1]{\medskip\noindent{\bf #1.\ }}%
\newcommand{\ProblemFormat}[1]{{\sc #1}}
\newcommand{\ProblemName}[1]{\ProblemFormat{#1}\xspace}
\newcommand{\probATSP}{\ProblemName{ATSP}}
\newcommand{\probTSP}{\ProblemName{TSP}}
\newcommand{\probATSPFull}{\ProblemName{Asymmetric Travelling Salesman Problem}}
\newcommand{\probDirHam}{\ProblemName{Directed Hamiltonicity}}
\newcommand{\probUndirHam}{\ProblemName{Undirected Hamiltonicity}}
\newcommand{\probForcedTSPFull}{\ProblemName{Forced Travelling Salesman Problem}}
\newcommand{\probForcedTSP}{\ProblemName{Forced TSP}}
\newcommand{\probBipartTSP}{\ProblemName{Bipartite Forced Matching TSP}}
\newcommand{\probBFMTSP}{\ProblemName{BFM-TSP}}
\newcommand{\GraphParam}[1]{\mathtt{#1}}
\newcommand{\pw}{\GraphParam{pw}}
\newcommand{\tw}{\GraphParam{tw}}
\newcommand{\ceil}[1]{\lceil{#1}\rceil}
\newcommand{\floor}[1]{\lfloor{#1}\rfloor}
\newcommand{\eps}{\varepsilon}
\renewcommand{\O}{\mathcal{O}}
\newcommand{\Oh}{\mathcal{O}}
\newcommand{\Ohstar}{\mathcal{O}^*}
\newcommand{\Z}{\mathbb{Z}}
\DeclareMathOperator{\avgdeg}{avgdeg}
\DeclareMathOperator{\outdeg}{outdeg}
\DeclareMathOperator{\per}{per}
\newcommand{\type}{{\sf type}}
\newcommand{\inn}{{\sf in}}
\newcommand{\outt}{{\sf out}}
\newcommand{\midd}{{\sf mid}}
\newcommand{\NN}{\mathbb{N}}
\newcommand{\ZZ}{\mathbb{Z}}
\newcommand{\Tree}{\mathcal{T}}
\newcommand{\ignore}[1]{}
\newcommand{\SmallPicture}[2]{%
  \includegraphics[height=#2]{#1}%
  \xspace
}
\newcommand{\montecarlo}{\SmallPicture{dice.png}{0.8em}}
\newcommand{\expspace}{\SmallPicture{bomb.png}{1em}}
\newcommand{\encircle}[1]{\textcircled{\textsc{#1}}}
\newcommand{\algoref}[1]{Algorithm~\encircle{#1}}
\newcommand{\vect}[1]{\mathbf{#1}}
\begin{document}

\author{Łukasz Kowalik\thanks{Institute of Informatics, University of Warsaw, Poland (\texttt{kowalik@mimuw.edu.pl})}\and Konrad Majewski\thanks{Faculty of Mathematics, Informatics, and Mechanics, University of Warsaw, Poland (\texttt{km371194@students.mimuw.edu.pl})}}

\date{}

\maketitle

\begin{textblock}{20}(0, 13.0)
	\includegraphics[width=40px]{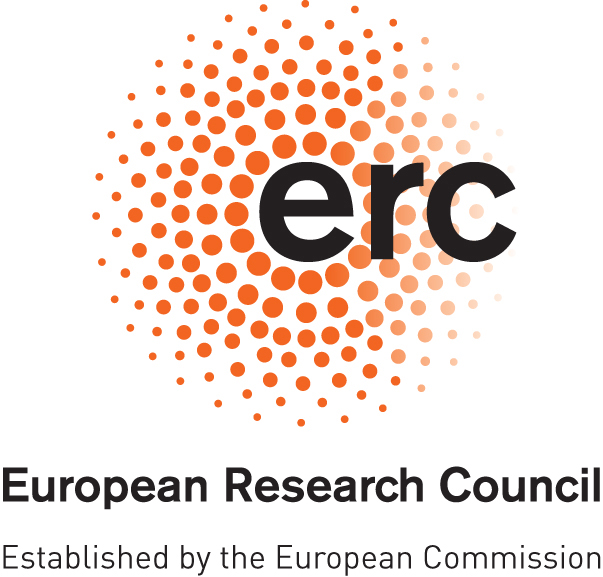}%
\end{textblock}
\begin{textblock}{20}(-0.25, 13.4)
	\includegraphics[width=60px]{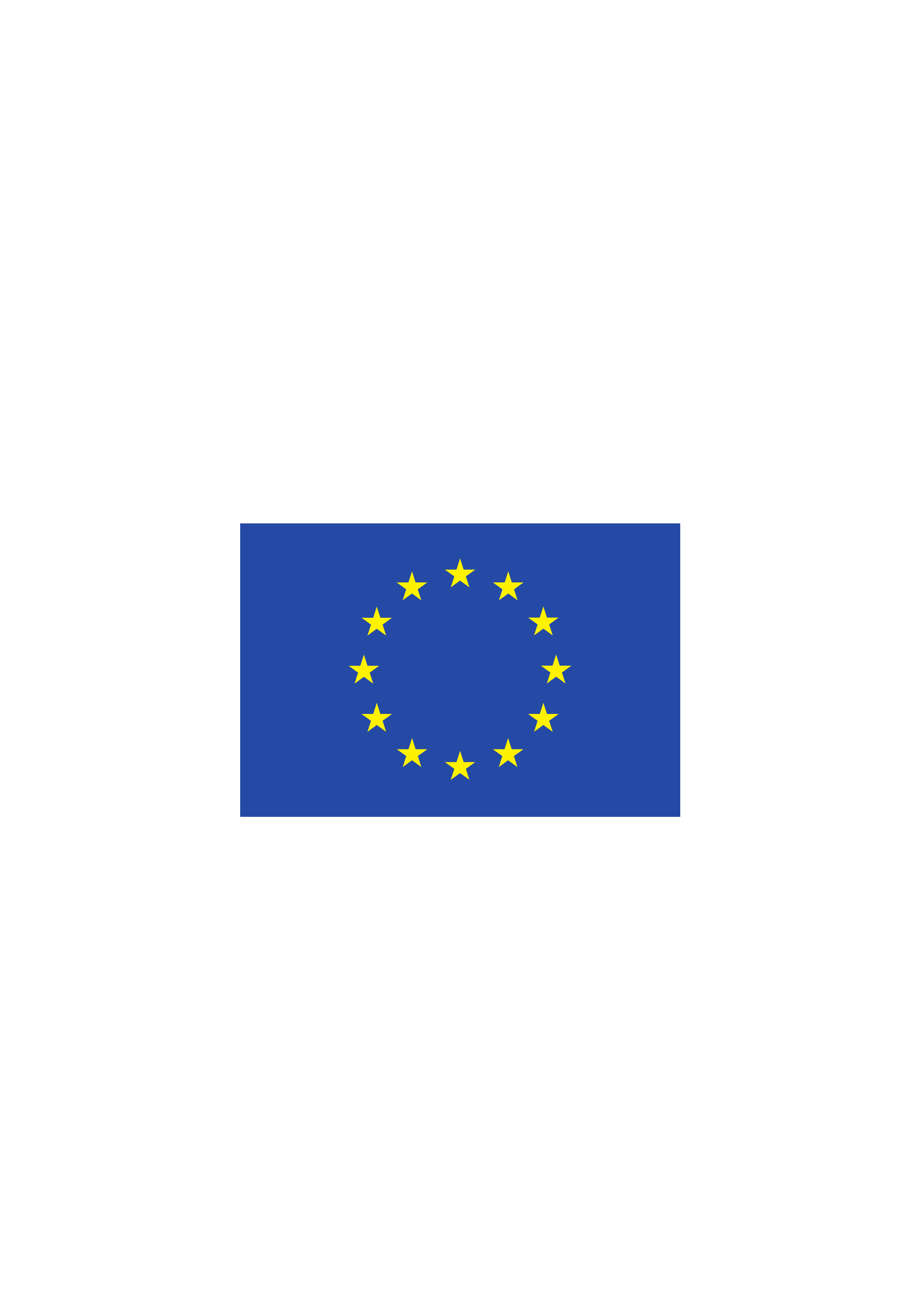}%
\end{textblock}

\begin{abstract}
\probATSPFull (\probATSP) and its special case \probDirHam are among the most fundamental problems in computer science.
	The dynamic programming algorithm running in time $\Ohstar(2^n)$ developed almost 60 years ago by Bellman, Held and Karp, is still the state of the art for both of these problems.

	In this work we focus on {\em sparse} digraphs.

	First, we recall known approaches for \probUndirHam and \probTSP in sparse graphs and we analyse their consequences for \probDirHam and \probATSP in sparse digraphs, either by adapting the algorithm, or by using reductions.
	In this way, we get a number of running time upper bounds for a few classes of sparse digraphs, including $\Ohstar(2^{n/3})$ for digraphs with both out- and indegree bounded by 2, and $\Ohstar(3^{n/2})$ for digraphs with outdegree bounded by 3.

	Our main results are focused on digraphs of bounded {\em average} outdegree $d$.
	The baseline for \probATSP here is a simple enumeration of cycle covers which can be done in time bounded by $\Ohstar(\mu(d)^n)$ for a function $\mu(d)\le(\ceil{d}!)^{1/{\ceil{d}}}$.
	One can also observe that \probDirHam can be solved in randomized time $\Ohstar((2-2^{-d})^n)$ and polynomial space, by adapting a recent result of Bj\"{o}rklund [ISAAC 2018] stated originally for \probUndirHam in sparse bipartite graphs.
	We present two new deterministic algorithms for \probATSP: the first running in time $\Oh(2^{0.441(d-1)n})$ and polynomial space, and the second in exponential space with running time of $\Oh^*(\tau(d)^{n/2})$ for a function $\tau(d)\le d$.
	
\end{abstract}

\newpage

\section{Introduction}
\label{sec:intro}
In the \probDirHam problem, given a  directed graph (digraph) $G$ one has to decide if $G$ has a Hamiltonian cycle, i.e., a simple cycle that visits all vertices.
In its weighted version, called \probATSP, we additionally have integer weights on edges $w:E\to \Z$, and the goal is to find a minimum weight Hamiltonian cycle in $G$.

The \probATSP problem has a dynamic programming algorithm running in time and space $\Ohstar(2^n)$ due to Bellman~\cite{BellmanTSP} and Held and Karp~\cite{HeldKarpTSP}.
Gurevich and Shelah~\cite{GurevichShelahTSP} obtained the best known {\em polynomial space} algorithm, running in time~$\Oh(4^nn^{\log n})$.
It is a major open problem whether there is an algorithm in time $\Ohstar((2-\eps)^n)$ for an $\eps>0$,
even for the unweighted case of \probDirHam.
However, there has been a significant progress in answering this question in variants of \probDirHam.
Namely, Bj\"orklund and Husfeldt~\cite{BH:parity} showed that the parity of the number of Hamiltonian cycles in a digraph can be determined in time $O(1.619^n)$ and Cygan, Kratsch and Nederlof~\cite{Cygan:pathwidth} solved the bipartite case of \probDirHam in time $\Oh(1.888^n)$, which was later improved to $\Ohstar(3^{n/2})=O(1.74^n)$ by Bj\"orklund, Kaski and Koutis~\cite{Bjorklund:directed-bipartite}.

\begin{table}[ht!]
		\begin{center}
		\begin{tabular}{l|>{\raggedleft}p{1.7cm}cl|>{\raggedleft}p{2.5cm}cl}
			\toprule
			Graph class                     & \multicolumn{3}{c|}{\ProblemName{Undirected Hamiltonicity}} & \multicolumn{3}{c}{\ProblemName{Travelling Salesman Problem}} \\
			\midrule
			\multirow{2}{*}{general}        & \multirow{2}{*}{$1.66^n$} & \multirow{2}{*}{\montecarlo} & \multirow{2}{*}{\cite{bjorklund-hamilton}}      & $2^n$            & \expspace & \cite{BellmanTSP, HeldKarpTSP} \\
			&                           &                              &                                                 & $4^n n^{\log n}$ &           & \cite{GurevichShelahTSP} \\ \midrule
			\multirow{2}{*}{bipartite}      & \multirow{2}{*}{$1.42^n$} & \multirow{2}{*}{\montecarlo} & \multirow{2}{*}{\cite{bjorklund-hamilton}}      & $2^n$            & \expspace & \cite{BellmanTSP, HeldKarpTSP} \\
			&                           &                              &                                                 & $4^n$            &           & \cite{OthmanTSP} \\ \midrule
			\multirow{2}{*}{$\Delta = 3$}   & $1.16^n$                  & \expspace\ \montecarlo       & \cite{Cygan:pathwidth}                          & $1.22^n$         & \expspace & \cite{Bodlaender:TSP} \\
			& $1.24^n$                  &                              & \cite{XiaoNagamochi:deg3}                       & $1.24^n$         &           & \cite{XiaoNagamochi:deg3} \\ \midrule
			\multirow{2}{*}{$\Delta = 4$}   & $1.51^n$                  & \expspace\ \montecarlo       & \cite{Cygan:pathwidth}$+$\cite{Fomin:pathwidth} & $1.63^n$         & \expspace & \cite{Bodlaender:TSP}$+$\cite{Fomin:pathwidth} \\
			& $1.59^n$                  & \montecarlo                  & \cite{spotting-trees}                           & $1.70^n$         &           & \cite{XiaoNagamochi:deg4} \\ \midrule
			\multirow{2}{*}{$\Delta = 5$}   & \multirow{2}{*}{$1.63^n$} & \multirow{2}{*}{\montecarlo} & \multirow{2}{*}{\cite{spotting-trees}}          & $1.88^n$         & \expspace & \cite{Bodlaender:TSP}$+$\cite{Fomin:pathwidth} \\
			&                           &                              &                                                 & $2.35^n$         &           & \cite{Yunos:deg5} \\ \midrule
      any $\Delta$                 &  &  &      & $(2-\eps'_\Delta)^n$   & \expspace & \cite{Bjorklund:bounded-degree} \\ \midrule
			\multirow{2}{*}{$\avgdeg \leq d$}&$1.12^{dn}$               & \expspace\ \montecarlo       & \cite{Cygan:pathwidth}$+$\cite{Kneis:pathwidth} & $1.14^{dn}$      & \expspace & \cite{Bodlaender:TSP}$+$\cite{Kneis:pathwidth} \\
      &                           &                              &                                                 & $2^{(1-\eps_d)n}$   & \expspace & \cite{Cygan:average-degree} \\ \midrule
			bipartite & \multirow{2}{*}{$(2-2^{1-d})^{n/2}$} & \multirow{2}{*}{\montecarlo} & \multirow{2}{*}{\cite{Bjorklund:sparse-bipartite}} & & & \\
            $\avgdeg \leq d$ & & & & & & \\ \midrule
			pathwidth                       & $3.42^\pw$                & \expspace\ \montecarlo       & \cite{Cygan:pathwidth}                          & $4.28^\pw$       & \expspace & \cite{Bodlaender:TSP} \\ \midrule
			treewidth                       & $4^\tw$                   & \expspace\ \montecarlo       & \cite{Cygan:connectivity}                       & $9.56^\tw$       & \expspace & \cite{Bodlaender:TSP} \\ \bottomrule

		\end{tabular}
		\caption[Results for \emph{undirected} graphs]
		{\label{table:results-undirected}Running times (with polynomial factors omitted) of
			algorithms for \emph{undirected} graphs. Rows marked with \expspace denote exponential
			space algorithms, rows marked with \montecarlo denote Monte Carlo algorithms.}
		\end{center}
\end{table}

\paragraph{Undirected graphs.}
Even more is known in the undirected setting, where the problems are called \probUndirHam and \probTSP.
Bj\"{o}rklund~\cite{bjorklund-hamilton} shows that \probUndirHam can be solved in time $\Oh(1.66^n)$ in general and $\Ohstar(2^{n/2})=O(1.42^n)$ in the bipartite case.
Very recently, Nederlof~\cite{nederlof:1.9999} showed that the bipartite case of \probTSP admits an algorithm in time $\Oh(1.9999^n)$, assuming that square matrices can be multiplied in time $O(n^{2+o(1)})$.
Finally, there is a number of results for \probUndirHam and \probTSP restricted to graphs that are somewhat sparse.
An early example is an algorithm of Eppstein~\cite{Eppstein:cubic} for \probTSP in graphs of maximum degree 3, running in time $\Ohstar(2^{n/3})=O(1.26^n)$.
This result has been later improved and generalized to larger values of maximum degree, we refer the reader to Table~\ref{table:results-undirected} for details ($\Delta$ denotes the maximum degree).
Perhaps the most general measure of graph sparsity is the average degree~$d$.
Cygan and Pilipczuk~\cite{Cygan:average-degree} showed that whenever $d$ is bounded, the $2^n$ barrier for \probTSP can be broken, although only slightly.
More precisely, they proved the bound $\Ohstar(2^{(1-\eps_d)n})$, where $\eps_d = 1 / (2^{2d+1}\cdot 20d \cdot e^{e^{20d}})$.
We note that although their result was stated for undirected graphs, the same reasoning can be made for digraphs of average {\em total degree} (sum of indegree and outdegree).
For small values of $d$, more significant improvements are possible.
Namely, by combining the algorithms for \probUndirHam and \probTSP parameterized by pathwidth~\cite{Cygan:pathwidth,Bodlaender:TSP} with a bound on pathwidth of sparse graphs~\cite{Kneis:pathwidth} we get the upper bound of $\Oh(1.12^{dn})$ and $\Oh(1.14^{dn})$, respectively. For \probUndirHam, if the input graph is additionally bipartite, Bj\"{o}rklund~\cite{Bjorklund:sparse-bipartite} shows the $\Ohstar((2-2^{1-d})^{n/2})$ upper bound.

\begin{table}[ht!]
		\begin{center}
		\begin{tabular}{l|>{\raggedleft}p{2cm}cl|>{\raggedleft}p{1.9cm}cl}
			\toprule
			\multirow{2}{*}{Graph class}    & \multicolumn{3}{c|}{\multirow{2}{*}{\ProblemName{Directed Hamiltonicity}}} & \multicolumn{3}{c}{\ProblemName{Asymmetric Travelling}} \\
			& \multicolumn{3}{c|}{}                                                      & \multicolumn{3}{c}{\ProblemName{Salesman Problem}} \\
			\midrule
			\multirow{2}{*}{general}        & \multirow{2}{*}{$2^n$}    &                              & \multirow{2}{*}{\cite{Bax:inclusion-exclusion, Karp:inclusion-exclusion, Kohn:inclusion-exclusion}}      & $2^n$            & \expspace & \cite{BellmanTSP, HeldKarpTSP} \\
			&                           &                              &                                                        & $4^n n^{\log n}$ &           & \cite{GurevichShelahTSP} \\ \midrule
			\multirow{2}{*}{bipartite}      & \multirow{2}{*}{$1.74^n$} & \multirow{2}{*}{\montecarlo} & \multirow{2}{*}{\cite{Bjorklund:directed-bipartite}}   & $2^n$            & \expspace & \cite{BellmanTSP, HeldKarpTSP} \\
			&                           &                              &                                                        & $4^n$            &           & \cite{OthmanTSP} \\ \midrule
			$(2,2)$-graphs                & $1.26^n$                  &                              & (Corollary \ref{cor:2-regular})                        & $1.26^n$         &           & (Corollary \ref{cor:2-regular}) \\ \midrule
			$\Delta^+ = 3$    & $1.74^n$             & \expspace                    & (Corollary \ref{cor:d-regular})                        & $1.74^n$         & \expspace & (Corollary \ref{cor:d-regular}) \\ \midrule
			$\Delta = 3$    & $1.13^n$             &                     & (Corollary \ref{cor:totdeg3})                        & $1.13^n$         &  & (Corollary \ref{cor:totdeg3}) \\ \midrule
			any $\Delta$         & $(2-2^{-\Delta/2})^n$ & \montecarlo & (Theorem \ref{thm:hamilton-avgdeg}) & $(2-\eps'_\Delta)^n$   & \expspace & \cite{Bjorklund:bounded-degree} \\ \midrule
      average                         & $\mu(d)^n$                &                              & (Corollary \ref{cor:bregman-avgdeg}) & $\mu(d)^n$            &  & (Corollary \ref{cor:bregman-avgdeg}) \\
			$\outdeg \leq d$                & $2^{0.441(d-1)n}$          &                              & (Theorem \ref{thm:avgdeg})                             & $2^{0.441(d-1)n}$ &           & (Theorem \ref{thm:avgdeg}) \\
			                                & $\sqrt{\tau(d)}^{\ n}$    & \expspace & (Theorem \ref{thm:gebauer_sparse}) & $\sqrt{\tau(d)}^{\ n}$ & \expspace &   (Theorem \ref{thm:gebauer_sparse}) \\
			                                & $(2-2^{-d})^n$            & \montecarlo                  & (Theorem \ref{thm:hamilton-avgdeg})                             & $2^{(1-\eps_{2d})n}$   & \expspace & \cite{Cygan:average-degree} \\
                                      & $2^{(1-\Omega(1/d))n}$    & \montecarlo                    & \cite{bjorklund2020asymptotically}                              &    &  &  \\ \midrule
			treewidth                       & $6^\tw$                   & \expspace\ \montecarlo       & \cite{Cygan:connectivity}                              & &  &  \\ \bottomrule

		\end{tabular}
		\caption[Results for \emph{directed} graphs]
		{\label{table:results-directed}Running times (with polynomial factors omitted) of
			the algorithms for \emph{directed} graphs.
			We preserve the notation from Table \ref{table:results-undirected}.
			By $\Delta^+$ we denote maximum outdegree and $\Delta$ denotes maximum total degree.
			Treewidth refers to the underlying undirected graph.}
		\end{center}
\end{table}

\paragraph{Directed sparse graphs: hidden results}
The goal of this paper is to investigate \probDirHam and \probATSP in {\em sparse directed} graphs.
Quite surprisingly, not much results in this topic are stated explicitly.
In fact, we were able to find just a few references of this kind: Bj\"{o}rklund, Husfeldt, Kaski and Koivisto~\cite{Bjorklund:bounded-degree} describe an algorithm for digraphs with total degree bounded by $D$ that works in time $\Ohstar((2-\eps'_D)^n)$, for $\eps'_D=2-(2^{D+1}-2D-2)^{1/(D+1)}$.
Second, Cygan et al.~\cite{Cygan:connectivity} describe an algorithm for \probDirHam running in time $6^tn^{O(1)}$, where $t$ is the treewidth of the input graph.
Finally, Bj\"{o}rklund and Williams~\cite{Bjorklund:directed-counting} show a~deterministic algorithm which {\em counts} Hamiltonian cycles in directed graphs of average degree $d$ in time $2^{n-\Omega(n/d)}$ and exponential space.
Very recently, Bj\"{o}rklund \cite{bjorklund2020asymptotically}, using a different approach, obtained the same running time for the {\em decision} \probDirHam problem, but lowering the space to polynomial, at the cost of using randomization.
The authors of these two works have not put an effort to optimize the constants hidden in the $\Omega$ notation.
By following the analysis in each of these papers as-is, we get the saving term in the exponent at least $n/(111d)$ (for a faster, randomized algorithm) and $n/(500d)$, respectively.

However, one cannot say that nothing more is known, because many results for undirected graphs imply some running time bounds in the directed setting.
We devote the first part of this work to investigating such implications.
In some cases, the implications are immediate.
For example, Gebauer~\cite{Gebauer:TCS:4-regular,GebauerTSP} shows an algorithm running in time $\Ohstar(3^{n/2})=\Ohstar(1.74^n)$ that solves TSP in graphs of maximum degree 4.
It uses the meet-in-the-middle approach and can be sketched as follows: guess two opposite vertices of the solution cycle, generate a family of paths of length $n/2$ from each of them (of size at most $3^{n/2}$) and store one of the families in a dictionary to enumerate all complementary pairs of paths in time $\Ohstar(3^{n/2})$.
This algorithm, without a change, can be used for ATSP in digraphs of maximum outdegree 3, with the same running time bound (see Theorem~\ref{cor:d-regular}).

The other implications that we found rely on a simple reduction from ATSP to a variant of TSP in {\em bipartite undirected} graphs (see Lemma~\ref{lem:reduce}): replace each vertex $v$ of the input digraph $G$ by two vertices $v^\outt$, $v^\inn$ joined by an edge of weight 0, and for each edge $(u,v)\in E(G)$ create an edge $u^\outt v^\inn$ of the same weight. Then find a lightest Hamiltonian cycle that contains the matching $M=\{v^\outt v^\inn\mid v\in V(G)\}$.
By applying this reduction to a digraph with both outdegrees and indegrees bounded by 2, which we call a $(2,2)$-graph, and using Eppstein's algorithm~\cite{Eppstein:cubic} we get the running time of $\Ohstar(2^{n/3})=\Ohstar(1.26^n)$, see Corollary~\ref{cor:2-regular}.
Another consequence is an algorithm running in time $\Ohstar(2^{n/6})$ for digraphs of maximum total degree $3$, see Corollary~\ref{cor:totdeg3}. These two simple classes of digraphs were studied by Plesn{\'{\i}}k~\cite{plesnik}, who showed that \probDirHam remains NP-complete when restricted to them.

We can also apply the reduction to an arbitrary digraph of average outdegree $d$.
A naive approach would be then to enumerate all perfect matchings in the bipartite graph induced by edges $\{u^\outt v^\inn\mid (u,v)\in E(G)\}$.
Indeed, each such matching corresponds to a cycle cover in the input graph, so we basically enumerate cycle covers and filter-out the disconnected ones.
Thanks to Bregman-Minc inequality~\cite{Bregman:permanent} which bounds the permament in sparse matrices the resulting algorithm has running time $\Ohstar(\mu(d)^n)$, where
  \[
\mu(d) = (\lfloor d \rfloor !)^{\frac{\lfloor d \rfloor + 1 - d}{\lfloor d \rfloor}}
(\lceil d \rceil !)^{\frac{d - \lfloor d \rfloor}{\lceil d \rceil}} \leq (\ceil{d}!)^{1/\ceil{d}}.
\]
See Corollary~\ref{cor:bregman-avgdeg} for details.

Yet another upper bound for digraphs of average outdegree $d$ is obtained by using the reduction described above  and next applying Bj\"{o}rklund's algorithm for sparse bipartite graphs~\cite{Bjorklund:sparse-bipartite} with a slight modification to force the matching $M$ in the Hamiltonian cycle (see Theorem~\ref{thm:hamilton-avgdeg}).
The resulting algorithm has running time $\Ohstar((2-2^{-d})^n)$.

\paragraph{Directed sparse graphs: main results}
The simple consequences that we describe above are complemented by two more technical results.

The first algorithm runs in polynomial space and realizes the following idea.
Assume $d<3$.
Then many of the vertices of the input graph have outdegree at most 2, and we can just branch on vertices of outdegree at least 3, and solve the resulting $(2,2)$-graph using the fast $\Ohstar(2^{n/3})$-time algorithm mentioned before.
This idea can be boosted a bit in the case when the initial branching is too costly, i.e., there are many vertices of high outdegree: then we observe that in such an unbalanced graph one can apply the simple cycle cover enumeration which then runs faster than in graphs of the same density but with balanced outdegrees. After a technical analysis of the running time we get the following theorem.

\begin{theorem}
	\label{thm:avgdeg}
	\probATSP restricted to digraphs of average outdegree at most $d$
	can be solved in time $\Ohstar(2^{\alpha (d - 1) n})$
	and polynomial space, where $\alpha=\tfrac{7}{12}-\tfrac{1}{12(\log_23-1)}< 0.44088$.
\end{theorem}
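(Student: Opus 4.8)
The plan is to design two complementary algorithms, run whichever is faster on a given instance, and let $\alpha$ fall out of balancing their worst-case running times; both will use only polynomial space. Throughout I pass to the bipartite forced-matching formulation of Lemma~\ref{lem:reduce}, so that a solution is a minimum-weight connected cycle cover of $G$, and I express every cost through the outdegree sequence $(\outdeg(v))_v$, whose entries sum to $dn$.

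The first algorithm branches on vertices of high outdegree. As long as some $v$ has $\outdeg(v)\ge 3$, exactly one outgoing edge of $v$ is used by the solution, so a family of $\lceil\outdeg(v)/2\rceil$ pairs covering the outgoing edges of $v$ suffices for correctness; branching on which pair is retained lowers $\outdeg(v)$ to $2$ in every child. Once every outdegree is at most $2$ I solve the instance in time $\Ohstar(2^{n/3})$ and polynomial space by an $\Ohstar(2^{n/3})$-time routine for digraphs of maximum outdegree $2$, as in Corollary~\ref{cor:2-regular}; the whole search is a depth-first traversal and so stays within polynomial space. The decisive point of the analysis, which I would run in the style of measure-and-conquer, is that relative to the $2^{n/3}$ cost of an all-outdegree-$2$ instance each outdegree-$3$ vertex inflates the running time by only a factor $3/2$, giving a bound of the form $\Ohstar\!\bigl(2^{\,n/3+(\log_2 3-1)h}\bigr)$, where $h$ is essentially the number of outdegree-$3$ vertices.

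The second algorithm is the cycle-cover enumeration of Corollary~\ref{cor:bregman-avgdeg}, running in time $\Ohstar(\mu(d)^n)$ and polynomial space. By the Bregman--Minc inequality its cost is governed by the whole outdegree sequence and is small exactly when that sequence is far from regular, i.e.\ precisely in the regime where branching is slow. To conclude I would argue that, at fixed average outdegree $d$, the minimum of the two running times is worst when every outdegree is $2$ or $3$, with a $q=(d-2)$ fraction of outdegree-$3$ vertices. For such instances the two running times are $2^{(1/3+(\log_2 3-1)(d-2))n}$ and $2^{(1/2+((\log_2 3-1/2)/3)(d-2))n}$; they cross at $d-2=1/(4\log_2 3-5)$, where the exponent-to-$(d-1)$ ratio equals exactly $\tfrac{7}{12}-\tfrac{1}{12(\log_2 3-1)}=\alpha$, and one checks the ratio is strictly smaller at every other $d$. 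Taking the faster of the two algorithms therefore gives the claimed $\Ohstar(2^{\alpha(d-1)n})$ bound uniformly in $d$.

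The delicate part is the quantitative analysis, not the algorithms. I expect the main obstacle to be pinning down the factor $3/2$ per outdegree-$3$ vertex: a naive ``branch, then call the base routine'' accounting only yields a factor $2$, which pushes the worst-case ratio up to about $0.471$ and so is too weak, so the branching must be interleaved with the bounded-outdegree solver and charged through a single weight function that is simultaneously consistent with the $2^{n/3}$ base case, exploiting the forced-move propagation that is abundant once outdegrees are at most $2$. A second technical point is that the analysis must remain one-sided, controlling only the outdegree; indegrees cannot all be forced below $3$ (their sum is $dn>2n$), and if indegree branching were charged as well an adversary could inflate the running time and break the bound. The remaining point is the extremality claim --- that the mixed outdegree-$\{2,3\}$ sequence maximises the minimum of the two running times --- which should reduce, by a convexity argument, to optimising over the single parameter $q$ and then over $d$.
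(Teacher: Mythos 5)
Your overall architecture coincides with the paper's: run a branching algorithm alongside the Bregman--Minc cycle-cover enumeration of Corollary~\ref{cor:bregman-avgdeg}, take whichever is faster, and balance the two bounds on degree sequences supported on $\{2,3\}$. Your balancing arithmetic is correct and matches the paper's Case~1 (crossing at $d-2 = 1/(4\log_2 3 - 5)$, ratio exactly $\alpha$), as is your observation that a factor $2$ per outdegree-$3$ vertex only gives $\approx 0.471$. But this is exactly where the genuine gap lies, and you have named it yourself: the entire theorem hinges on a branching algorithm achieving $\Ohstar(2^{n/3 + (\log_2 3 - 1)S})$, where $S$ is the total outdegree excess, i.e.\ a factor $3/2$ per outdegree-$3$ vertex, and your proposal does not prove this. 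Worse, the specific branching rule you propose cannot achieve it: partitioning the out-edges of an outdegree-$3$ vertex into a pair and a singleton and branching on which part survives gives two children in which nothing is contracted and no edge is forced (the vertex simply has outdegree $2$ or $1$ afterwards), so there is no ``forced-move propagation'' to exploit, and the recurrence $T(S)\le 2\,T(S-1)$ with base $2^{n/3}$ is tight for this rule. The paper's branching (Theorem~\ref{thm:branching}) instead branches on all $s$ edges of an out-interface of size $s\ge 3$ \emph{individually}; each branch contracts the chosen edge, which triggers a cascade of forced contractions and deletions along maximal alternating \emph{switching walks}, and a delicate case analysis (Claim~\ref{lem:potential}) shows that in every branch the vertex count plus the excess drops by at least $s+1$. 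That is precisely what makes $\log_2 s \le 1 + \beta(s-2)$ with $\beta = \log_2 3 - 1$ go through, and nothing in your proposal supplies this amortization — it requires the reduction rules for size-$0$/$1$ interfaces, monotonicity of the potential under contraction, and the analysis of how the affected switching walks can share interfaces.

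There is a second, related gap in your base case. Corollary~\ref{cor:2-regular} applies to digraphs with \emph{both} out- and indegrees at most $2$; after your branching you only know that outdegrees are at most $2$, and individual indegrees can still be large, so the corollary cannot be invoked as stated. You flag the indegree asymmetry, but only as an analysis concern, whereas it is first a correctness concern for calling the base routine. The paper's resolution again runs through the reduction machinery: after exhaustively contracting forced edges (eliminating all interfaces of size $\le 1$), the handshaking lemma gives $S(G) = S^-(G)$, so zero outdegree excess automatically yields zero indegree excess and the instance really is a $(2,2)$-graph. In short: right two-algorithm architecture, correct extremal computation, but the lemma carrying all of the technical difficulty — the $\beta = \log_2 3 - 1$ branching bound together with its propagation mechanism — is left as a hope rather than a proof, and the branching rule you commit to would not deliver it.
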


The second algorithm generalizes Gebauer's meet-in-the-middle approach to digraphs of average outdegree $d$. (We note that it uses exponential space.)

\begin{theorem}
	\label{thm:gebauer_sparse}
	\probATSP restricted to digraphs of average outdegree at~most~$d$
	can be solved in~time~$\Ohstar(\tau(d)^{n/2})$ and the same space,
	where
	\[
	\tau(d) = \lfloor d \rfloor ^ {\lfloor d \rfloor + 1 - d} {(\lfloor d \rfloor + 1) ^ {d - \lfloor d \rfloor}}
	\leq d
	\]
\end{theorem}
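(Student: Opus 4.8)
The plan is to generalise Gebauer's meet-in-the-middle scheme from bounded maximum outdegree to bounded average outdegree. Guess an ordered pair of vertices $(a,b)$ that splits the (unknown) optimum Hamiltonian cycle into two directed arcs, each with roughly $n/2$ edges: one arc goes from $a$ to $b$ and the other from $b$ back to $a$. For each arc generate a family of candidate \emph{half-paths}; store one family in a dictionary keyed by the set of visited vertices together with the two endpoints, keeping for each key only the minimum weight; then scan the second family and, for every candidate, look up the complementary key to try to assemble a full Hamiltonian cycle, updating the best value found. Correctness follows because every Hamiltonian cycle decomposes, for a suitable guess of the antipodal pair $(a,b)$, into two complementary vertex-disjoint half-paths whose union is the whole cycle; since we retain minimum weights per signature and combine only complementary signatures, we recover the exact \probATSP optimum rather than merely deciding Hamiltonicity.

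The running time then reduces to bounding the number of stored partial solutions, and this is where the base $\tau(d)$ comes from. The relevant extremal fact is that among integer outdegree sequences with $\sum_v \outdeg(v) = dn$, the product $\prod_v \outdeg(v)$ is maximised when the degrees are as balanced as the integrality allows, i.e.\ each equal to $\floor{d}$ or $\ceil{d}$; a short counting argument shows that exactly $(d-\floor{d})n$ vertices take value $\ceil{d}$ and the rest take value $\floor{d}$, so the maximal product equals
\[
\floor{d}^{(\floor{d}+1-d)n}\,(\floor{d}+1)^{(d-\floor{d})n} = \tau(d)^{n}.
\]
Since the weighted arithmetic mean of the two balanced values is $d$, the AM--GM inequality gives $\tau(d) \le d$, matching the claimed bound. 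Meet-in-the-middle supplies the square root: each vertex lies in exactly one of the two arcs and contributes its outdegree as a branching factor to only that arc's family, so the number of half-paths one must store is governed by $\sqrt{\prod_v \outdeg(v)} \le \tau(d)^{n/2}$, which yields the desired $\Ohstar(\tau(d)^{n/2})$ time and space.

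The main obstacle is to realise this square-root-of-product bound \emph{robustly for every} digraph, not just (out)regular ones. A naive enumeration of all forward half-paths from $a$ is too costly when the outdegree sequence is skewed: a small set of heavy-outdegree vertices can host far more than $\tau(d)^{n/2}$ forward half-paths, even though its contribution to $\prod_v \outdeg(v)$ is modest. The fix is twofold. First, one should split the cycle not at an arbitrary antipodal point but at a point that balances the accumulated logarithm of outdegrees between the two arcs, so that each arc's degree-product is about $\sqrt{\prod_v \outdeg(v)}$; guessing the two boundary vertices costs only a polynomial factor. Second, partial solutions must be aggregated by their signature (visited set plus endpoints) with early pruning, rather than expanded path by path, so that the number of stored objects is controlled by the degree product and never by the raw path count. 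Verifying that this bounded enumeration still finds, for each Hamiltonian cycle, the balanced split that keeps both families within $\tau(d)^{n/2}$, and that the weighted dictionary bookkeeping correctly propagates minimum weights, is the technical heart of the proof; the extremal optimisation above is what pins down the precise base $\tau(d)$.
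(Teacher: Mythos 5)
You have the right skeleton --- guess the antipodal pair, generate two families of half-paths, combine them through a dictionary keyed by vertex sets --- and you correctly identify the obstacle that kills the naive analysis (including the bound claimed in your own second paragraph, which is false for an arbitrary antipodal split when the outdegree sequence is skewed). But the proposal stops exactly where the proof has to start: both halves of your fix are asserted, not proven. Specifically, you never prove (i) that every Hamiltonian cycle admits an antipodal split into two arcs of length $n/2$ whose inner-vertex outdegree \emph{products} are each at most $\sqrt{\prod_v \outdeg(v)}$, nor (ii) that a product-pruned enumeration generates only about that many paths; you explicitly defer (i) as ``the technical heart of the proof'', which it is. Moreover, your stated mechanism for (ii) --- aggregating partial solutions by signature --- is not what controls the running time: the size of the pruned search tree is, and aggregation by vertex set only matters for the final dictionary step (which the paper also performs). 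As written, the proposal is a plan whose two load-bearing lemmas are missing.

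The gap is one of execution, not of soundness, and it is worth recording how your route compares to the paper's. The paper balances outdegree \emph{sums}: its Lemma~\ref{lem:circle} splits the cycle into two $(n/2,\,m/2)$-light paths via a discrete intermediate-value argument (the quantity $R_i = S_i - S_{i+k}$ satisfies $R_k = -R_0$, hence changes sign somewhere), and it then pays for working with sums in Lemma~\ref{lem:generate_paths}, where converting a degree-sum budget into a path count requires a tree-levelling argument (Claim~\ref{claim:leaves}) together with the balancing inequality of Lemma~\ref{lem:tau}. Your product-balancing, if carried out, makes the counting step essentially trivial --- a search tree in which every root-to-leaf product of branching factors is at most $B$ has at most $B$ leaves, so pruning by the running product of inner outdegrees bounds each family by $n\sqrt{\prod_v \outdeg(v)}$ --- and your missing lemma (i) can be proven by rerunning the paper's intermediate-value argument on $\ell_i := \log \outdeg(v_i)$ in place of the outdegrees $d_i$, since nonnegativity of the $\ell_i$ is all that computation uses. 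A single global application of the extremal bound $\prod_v \outdeg(v) \le \tau(d)^n$ (your balancing argument, which is Lemma~\ref{lem:tau}) then gives $\Ohstar(\tau(d)^{n/2})$. So your approach would not fail; it simply was not completed, and the two steps you skipped are precisely where all the work lies.
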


\definecolor{blue}{HTML}{1F77B4}
\definecolor{orange}{HTML}{FF7F0E}
\definecolor{green}{HTML}{2CA02C}
\definecolor{red}{HTML}{D62728}

\newcommand{\enumcc}{{\sf\color{blue} enumcc}}
\newcommand{\bjorklund}{{\sf\color{orange} Bj\"orklund}}
\newcommand{\mim}{{\sf\color{red} mim}}
\newcommand{\branch}{{\sf\color{green} branch+}}

\begin{figure}[ht]
	\begin{center}
		\includegraphics[width=12cm]{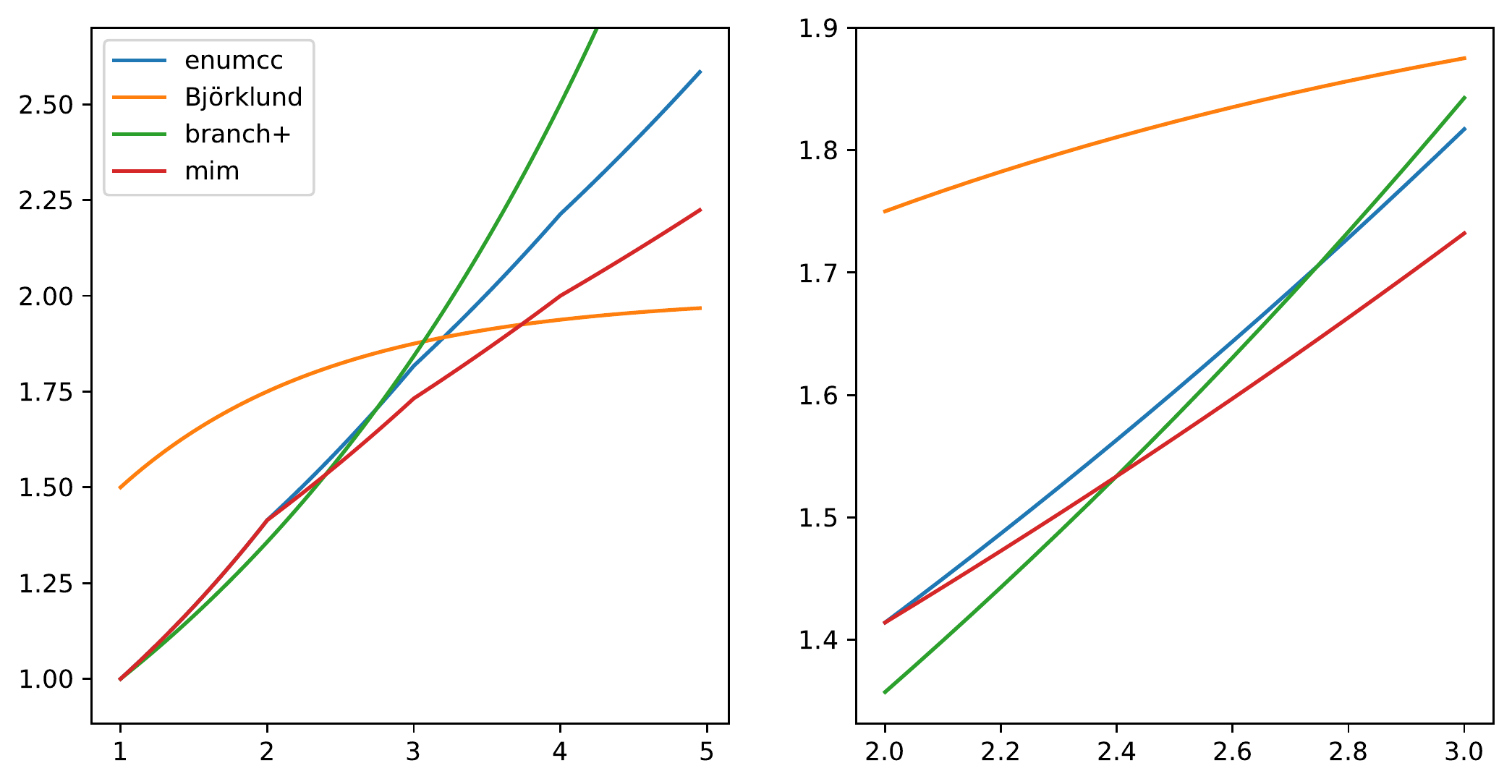}
		\caption{\label{fig:plots}Comparison of the running times of algorithms for solving ATSP (\enumcc, \branch, \mim) and \probDirHam (\bjorklund) in sparse digraphs. Horizontal axis: average degree $d$, vertical axis: base $b$ from the running time bound of the form $\Ohstar(b^n)$.}
	\end{center}
\end{figure}

\paragraph{Which algorithm is the best?}
Figure~\ref{fig:plots} compares four algorithms for solving \probATSP and \probDirHam in digraphs of average outdegree $d$ described above:

\begin{itemize}
	\item \enumcc: enumerating cycle covers (Corollary~\ref{cor:bregman-avgdeg}),
	\item \bjorklund: adaptation of Bj\"orklund's bipartite graphs algorithm  (Theorem~\ref{thm:hamilton-avgdeg}),
	\item \branch: branching boosted by enumerating cycle covers (Theorem~\ref{thm:avgdeg}).
	\item \mim: meet in the middle (Theorem~\ref{thm:gebauer_sparse}),
\end{itemize}
% enumerating cycle covers (blue, Corollary~\ref{cor:bregman-avgdeg}), adaptation of Bj\"orklund's algorithm for bipartite graphs (orange, Theorem~\ref{thm:hamilton-avgdeg}), meet in the middle (red, Theorem~\ref{thm:gebauer_sparse}), and branching boosted by enumerating cycle covers (green, Theorem~\ref{thm:avgdeg}).

The choice of the best (in terms of the asymptotic worst-case running time) algorithm depends on $d$, on whether we can afford exponential space, and on whether we solve \probATSP or just \probDirHam. We can conclude the following.

\begin{itemize}
	\item {\bf \probATSP in polynomial space:} for $d<2.746$ use \branch, for $d\in[2.746,8.627]$ use \enumcc, and for $d>8.627$ use the general algorithm of Gurevich and Shelah~\cite{GurevichShelahTSP}.
	\item {\bf \probATSP in exponential space:} for $d<2.398$ use \branch, for $d\in[2.398,3.999]$ use \mim, and for $d>3.999$ use the algorithm of Cygan and Pilipczuk \cite{Cygan:average-degree}.
  \item {\bf \probDirHam in polynomial space:} for $d<2.746$ use \branch, for $d\in[2.746,3.203]$ use \enumcc, for $d>3.203$ use \bjorklund, and for sufficiently large~$d$ use the algorithm of Bj\"orklund \cite{bjorklund2020asymptotically}.
	\item {\bf \probDirHam in exponential space:} for $d<2.398$ use \branch, for $d\in[2.398,3.734]$ use \mim, for $d>3.734$ use \bjorklund, and for sufficiently large~$d$ use the algorithm of Bj\"orklund and Williams \cite{Bjorklund:directed-counting}.
\end{itemize}

\ignore{
TODO

In a digraph, the {\em total degree} of a vertex is the sum of its indegree and outdegree.
We call a digraph subcubic when total degrees of its vertices do not exceed three.
TODO Reference to the NP-completeness proof by Plesn\'{\i}k~\cite{plesnik}.

TODO Cite Andreas' paper~\cite{Bjorklund:sparse-bipartite} and the references inside.
TODO Cite~\cite{BH:parity,Cygan:pathwidth}

\begin{theorem}
\label{thm:avgdeg}
  \probDirHam restricted to digraphs of average outdegree at most $d$
  can be solved in time $\Ohstar(2^{\alpha_1 (d - 1) n})$
  and polynomial space, where $\alpha_1 = \frac{\log_2(27/2)}{3\log_2 6} < 0.4842$.
\end{theorem}

\begin{theorem}
\label{thm:tsp-avgdeg}
  \probATSP restricted to digraphs of average outdegree at most $d$
  can be solved
  \begin{enumerate}
    \item[(a)] in time $\Ohstar(2^{\alpha_2 (d - 1) n + o(n)})$
  and polynomial space, where $\alpha_2 = \frac{2 \log_2(27/2)}{3 \log_2 24} < 0.5460$, or
    \item[(b)] in time $\Ohstar(2^{\alpha_1 (d - 1) n})$
  and exponential space, where $\alpha_1$ is the constant from Theorem~\ref{thm:avgdeg}.
  \end{enumerate}
\end{theorem}

% \begin{theorem}
% 	\label{thm:2-regular}
% 	\probATSP restricted to $2$-regular digraphs
% 	can be solved in time $\Ohstar(2^{n/3})$
% 	and polynomial space.
% \end{theorem}
%
% menion matroid parity here.
%
% \begin{theorem}
% 	\label{thm:subcubic}
% 	\probATSP restricted to subcubic digraphs
% 	can be solved in time $\Ohstar(2^{n/6})$
% 	and polynomial space.
% \end{theorem}
%
% This can be compared with the algorithm of Cygan, Kratsch and Nederlof [JACM 2018] which solves \probUndirHam in subcubic graphs in time $\Ohstar((1+\sqrt{2})^{n/6})=O(1.1583^n)$ and exponential space, while all our algorithms use only polynomial space.

\begin{table}
\centering
\begin{tabular}{l | l l l l l}
    $d$           & 2  & 2.5    & 3  & 3.5    & 4 \\ \hline
    $\tau(d)$     & 2  & 2.450  & 3  & 3.465  & 4
\end{tabular}
\caption{$\tau(d)$}
\end{table}

}

\section{Reductions from undirected graphs}
\label{sec:reductions}
The objective of this section is to recall two reductions
from the~\probATSP to the~(forced) \probTSP.
Then, we will discuss existing methods of solving \probUndirHam and \probTSP,
and present their implications for corresponding problems
in directed graphs.
The summary of this section is presented in Tables~\ref{table:results-undirected} and~\ref{table:results-directed}.

\subsection{General reductions}

We recall that in the \probForcedTSPFull \cite{Eppstein:cubic, Rubin:forcedTSP, XiaoNagamochi:deg3, XiaoNagamochi:deg4},
we are given an~undirected graph~$G$, a~weight function $w : E(G) \to \Z$,
and a~subset $F \subseteq E(G)$. We say that a~Hamiltonian cycle~$H$ is \emph{admissible},
if~$F \subseteq H$.
The goal is to find an~admissible Hamiltonian cycle of the minimum total weight of the edges
(or, report that there is no such cycle).
Moreover, we define the \probBipartTSP (\probBFMTSP) as a~special case of the \probForcedTSP,
where graph $G$ is bipartite, and the edges of $F$ form a~perfect matching in~$G$.

The following lemma provides the relationship between the \probBFMTSP
and the \probATSP.

\begin{lemma}
\label{lem:reduce}
  For every instance $(G, w)$ of \probATSP,
  where $G$ is a~digraph on $n$~vertices,
  there is an equivalent instance $(\widehat{G}, \widehat{w}, M)$ of \probBFMTSP
  such that $\widehat{G}$~is a~graph on $2n$~vertices.
  
  Moreover, if both outdegrees and indegrees of $G$ are bounded by $D$, then $\widehat{G}$ has maximum degree $D$. Similarly, if $G$ has average outdegree $d$, then $\widehat{G}$ has average degree $d+1$.
\end{lemma}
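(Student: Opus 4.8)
The plan is to take the construction already sketched in the introduction and verify the three required properties: equivalence of the two instances, the bipartite-plus-forced-matching structure of $\widehat{G}$, and the degree bounds. Concretely, I set $V(\widehat{G}) = \{v^\outt, v^\inn \mid v \in V(G)\}$, so that $|V(\widehat{G})| = 2n$; for each $v$ I add the edge $v^\outt v^\inn$ of weight $0$, and for each arc $(u,v)\in E(G)$ I add the edge $u^\outt v^\inn$ with $\widehat{w}(u^\outt v^\inn) = w(u,v)$; finally I put $M = \{v^\outt v^\inn \mid v\in V(G)\}$. Every edge of $\widehat{G}$ joins an $\outt$-vertex to an $\inn$-vertex, so $\widehat{G}$ is bipartite with sides $\{v^\outt\}$ and $\{v^\inn\}$, and $M$ covers each vertex exactly once and is therefore a perfect matching. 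Hence $(\widehat{G}, \widehat{w}, M)$ is a legitimate \probBFMTSP instance.

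To establish equivalence I would exhibit a weight-preserving bijection between Hamiltonian cycles of $G$ and \emph{admissible} (i.e.\ $M$-containing) Hamiltonian cycles of $\widehat{G}$. In one direction, a Hamiltonian cycle $H = (v_1, v_2, \ldots, v_n, v_1)$ of $G$ maps to the closed walk that alternates a matching edge $v_i^\inn v_i^\outt$ with the edge $v_i^\outt v_{i+1}^\inn$ corresponding to the arc $(v_i,v_{i+1})$ (indices modulo $n$); this visits every vertex of $\widehat{G}$ exactly once, contains all of $M$, and has the same total weight, since matching edges contribute $0$. In the other direction, given an admissible Hamiltonian cycle $\widehat{H}$, I contract the $n$ matching edges of $M$. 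Because $\widehat{H}$ has degree exactly $2$ at every vertex and one of those two edges lies in $M$, each $v^\outt$ has a unique non-matching edge $v^\outt w^\inn$ and each $v^\inn$ a unique non-matching edge $u^\outt v^\inn$; the contraction thus produces, at each original vertex $v$, exactly one outgoing and one incoming arc, of total weight $\widehat{w}(\widehat{H})$. These two maps are mutually inverse, so the optimum values coincide and the instances are equivalent.

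The step needing the most care is the reverse direction: I must argue that contracting $M$ turns the single cycle $\widehat{H}$ into a single directed cycle of $G$, rather than a disjoint union of shorter cycles. This follows because contracting an edge lying on a cycle preserves the property of being a single cycle (it only shortens it), and all edges of $M$ lie on the one cycle $\widehat{H}$; contracting them one by one keeps a single closed cycle throughout, which after identifying $v^\inn$ with $v^\outt$ is exactly a Hamiltonian cycle of $G$, with orientation forced and consistent since each contracted vertex inherits one former $\inn$-edge (an incoming arc) and one former $\outt$-edge (an outgoing arc). The degree bounds are then a direct count: $\deg_{\widehat{G}}(v^\outt) = \outdeg_G(v) + 1$ and $\deg_{\widehat{G}}(v^\inn) = \indeg_G(v) + 1$, since each such vertex carries exactly its matching edge together with one edge per out- (resp.\ in-) neighbour. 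The maximum-degree bound reads off from these two identities, and for the average-degree claim I note that $\widehat{G}$ has $|E(G)| + n = (d+1)n$ edges on $2n$ vertices, giving average degree $2(d+1)n / (2n) = d+1$.
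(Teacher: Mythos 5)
Your construction and equivalence argument coincide with the paper's own proof: the identical bipartite graph $\widehat{G}$ with forced matching $M$, the forward map interleaving matching edges with arc edges, and the reverse map that orients the non-matching edges from $V^\outt$ to $V^\inn$ and contracts $M$. If anything you are more careful than the paper, which asserts without elaboration that contracting $M$ yields a Hamiltonian cycle of $G$ and does not address the ``Moreover'' clauses at all; your single-cycle-under-contraction argument and your edge count $|E(\widehat{G})|=(d+1)n$, giving average degree $d+1$, are both correct.

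The one problem is the maximum-degree clause. Your identities $\deg_{\widehat{G}}(v^\outt)=\outdeg_G(v)+1$ and $\deg_{\widehat{G}}(v^\inn)=\indeg_G(v)+1$ are right, but they yield maximum degree $D+1$, not the $D$ claimed in the statement, so writing that ``the maximum-degree bound reads off from these two identities'' asserts something your own identities contradict: any vertex of outdegree exactly $D$ in $G$ produces a vertex of degree $D+1$ in $\widehat{G}$. The discrepancy lies in the lemma statement itself, which appears to be off by one --- the paper's own later use of the lemma in Corollary~\ref{cor:2-regular} takes $D=2$ and calls $\widehat{G}$ subcubic, i.e.\ of maximum degree $3=D+1$, exactly matching your count and paralleling the ``average degree $d+1$'' clause. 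The honest conclusion of your computation is the bound $D+1$; you should state that explicitly and flag that the literal bound $D$ is unattainable by this construction, rather than claiming it follows from identities that give something else.
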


\begin{proof}
  Let $(G, w)$ be an~instance of~\probATSP. 
  Let $V^\outt=\{v^\outt \mid v\in V(G)\}$ and $V^\inn=\{v^\inn \mid v\in V(G)\}$.
  We define $\widehat{G}$ as a bipartite graph on the vertex set $V(\widehat{G})=V^\outt\cup V^\inn$ with edges $E(\widehat{G})=\{u^\outt v^\inn\mid (u,v)\in E(G)\}\cup M$, where $M$ is the perfect matching $M = \{ v^\inn v^\outt \mid v \in V(G) \}$.
  The edges of~$E(\widehat{G})\setminus M$ inherit the weight from~$G$, i.e. for $(u, v) \in E(G)$ we set
  $\widehat{w}(u^\outt v^\inn) = w(uv)$. Edges of $M$ have weight $0$.
  
  We claim that $(\widehat{G}, \widehat{w}, M)$ is the~desired instance of \probBFMTSP.
  Indeed, $\widehat{G}$~has $2n$~vertices, and given a~Hamiltonian cycle $C := (v_1, \ldots, v_n)$ in~$G$,
  we can construct a~perfect matching~$M' \subseteq E(\widehat{G})$,
  where $M' = \{v_i^\outt v_{i+1}^\inn  \mid i = 1, \ldots, n\}$ (we set ${v_{n+1} := v_1}$). Then, $M \cup M'$ forms
  a~Hamiltonian cycle in $\widehat{G}$ of the same weight as~$C$.
  Conversely, consider a~Hamiltonian cycle~$\widehat{H}$ in~$\widehat{G}$ such that $\widehat{H} = M \cup M'$
  for a~matching~$M'$. Then $M' \subseteq E(\widehat{G}) \setminus M$.
  Hence, after orienting edges of~$M'$ from~$V^\outt$ to~$V^\inn$ and contracting each edge $v^\inn v^\outt \in M$
  to a~single vertex~$v$, we get a~Hamiltonian cycle~$H$ in~$G$ of weight~$\widehat{w}(\widehat{H})$.
\end{proof}

Lemma~\ref{lem:reduce} implies, in particular, that
if there is an~algorithm for \probBFMTSP running in time $\Ohstar(f(n))$,
then there is an~algorithm for \probATSP running in time $\Ohstar(f(2n))$.

When we solve an~\probATSP instance $(G, w)$, in some cases it is easier to work with an~equivalent instance of \probTSP
(without forced edges).

\begin{lemma}
\label{lem:reduce2}
  For every instance $(G, w)$ of \probATSP,
  where $G$ is a~digraph on $n$~vertices,
  there is an~equivalent instance $(\widetilde{G}, \widetilde{w})$ of \probTSP
  such that $\widetilde{G}$~is an~undirected graph on $3n$~vertices.
\end{lemma}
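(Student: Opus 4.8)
The plan is to obtain $(\widetilde{G}, \widetilde{w})$ by composing Lemma~\ref{lem:reduce} with a standard edge-subdivision gadget that turns a forced edge into an unforced one. First I would apply Lemma~\ref{lem:reduce} to $(G,w)$ to get the equivalent \probBFMTSP instance $(\widehat{G}, \widehat{w}, M)$, where $\widehat{G}$ is bipartite on $2n$ vertices and $M=\{v^\inn v^\outt \mid v\in V(G)\}$ is the matching we must force. The only thing left is to replace the ``forced'' constraint by a structural one, so that an ordinary (unconstrained) Hamiltonian cycle is automatically compelled to use all of $M$.

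The gadget is to subdivide each of the $n$ matching edges once. Formally, for every $v\in V(G)$ I introduce a fresh vertex $v^\midd$, delete the edge $v^\inn v^\outt$, and add the two edges $v^\inn v^\midd$ and $v^\midd v^\outt$. Let $\widetilde{G}$ be the resulting graph; it has $2n + n = 3n$ vertices, as required. I keep $\widetilde{w}$ equal to $\widehat{w}$ on all edges inherited from $\widehat{G}\setminus M$, and I set $\widetilde{w}(v^\inn v^\midd) = \widetilde{w}(v^\midd v^\outt) = 0$ so that each subdivided path has the same total weight ($0$) as the matching edge it replaces.

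The correctness rests on one simple observation: each middle vertex $v^\midd$ has degree exactly $2$ in $\widetilde{G}$ (its only neighbours are $v^\inn$ and $v^\outt$), so any Hamiltonian cycle must use both edges incident to it. Consequently every Hamiltonian cycle of $\widetilde{G}$ traverses the path $v^\inn v^\midd v^\outt$ for all $v$, and contracting these paths back to single edges yields a Hamiltonian cycle of $\widehat{G}$ that contains $M$, i.e.\ an admissible cycle; conversely any admissible Hamiltonian cycle of $\widehat{G}$ lifts, by replacing each $M$-edge with its subdivided path, to a Hamiltonian cycle of $\widetilde{G}$. Because the subdivided paths carry the same weight as the edges they replace, this correspondence is weight-preserving and bijective, so a minimum-weight Hamiltonian cycle in $\widetilde{G}$ gives a minimum-weight admissible Hamiltonian cycle in $\widehat{G}$, which by Lemma~\ref{lem:reduce} corresponds to a minimum-weight Hamiltonian cycle in $G$ of the same weight.

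The step requiring the most care is verifying the degree-$2$ claim and the resulting equivalence precisely: I must confirm that no edges other than $v^\inn v^\midd$ and $v^\midd v^\outt$ are incident to $v^\midd$ (which holds since the subdivision only splits a single matching edge and the original endpoints $v^\inn, v^\outt$ keep all their other incidences from $\widehat{G}$), and that the lift/contract maps are mutually inverse and preserve Hamiltonicity. Everything else is routine bookkeeping of weights and the vertex count.
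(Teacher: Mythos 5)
Your proposal is correct and is essentially identical to the paper's own proof: both apply Lemma~\ref{lem:reduce} and then subdivide each matching edge $v^\inn v^\outt$ into a weight-$0$ path $(v^\inn, v^\midd, v^\outt)$, with the degree-$2$ middle vertices forcing any Hamiltonian cycle to traverse these paths. The paper states the resulting correspondence more tersely, while you spell out the degree-$2$ argument explicitly, but the construction and reasoning are the same.
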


\begin{proof}
This is a classic result.
Given an~instance $(G, w)$ of \probATSP,
we start with constructing an~equivalent instance $(\widehat{G}, \widehat{w}, M)$
of \probBFMTSP by applying Lemma~\ref{lem:reduce}.
Then, we substitute in $\widehat{G}$
every edge $v^\inn v^\outt \in M$ with a simple path of length~$2$: $(v^\inn, v^\midd, v^\outt)$,
where new edges $v^\inn v^\midd$ and~$v^\midd v^\outt$ have weight~$0$.
We see that $\widetilde{G}$~has $3n$~vertices, and every Hamiltonian cycle~$\widetilde{H}$ in~$\widetilde{G}$
corresponds to a~Hamiltonian cycle~$\widehat{H}$ in~$\widehat{G}$ such that
$M \subseteq \widehat{H}$, and $\widehat{w}(\widehat{H}) = \widetilde{w}(\widetilde{H})$.
\end{proof}

\subsection{Enumerating cycle covers}
\label{subsec:reductions-bregman}

Let $(\widehat{G}, \widehat{w}, M)$ be an~instance of \probBFMTSP,
and let $\mathcal{M}$ be a~family of all perfect matchings in $\widehat{G} - M$.
We observe that every cycle cover in $\widehat{G}$ which contains all edges of $M$
is of the form $M \cup M'$, where $M' \in \mathcal{M}$.
Hence, our goal is to find a~matching $M' \in \mathcal{M}$ such that $M \cup M'$ is a~Hamiltonian cycle in~$\widehat{G}$,
and the weight of $M'$ is minimum possible.
One way to do it is to list all the~perfect matchings $M' \in \mathcal{M}$, and choose the best one
among these which form with~$M$ a~Hamiltonian cycle in~$\widehat{G}$.
We will investigate the complexity of such an~approach in sparse graphs.

It is known that all perfect matchings in bipartite graph $\widehat{G}$ can be listed
in time $|\mathcal{M}| n^{\Oh(1)}$ and polynomial space~\cite{Fukuda:matchings}.
Hence, it is enough to provide a~bound on the~size of $\mathcal{M}$ in sparse graphs.
We start with recalling a~classic result of Bregman together with its standard application.

\begin{theorem}[Bregman-Minc inequality~\cite{Bregman:permanent, Schrijver:permanent}]
  Let $A$ be an~$n \times n$ binary matrix, and let $r_i$ denote the number
  of ones in the $i$-th row. Then
  \[
    \per A \leq \prod_{i=1}^n (r_i!)^{1/r_i}.
  \]
\end{theorem}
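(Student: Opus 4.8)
The plan is to use the information-theoretic (entropy) proof, which I find the cleanest among the several known approaches. Since $A$ is binary, $\per A$ counts the permutations $\sigma$ of $\{1,\ldots,n\}$ with $A_{i,\sigma(i)}=1$ for all $i$; call these \emph{valid}, and note they correspond to the perfect matchings of the bipartite graph whose biadjacency matrix is $A$. First I would let $\sigma$ be a uniformly random valid permutation, so that its Shannon entropy (in bits) satisfies $H(\sigma)=\log_2\per A$. This reduces the whole statement to proving the upper bound $H(\sigma)\le\sum_i \log_2\big((r_i!)^{1/r_i}\big)$.

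The core idea is to reveal the coordinates $\sigma(1),\ldots,\sigma(n)$ one at a time, but in a \emph{random} order. Concretely, I would draw an independent uniformly random ordering $\tau$ of the rows and apply the chain rule for entropy along $\tau$: for each fixed $\tau$ we have $H(\sigma\mid\tau)=\sum_k H\big(\sigma(\tau_k)\mid \sigma(\tau_1),\ldots,\sigma(\tau_{k-1}),\tau\big)$, while independence of $\sigma$ and $\tau$ gives $H(\sigma\mid\tau)=H(\sigma)$. Bounding each conditional entropy by the base-$2$ logarithm of its support size — given the already-revealed values, $\sigma(\tau_k)$ can only be a column of row $\tau_k$ not yet occupied — and reorganizing the sum by rows yields $H(\sigma)\le \sum_i \mathbb{E}_{\tau,\sigma}\big[\log_2 k_i\big]$, where $k_i$ is the number of columns of row $i$ still free at the moment row $i$ is revealed.

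The key combinatorial step, which I expect to be the main obstacle to state cleanly, is showing that for every fixed valid $\sigma$ the quantity $k_i$ is uniformly distributed on $\{1,\ldots,r_i\}$ as $\tau$ varies. Fix $i$ and let $N(i)$ be the $r_i$ columns where row $i$ has a one; the rows $\sigma^{-1}(c)$ for $c\in N(i)$ form a set $R$ of size exactly $r_i$ that contains $i$ itself, since $\sigma(i)\in N(i)$. A column $c\in N(i)$ is free when row $i$ is revealed precisely when its owner $\sigma^{-1}(c)$ has not been revealed earlier, so $k_i$ equals the number of rows of $R$ that appear at or after $i$ in $\tau$. As $\tau$ restricted to $R$ is a uniformly random ordering, the rank of $i$ within $R$ is uniform on $\{1,\ldots,r_i\}$, and hence so is $k_i$.

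Combining these observations, for each fixed $\sigma$ we get $\mathbb{E}_\tau[\log_2 k_i]=\tfrac{1}{r_i}\sum_{j=1}^{r_i}\log_2 j=\log_2\big((r_i!)^{1/r_i}\big)$, which is independent of $\sigma$ and therefore survives the outer averaging over $\sigma$. Summing over $i$ gives $H(\sigma)\le\log_2\prod_i (r_i!)^{1/r_i}$, and exponentiating base $2$ yields $\per A=2^{H(\sigma)}\le\prod_i (r_i!)^{1/r_i}$, as claimed. As an alternative I could instead follow Schrijver's short inductive proof, which replaces entropy by a convexity argument on an auxiliary product over the columns; but the entropy route has the advantage of localizing the entire difficulty into the single uniform-distribution fact above.
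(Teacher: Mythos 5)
The paper does not actually prove this statement: the Bregman--Minc inequality is quoted as a black-box classical result, with the bracketed citations pointing to Bregman's original argument and to Schrijver's short proof, and it is only ever used downstream through Corollary~\ref{cor:bregman-matchings}. So there is no in-paper proof to compare against; the question is whether your argument stands on its own, and it does. What you have reconstructed is Radhakrishnan's entropy proof of Bregman's theorem, and the steps are sound: $H(\sigma)=\log_2 \per A$ for a uniform valid permutation $\sigma$, the chain rule applied along an independent uniformly random row ordering $\tau$, the bound of each conditional entropy by the logarithm of the number of still-free columns, and the key lemma that for each fixed valid $\sigma$ the count $k_i$ is uniform on $\{1,\ldots,r_i\}$ --- which holds exactly as you argue, since $k_i = r_i - j + 1$ where $j$ is the rank of $i$ inside $R=\sigma^{-1}(N(i))$, a set of size $r_i$ containing $i$, and that rank is uniform under $\tau$. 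Two small caveats are worth stating explicitly: the argument presupposes $\per A\ge 1$ (if $\per A=0$ the inequality is vacuous, and this also disposes of rows with $r_i=0$, for which the right-hand side is otherwise ill-defined); and the conditional support of $\sigma(\tau_k)$ given the revealed values may be strictly smaller than the set of free columns, which is harmless since you only need an upper bound on the support size. Compared with Schrijver's proof (the one cited in the theorem header, an induction combined with a convexity estimate), your route replaces the algebraic manipulation by the single probabilistic uniformity lemma; both are complete, and the entropy version is arguably the more transparent one for a reader comfortable with the chain rule, while Schrijver's is the more elementary in its prerequisites.
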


\begin{corollary}[\cite{ProofsBook}]
\label{cor:bregman-matchings}
  Let $H$ be a~bipartite digraph on $V^\outt \cup V^\inn$, where $|V^\outt| = |V^\inn| = n$,
  and let $d_1, \ldots, d_n$ denote the degrees of vertices of $V^\outt$.
  Then, the number of perfect matchings in $H$ can be bounded by
  \[
    \prod_{i=1}^n (d_i!)^{1/d_i}.
  \]
\end{corollary}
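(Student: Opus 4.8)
The plan is to reduce the counting of perfect matchings to computing a permanent and then invoke the Bregman-Minc inequality as a black box. The only genuine content is the standard bijection between perfect matchings of a bipartite graph and the nonzero terms of the permanent of its biadjacency matrix; there is no real obstacle beyond setting this up cleanly.

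First I would fix arbitrary orderings $V^\outt = \{u_1, \ldots, u_n\}$ and $V^\inn = \{w_1, \ldots, w_n\}$ and define the $n \times n$ binary matrix $A$ by $A_{ij} = 1$ if $u_i w_j \in E(H)$ and $A_{ij} = 0$ otherwise. By this definition the number of ones in the $i$-th row of $A$ is precisely the degree $d_i$ of the vertex $u_i \in V^\outt$, so the row sums $r_i$ appearing in the Bregman-Minc inequality coincide with the degrees $d_i$.

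Next I would establish the counting identity. Since $|V^\outt| = |V^\inn| = n$, every perfect matching of $H$ assigns to each $u_i$ a distinct partner $w_{\sigma(i)}$, and is therefore encoded by a permutation $\sigma \in S_n$; conversely $\sigma$ encodes a genuine perfect matching exactly when $u_i w_{\sigma(i)} \in E(H)$ for all $i$, i.e.\ when $\prod_{i=1}^n A_{i,\sigma(i)} = 1$. Hence the number of perfect matchings equals
\[
  \sum_{\sigma \in S_n} \prod_{i=1}^n A_{i,\sigma(i)} = \per A.
\]

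Finally I would simply apply the Bregman-Minc inequality to $A$: since its $i$-th row sum is $r_i = d_i$, the theorem yields
\[
  \per A \leq \prod_{i=1}^n (r_i!)^{1/r_i} = \prod_{i=1}^n (d_i!)^{1/d_i},
\]
which is the claimed bound. The main care is to make sure the matrix is square (guaranteed by the balanced bipartition $|V^\outt| = |V^\inn| = n$) so that the permanent is defined and the matching-permutation correspondence is exact, and to note explicitly that it is the degrees of the \emph{$V^\outt$-side} vertices that govern the row sums.
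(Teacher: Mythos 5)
Your proof is correct and is exactly the intended argument: the paper states this as an immediate corollary of the Bregman--Minc inequality (citing the standard reference rather than writing it out), and the standard reduction you give --- perfect matchings of $H$ correspond to nonzero permutation terms of the biadjacency matrix, whose $i$-th row sum equals $d_i$, so $\per A \leq \prod_{i=1}^n (d_i!)^{1/d_i}$ --- is precisely that derivation. Nothing is missing.
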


\begin{corollary}
\label{cor:bregman-bounded}
  \probATSP restricted to digraphs of outdegree bounded by $D$
  can be solved in time $(D!)^{n/D} n^{\Oh(1)}$ and polynomial space.
\end{corollary}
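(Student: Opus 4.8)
The plan is to assemble the corollary from three ingredients already prepared above: the reduction of Lemma~\ref{lem:reduce}, the cycle-cover enumeration scheme of this subsection, and the Bregman-type bound of Corollary~\ref{cor:bregman-matchings}. Starting from an instance $(G,w)$ of \probATSP whose digraph $G$ has outdegree bounded by $D$, I would first apply Lemma~\ref{lem:reduce} to obtain an equivalent \probBFMTSP instance $(\widehat{G},\widehat{w},M)$ on $2n$ vertices. The key observation is that $\widehat{G}-M$ is precisely the bipartite graph with edge set $\{u^\outt v^\inn \mid (u,v)\in E(G)\}$, so the degree of each vertex $u^\outt$ in $\widehat{G}-M$ equals the outdegree of $u$ in $G$, and is therefore at most $D$. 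Note that we never need to control the degrees on the $V^\inn$ side, which is exactly why a bound on the outdegree alone suffices.

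Next I would invoke the enumeration strategy: every cycle cover of $\widehat{G}$ containing $M$ has the form $M\cup M'$ for a perfect matching $M'$ of $\widehat{G}-M$, so it is enough to list the family $\mathcal{M}$ of all perfect matchings of $\widehat{G}-M$, test for each whether $M\cup M'$ forms a single Hamiltonian cycle (filtering out the disconnected covers), and keep the admissible matching of minimum weight. By the matching-enumeration algorithm of~\cite{Fukuda:matchings} this listing runs in time $|\mathcal{M}|\,n^{\Oh(1)}$ and polynomial space, while the Hamiltonicity test and the weight bookkeeping are clearly polynomial in both resources.

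It then remains only to bound $|\mathcal{M}|$. Applying Corollary~\ref{cor:bregman-matchings} to $H=\widehat{G}-M$, with $d_1,\dots,d_n$ denoting the degrees of the vertices of $V^\outt$, gives $|\mathcal{M}|\le\prod_{i=1}^n (d_i!)^{1/d_i}$. Since each $d_i\le D$ and the function $x\mapsto (x!)^{1/x}$ is non-decreasing over the positive integers, every factor is at most $(D!)^{1/D}$, whence $|\mathcal{M}|\le (D!)^{n/D}$. Substituting this into the running time of the enumeration step yields the claimed bound $(D!)^{n/D}\, n^{\Oh(1)}$ in polynomial space.

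I do not anticipate a genuine obstacle, as the statement is essentially a routine combination of the preceding results. The only point deserving an explicit line is the monotonicity of $(x!)^{1/x}$, which reduces to the elementary inequality $n!\le (n+1)^n$ (raise $(n!)^{1/n}\le((n+1)!)^{1/(n+1)}$ to the power $n(n+1)$ and cancel); this holds because every factor in $n!=\prod_{k=1}^n k$ is at most $n+1$. Everything else is a direct substitution of the degree bound into Corollary~\ref{cor:bregman-matchings}.
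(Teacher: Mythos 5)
Your proof is correct and follows essentially the same route as the paper: reduce via Lemma~\ref{lem:reduce}, enumerate the perfect matchings of $\widehat{G}-M$ using the cycle-cover scheme of the subsection, and bound their number by Corollary~\ref{cor:bregman-matchings}. Your only addition is the explicit verification that $(x!)^{1/x}$ is non-decreasing (which the paper uses implicitly when passing from $\prod_i (d_i!)^{1/d_i}$ to $(D!)^{n/D}$), a worthwhile but minor elaboration.
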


\begin{proof}
  Given an~instance $(G, w)$ of \probATSP, we use Lemma~\ref{lem:reduce} to obtain an~equivalent
  instance $(\widehat{G}, \widehat{w}, M)$ of \probBFMTSP. Then, $H := \widehat{G} - M$ is a~bipartite
  graph on $V^\outt \cup V^\inn$, and all vertices of $V^\outt$ in $H$ have degree at most~$D$.
  By Corollary~\ref{cor:bregman-matchings}, there are at most $(D!)^{n/D}$ perfect matchings in~$H$.
  Hence, according to our initial observation, the~instance $(\widehat{G}, \widehat{w}, M)$
  can be solved in time $(D!)^{n/D} n^{\Oh(1)}$.
\end{proof}

To the best of our knowledge, Corollary~\ref{cor:bregman-bounded} provides
the fastest polynomial space algorithm for $D \in \{3, 4, \ldots, 8\}$.
The Bregman-Minc inequality is also useful for digraphs with bounded average
outdegree.
First, we need to quote an analytic lemma.

\begin{lemma}[\cite{AlonSpencer}]
\label{lem:bregman-inequality}
  For a~function $f(d) := (d!)^{1/d}$, and numbers $d_1, d_2 \in \NN$, where $d_1 < d_2$,
  the following inequality holds:
  \[
    f(d_1) f(d_2) \leq f(d_1 + 1) f(d_2 - 1).
  \]
\end{lemma}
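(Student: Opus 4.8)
The plan is to pass to logarithms and show that the sequence $g(d) := \ln f(d) = \tfrac{1}{d}\ln(d!)$ is (discretely) concave, i.e.\ that its successive differences $g(d+1)-g(d)$ are non-increasing in $d$. This is exactly what we need: dividing the claimed inequality $f(d_1)f(d_2)\le f(d_1+1)f(d_2-1)$ by the positive quantity $f(d_1)f(d_2-1)$ rewrites it as $\rho(d_2-1)\le\rho(d_1)$, where $\rho(d):=f(d+1)/f(d)$. Since $d_1<d_2$ forces $d_1\le d_2-1$, the whole family of inequalities (over all pairs $d_1<d_2$) is equivalent to the single assertion that $\rho$ is non-increasing, and by telescoping this in turn reduces to the one-step inequality $\rho(d)\le\rho(d-1)$, that is, $f(d-1)f(d+1)\le f(d)^2$ for every integer $d\ge 2$.

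So the real content is the one-step (midpoint-concavity) estimate $f(d-1)f(d+1)\le f(d)^2$. I would prove it by taking logarithms, substituting $\ln(d!)=\sum_{k=1}^d\ln k$, and clearing denominators by multiplying through by $d(d-1)(d+1)>0$. A short computation should turn the inequality into the polynomial-logarithmic form
\[
F(d)\ :=\ d(d+1)\ln d - d(d-1)\ln(d+1) - 2\ln(d!)\ \ge\ 0 .
\]
The reason I expect this to be tractable is that $g(d)$ is the arithmetic mean of the \emph{concave} increasing sequence $\ln 1,\dots,\ln d$, and averaging a concave sequence ought to preserve concavity; the quantity $F(d)$ is precisely the (scaled) second difference of $g$.

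To establish $F(d)\ge 0$ I would use induction on $d$ rather than a direct estimate. The base case $F(1)=0$ is immediate (every term vanishes). For the inductive step I expect the difference $F(d+1)-F(d)$ to simplify dramatically: after collecting the coefficients of $\ln d$, $\ln(d+1)$, $\ln(d+2)$ (the factorial terms contribute only $-2\ln(d+1)$), it should collapse to
\[
F(d+1)-F(d)\ =\ d(d+1)\,\ln\!\frac{(d+1)^2}{d(d+2)}\ =\ d(d+1)\,\ln\!\Bigl(1+\tfrac{1}{d(d+2)}\Bigr),
\]
which is manifestly positive. Hence $F$ is increasing from $F(1)=0$, giving $F(d)\ge 0$ for all $d\ge 1$ and completing the argument.

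I expect the only real obstacle to be bookkeeping: verifying that the coefficients of the three logarithms in $F(d+1)-F(d)$ combine into the common factor $d(d+1)$ and that the residual logarithm has an argument exceeding $1$. Everything else---the reduction to the one-step case, the base case, and the final monotonicity conclusion---is structural and carries no surprises; in particular there is no need to invoke Stirling's formula or any property of the Gamma function.
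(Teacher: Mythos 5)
Your proof is correct, and every computational claim in it checks out. Dividing by $f(d_1)f(d_2-1)$ does rewrite the lemma as monotonicity of $\rho(d)=f(d+1)/f(d)$, which (since $d_1\le d_2-1$) telescopes to the single inequality $f(d-1)f(d+1)\le f(d)^2$; multiplying the logarithmic form of that inequality by $d(d-1)(d+1)$ and substituting $\ln((d-1)!)=\ln(d!)-\ln d$ and $\ln((d+1)!)=\ln(d!)+\ln(d+1)$ collapses it exactly to $F(d):=d(d+1)\ln d-d(d-1)\ln(d+1)-2\ln(d!)\ge 0$; and the bookkeeping you were worried about in the inductive step works out precisely as you predicted: the coefficient of $\ln(d+1)$ in $F(d+1)-F(d)$ is $(d+1)(d+2)+d(d-1)-2=2d(d+1)$, giving $F(d+1)-F(d)=d(d+1)\ln\bigl((d+1)^2/(d(d+2))\bigr)>0$, while $F(1)=0$. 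One remark about the point of comparison: the paper does not prove this lemma at all --- it is stated with a citation to Alon and Spencer's book --- so there is no in-paper argument to match; your write-up supplies what the citation hides, namely a self-contained and purely elementary proof (nonnegativity of the second difference of $\ln(d!)/d$ plus telescoping), with no Stirling-type estimates or Gamma-function facts. Two minor polish points: first, since the identities you anticipate are exactly the ones that hold, you can state them affirmatively rather than conditionally; second, the multiplication by $d(d-1)(d+1)$ is a valid equivalence only for $d\ge 2$, which is harmless because the one-step inequality is only needed for $d\ge 2$ (indeed $f(0)$ is undefined), and your induction from $F(1)=0$ establishes $F(d)\ge 0$ for all $d\ge 1$ regardless.
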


\begin{corollary}
\label{cor:bregman-avgdeg}
  \probATSP restricted to digraphs of average outdegree $d$
  can be solved in time $\mu(d)^n n^{\Oh(1)}$ and polynomial space, where
  \[
    \mu(d) = (\lfloor d \rfloor !)^{\frac{\lfloor d \rfloor + 1 - d}{\lfloor d \rfloor}}
    (\lceil d \rceil !)^{\frac{d - \lfloor d \rfloor}{\lceil d \rceil}}
  \]
  In particular, for integral values of $d$, the running time is bounded by $(d!)^{n/d} n^{\Oh(1)}$.
\end{corollary}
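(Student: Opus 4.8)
The plan is to instantiate the enumeration strategy described at the start of this subsection and then reduce the analysis to a purely combinatorial optimisation. First I would apply Lemma~\ref{lem:reduce} to the input instance $(G,w)$ of \probATSP, obtaining an equivalent instance $(\widehat{G},\widehat{w},M)$ of \probBFMTSP, and set $H:=\widehat{G}-M$. By construction the degree of a vertex $v^\outt\in V^\outt$ in $H$ equals the outdegree of $v$ in $G$; call these degrees $d_1,\dots,d_n$. Enumerating every perfect matching $M'\in\mathcal{M}$ of $H$ and returning the cheapest one for which $M\cup M'$ is a single Hamiltonian cycle solves the instance, and by the polynomial-space matching-enumeration procedure recalled above this runs in time $|\mathcal{M}|\,n^{\Oh(1)}$ and polynomial space. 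By Corollary~\ref{cor:bregman-matchings} we have $|\mathcal{M}|\le\prod_{i=1}^n f(d_i)$ where $f(t):=(t!)^{1/t}$, so it suffices to show $\prod_{i=1}^n f(d_i)\le\mu(d)^n$. We may assume $d\ge 1$ and each $d_i\ge 1$, since otherwise $G$ has fewer than $n$ edges, or a vertex of outdegree $0$, and trivially admits no Hamiltonian cycle.

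Since the average outdegree is $d$ we have $\sum_{i=1}^n d_i=dn$, and this quantity is an integer as it counts the edges of $G$. I would thus bound $\prod_{i=1}^n f(d_i)$ by the maximum of $\prod_{i=1}^n f(x_i)$ over all integer vectors $(x_1,\dots,x_n)$ with each $x_i\ge 1$ and $\sum_i x_i=dn$. Lemma~\ref{lem:bregman-inequality} drives a standard smoothing argument: whenever two coordinates satisfy $x_i\le x_j-2$, replacing the pair $(x_i,x_j)$ by $(x_i+1,x_j-1)$ leaves the sum unchanged and does not decrease the product. Each such step strictly decreases the potential $\sum_i x_i^2$, so the process terminates, and at a maximiser no two coordinates differ by more than $1$; that is, every $x_i$ lies in $\{\floor{d},\ceil{d}\}$. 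Note that the operation never produces a coordinate below $1$, since it only lowers a coordinate of value at least $x_i+2\ge 3$.

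For such an extremal vector the number $a$ of coordinates equal to $\ceil{d}$ is forced by the sum constraint: with the remaining $n-a$ coordinates equal to $\floor{d}$, solving $a\ceil{d}+(n-a)\floor{d}=dn$ gives $a=n(d-\floor{d})$ and $n-a=n(\floor{d}+1-d)$, both integers because $dn$ is. Substituting into $\prod_i f(x_i)$ and using $f(t)=(t!)^{1/t}$, the product becomes $(\floor{d}!)^{\,n(\floor{d}+1-d)/\floor{d}}(\ceil{d}!)^{\,n(d-\floor{d})/\ceil{d}}=\mu(d)^n$, exactly the claimed bound; when $d$ is integral, $\floor{d}=\ceil{d}=d$ and this collapses to $(d!)^{n/d}$. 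Combining with the time and space estimate of the first paragraph completes the proof.

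The step I expect to be the main obstacle is making the smoothing argument watertight, rather than the closing arithmetic. One must verify that Lemma~\ref{lem:bregman-inequality} applies legally at every step (which is why keeping all coordinates at least $1$ matters, so that $f$ is well defined and positive), that the monovariant $\sum_i x_i^2$ indeed forces termination at a configuration with all coordinates in $\{\floor{d},\ceil{d}\}$, and that the count of $\ceil{d}$-coordinates there is uniquely determined by the sum. The final substitution yielding $\mu(d)^n$ is then routine algebra.
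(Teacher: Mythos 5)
Your proposal is correct and follows essentially the same route as the paper: reduce via Lemma~\ref{lem:reduce}, enumerate the perfect matchings of $\widehat{G}-M$, bound their number by $\prod_i (d_i!)^{1/d_i}$ using Corollary~\ref{cor:bregman-matchings}, and then use Lemma~\ref{lem:bregman-inequality} to push the degree sequence to values in $\{\floor{d},\ceil{d}\}$ and count how many of each there must be. The only difference is that you spell out details the paper leaves implicit (the smoothing monovariant $\sum_i x_i^2$ and the exclusion of outdegree-$0$ vertices), which strengthens rather than changes the argument.
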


\begin{proof}
  As before, we start by constructing an~equivalent instance $(\widehat{G}, \widehat{w}, M)$ of \probBFMTSP.
  Let $d_1, \ldots, d_n$ denote the degrees of vertices of $V^\outt$ in $\widehat{G} - M$.
  Note that their average is equal to $d$.
  By Corollary~\ref{cor:bregman-matchings}, $\widehat{G} - M$ has at most $\prod_{i=1}^n (d_i!)^{1/d_i}$
  perfect matchings.
  Lemma~\ref{lem:bregman-inequality} implies that this value is maximized if for all $i$,
  we have $d_i = \floor{d}$, or~$d_i = \ceil{d}$.
  Then, we claim that there must be $(\floor{d} + 1 - d)n$ vertices of degree~$\floor{d}$
  and $(d - \floor{d})n$ vertices of degree~$\ceil{d}$.
  Indeed, this is true for $d\in\mathbb{N}$, and for $d\not\in\mathbb{N}$, if we denote the number of vertices of degree~$\ceil{d}$ by $\gamma$, then we have $\gamma\ceil{d}+(n-\gamma)\floor{d}=nd$, and hence
  $\gamma=\gamma(\ceil{d}-\floor{d})=n(d-\floor{d})$.
  It follows that there are at most
  $\mu(d)^n$ perfect matchings in $\widehat{G} - M$.
\end{proof}

\subsection{Branching algorithms}
\label{subsec:reductions-branching}

One of the most common techniques which is used for solving NP-hard problems
in sparse graphs is branching (bounded search trees). It is based on optimizing exhaustive search algorithms
by bounding the size of the recursion tree.
In case of \probTSP, the first result of this kind
is due to Eppstein \cite{Eppstein:cubic}.
He showed a~branching algorithm for subcubic graphs running in time $\Ohstar(2^{n/3})$.
Actually, he proved a stronger result in his work.

\begin{theorem}[\cite{Eppstein:cubic}]
\label{thm:eppstein}
  \probForcedTSP restricted to subcubic graphs can be solved
  in time $2^{(n - |F|)/3} n^{\Oh(1)}$ and polynomial space.
\end{theorem}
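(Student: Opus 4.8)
The plan is to design a \emph{bounded-search-tree} (branching) algorithm that maintains a partition of the edges of the current subcubic graph into three classes: \emph{forced} (already committed to the cycle, forming the set $F$), \emph{deleted} (forbidden from the cycle), and \emph{free}. Throughout, the invariant is that we seek a minimum-weight Hamiltonian cycle $H$ with $F \subseteq H$ that avoids all deleted edges. Since every vertex of a Hamiltonian cycle has degree exactly $2$ and the graph is subcubic, at each vertex we must eventually commit to exactly two of its (at most three) incident edges. The natural progress measure is $\mu = n - |F|$: a leaf of the search tree is reached precisely when $F$ is a $2$-factor (so $|F| = n$ and $\mu = 0$), at which point we test in polynomial time whether $F$ is a single Hamiltonian cycle and, if so, record its weight. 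I would run the recursion depth-first, so that only the current root-to-node path is stored; this gives polynomial space, and since each node performs $n^{\Oh(1)}$ work the total running time is the number of leaves times $n^{\Oh(1)}$.

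Before branching, I would exhaustively apply polynomial-time \emph{reduction rules} that never increase $\mu$: (i) reject the branch if some vertex has more than two forced edges, fewer than two non-deleted edges, or if $F$ already contains a cycle shorter than $n$; (ii) if a vertex has exactly two non-deleted incident edges, move both into $F$, as they are forced; (iii) if a vertex already has two forced edges, delete its third incident edge; and (iv) to prevent the creation of a short cycle, whenever the two endpoints of a maximal forced path are joined by a free edge and the path is not yet spanning, delete that edge. Rules (ii)--(iv) only delete free edges or add edges to $F$, so $\mu$ is non-increasing, and a deletion can only trigger further forcings through rule~(ii). When none of these rules applies and $\mu > 0$, there must remain a vertex $v$ of degree $3$; I would \emph{branch} on a free edge $e = vu$ incident to such a $v$, selecting $v$ (and $e$) so that $v$ is incident to a forced edge whenever possible, and recurse on the two exhaustive cases $e \in H$ (add $e$ to $F$) and $e \notin H$ (delete $e$).

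The heart of the analysis is to show that this branching yields the recurrence $T(\mu) \le 2\,T(\mu - 3) + n^{\Oh(1)}$, which solves to $T(\mu) = 2^{\mu/3} n^{\Oh(1)} = 2^{(n-|F|)/3}\, n^{\Oh(1)}$. Concretely, I would argue that in \emph{each} of the two children the cascade of reduction rules commits at least three new edges to $F$: forcing $e$ saturates $v$, so its third edge is deleted by rule~(iii), which lowers the degree of a neighbour and triggers rule~(ii) there; symmetrically, deleting $e$ drops the degree of $v$ to two and of $u$ by one, again triggering forcings. Choosing $v$ adjacent to a forced edge is precisely what supplies the extra forcings needed to reach three per branch, since branching at a vertex with no forced incident edge would commit only a single edge in the ``force'' child.

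The main obstacle, and the bulk of Eppstein's argument, is that the ``at least three forcings'' guarantee fails in a bounded number of degenerate local configurations --- multiple edges, triangles, and $4$-cycles incident to forced edges, together with the very first branchings while $F$ is still empty. For each such configuration one must introduce a dedicated reduction or a tailored branching rule and verify, by a finite case check, that the resulting branching number never exceeds $2^{1/3}$; the tight case is the balanced one in which each child forces exactly three edges, while the other cases are either balanced at a smaller base or are compensated (a child that forces only two edges must be paired with one forcing at least five, so that the branching number stays at or below $2^{1/3}$). Finally, correctness follows because the two branches are exhaustive and every reduction rule preserves each admissible Hamiltonian cycle together with its weight, so the minimum weight recorded over all leaves is exactly the optimum, and the claimed running time of $2^{(n-|F|)/3} n^{\Oh(1)}$ in polynomial space follows.
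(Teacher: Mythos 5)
First, note that the paper itself contains no proof of Theorem~\ref{thm:eppstein}: the statement is imported verbatim from Eppstein~\cite{Eppstein:cubic}, so your attempt can only be measured against Eppstein's original argument. Your sketch reproduces its skeleton faithfully --- forced/deleted/free edges, polynomial-time propagation rules, depth-first traversal for polynomial space, the measure $\mu = n - |F|$, branching on a free edge at a degree-$3$ vertex chosen incident to a forced edge, and the target recurrence $T(\mu) \le 2\,T(\mu-3) + n^{\Oh(1)}$ --- and your count of three newly committed edges per child in the generic case is correct.

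The genuine gap is exactly the part you defer to ``a finite case check,'' and it is not a formality: your compensation principle, that a child forcing only two edges ``must be paired with one forcing at least five,'' is false, and the configuration that breaks it is the crux of Eppstein's proof. Suppose a $4$-cycle $abcd$ becomes a connected component of the unforced graph, i.e.\ the external edges at all of $a,b,c,d$ are forced. Your cascade can create this state in a single exhaustive reduction phase (the last two externals get forced in one sweep), and once it exists each cycle vertex has one forced and two free edges, so in general none of your reduction rules applies; no priority scheme for a ``tailored branching rule'' prevents the state from arising. At that point the only decision left on the cycle is which of its two perfect matchings to route through: \emph{both} children gain exactly two forced edges, a $(2,2)$ branch with branching number $2^{1/2} > 2^{1/3}$, and there is no sibling branch available to compensate. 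Eppstein resolves this not by a local case check but by amortization: his potential is not $n - |F|$ alone but is corrected by the number of such isolated $4$-cycle components, so that the branch which isolates a $4$-cycle (and necessarily commits at least four edges) pays in advance for the later $(2,2)$ resolution. Without some such change of measure --- which a uniform recurrence in $\mu$ cannot express --- the induction does not close; the case analysis you postpone is the bulk of Eppstein's paper, not a verification step, so the proposal is the right approach in outline but incomplete at its central step.
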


\begin{corollary}
	\label{cor:2-regular}
	\probATSP restricted to digraphs with all out- and indegrees at most~$2$
	can be solved in time $\Ohstar(2^{n/3})$
	and polynomial space.
\end{corollary}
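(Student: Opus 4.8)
The plan is to turn the given \probATSP instance into a \probForcedTSP instance on a subcubic graph and then invoke Eppstein's algorithm (Theorem~\ref{thm:eppstein}). Concretely, I would first apply Lemma~\ref{lem:reduce} to the input digraph $G$, whose out- and indegrees are bounded by $2$, obtaining an equivalent instance $(\widehat{G}, \widehat{w}, M)$ of \probBFMTSP. By the lemma, $\widehat{G}$ has $2n$ vertices and $M$ is a perfect matching, hence $|M| = n$. Since \probBFMTSP is by definition a special case of \probForcedTSP, this is in particular a \probForcedTSP instance on $2n$ vertices with forced edge set $F = M$ of size $n$.

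The one point that actually needs checking is that $\widehat{G}$ is subcubic, so that Theorem~\ref{thm:eppstein} is applicable. Each vertex $v^\outt$ of $\widehat{G}$ is incident to its single matching edge together with one non-matching edge per out-neighbour of $v$ in $G$; symmetrically, each $v^\inn$ is incident to its matching edge and one non-matching edge per in-neighbour of $v$. As all out- and indegrees in $G$ are at most $2$, every vertex of $\widehat{G}$ has degree at most $3$. Thus $\widehat{G}$ is subcubic, as required.

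It then remains only to read off the running time. Applying Theorem~\ref{thm:eppstein} to the \probForcedTSP instance on $2n$ vertices with $|F| = |M| = n$ yields running time $2^{(2n - n)/3}(2n)^{\Oh(1)} = \Ohstar(2^{n/3})$ and polynomial space. By the equivalence guaranteed by Lemma~\ref{lem:reduce}, a minimum-weight admissible Hamiltonian cycle in $\widehat{G}$ maps back to a minimum-weight Hamiltonian cycle in $G$ of the same weight (or the emptiness of one certifies the emptiness of the other), which settles the corollary.

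Since every step is a direct appeal to an already established result, there is no genuine obstacle here; the only subtlety worth flagging is the bookkeeping that makes the exponent come out to $n/3$ rather than $2n/3$. The doubling of the vertex count introduced by the reduction of Lemma~\ref{lem:reduce} is exactly offset by the $n$ forced matching edges, which Eppstein's bound absorbs through the $-|F|$ term in its exponent. Getting this cancellation right is the whole content of the proof.
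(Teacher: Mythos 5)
Your proof is correct and follows essentially the same route as the paper: apply Lemma~\ref{lem:reduce} to obtain a subcubic \probBFMTSP instance on $2n$ vertices with $|M|=n$ forced edges, then invoke Theorem~\ref{thm:eppstein} to get $2^{(2n-n)/3}n^{\Oh(1)}=\Ohstar(2^{n/3})$. Your explicit degree count for $\widehat{G}$ (matching edge plus at most two non-matching edges per vertex) is a welcome bit of extra care, since the paper simply cites this from the reduction lemma.
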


\begin{proof}
  Let $(G, w)$ be an~instance of \probATSP, where $G$ is a~digraph with all out- and indegrees at most~$2$.
  We apply Lemma~\ref{lem:reduce} to obtain an equivalent instance
  $(\widehat{G}, \widehat{w}, M)$ of \probBFMTSP.
  We know that $\widehat{G}$ has $2n$ vertices, and is subcubic.
  Moreover, $(\widehat{G}, \widehat{w}, M)$ is an instance
  of \probForcedTSP with $|M| = n$ forced edges.
  Hence, we can use Theorem~\ref{thm:eppstein} to solve it
  in time $\Ohstar(2^{(2n - n) / 3}) = \Ohstar(2^{n/3})$.
\end{proof}

We should note here that since the work of Eppstein, faster algorithms
for \probTSP in subcubic graphs were developed \cite{IwamaNakashima:cubic, Liskiewicz:cubic, XiaoNagamochi:deg3}.
However, all of them run still in time $\Ohstar(2^{n/3})$
when we apply them to the $2n$-vertex subcubic graphs resulting from digraphs with all out- and indegrees at most~$2$ (as described in the proof of Corollary~\ref{cor:2-regular}).

Since Lemma~\ref{lem:reduce} allows us
to transfer some of the results for subcubic instances of \probTSP to its version in digraphs  with all out- and indegrees at most~$2$, one may also ask whether there is a relationship between subcubic instances of undirected \probTSP
and instances of \probATSP with maximum total degree at most $3$.
(Recall that total degree of a vertex is the sum of its indegre and outdegree.)
The following lemma (implicit in Plesn{\'{\i}}k~\cite{plesnik}) answers this question indirectly.

\begin{lemma}[\cite{plesnik}]
\label{lem:cubic-dir}
  There is an~algorithm for \probATSP restricted to digraphs of maximum total degree~$3$, 
  and working in time $\O^*(f(n))$
  if and only if there is an~algorithm for \probATSP restricted to digraphs with out- and indegrees at most~$2$,
  and working in time $\O^*(f(2n))$.
\end{lemma}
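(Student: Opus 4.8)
The plan is to prove the biconditional by exhibiting two reductions between the two graph classes, one in each direction, each controlling the number of vertices up to the factor~$2$ appearing in the statement. Write (A) for \probATSP on digraphs of maximum total degree~$3$ and (B) for \probATSP on digraphs with all in- and outdegrees at most~$2$. For the ``only if'' direction I would assume an $\Ohstar(f(n))$-time algorithm for~(A) and build an $\Ohstar(f(2n))$-time algorithm for~(B) by mapping an $n$-vertex instance of~(B) to a $2n$-vertex instance of~(A). For the ``if'' direction I would assume an $\Ohstar(f(2n))$-time algorithm for~(B) and solve an $n$-vertex instance of~(A) by mapping it to an instance of~(B) on at most $n/2$ vertices, so that the assumed algorithm runs in time $\Ohstar(f(2\cdot n/2))=\Ohstar(f(n))$. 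Both reductions must preserve the weight of an optimal Hamiltonian cycle and the existence of one.

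The first reduction is the vertex-splitting trick already used in Lemma~\ref{lem:reduce}. Given a (B)-instance $G$, replace each vertex $v$ by two vertices $v^{\inn},v^{\outt}$ joined by a weight-$0$ arc $v^{\inn}\to v^{\outt}$, and reroute every arc $(u,v)$ of $G$ to $u^{\outt}\to v^{\inn}$ with the same weight. Since $\indeg_G(v),\outdeg_G(v)\le 2$, the vertex $v^{\inn}$ gets indegree $\le 2$ and outdegree~$1$, while $v^{\outt}$ gets indegree~$1$ and outdegree $\le 2$; hence the resulting digraph has maximum total degree~$3$ and exactly $2n$ vertices, so it is a valid (A)-instance. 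As the middle arc is the unique arc leaving $v^{\inn}$ and the unique arc entering $v^{\outt}$, every Hamiltonian cycle of the new graph uses all $n$ middle arcs, and contracting them recovers a Hamiltonian cycle of $G$ of the same weight; conversely each Hamiltonian cycle of $G$ lifts to one of the new graph. This yields the ``only if'' direction.

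The second, harder, reduction is essentially the inverse operation. Starting from an (A)-instance $G$, I would first discard vertices of indegree or outdegree~$0$ (they lie on no Hamiltonian cycle) and suppress every vertex $v$ of type $(\indeg,\outdeg)=(1,1)$: its unique in-arc $u\to v$ and out-arc $v\to w$ are forced together, so replacing the path $u\to v\to w$ by a single arc $u\to w$ of the summed weight is safe. After suppression only vertices of types $(1,2)$ and $(2,1)$ remain, and since the number of arcs equals both the sum of the outdegrees and the sum of the indegrees, the two types occur in equal numbers, say $a$ each, on $2a\le n$ vertices. I would then pair each $(1,2)$-vertex with an adjacent $(2,1)$-vertex and contract the arc between them: a short degree computation shows that contracting an arc whose endpoints are of opposite types produces a vertex of indegree and outdegree exactly~$2$, so the contracted digraph is a valid (B)-instance on $a\le n/2$ vertices, with the optimal Hamiltonian-cycle weight again preserved. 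Running the assumed (B)-algorithm then gives the claimed $\Ohstar(f(n))$ bound, establishing the ``if'' direction.

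The main obstacle is precisely this pairing step: contracting an arc between two $(1,2)$-vertices would create outdegree~$3$, and between two $(2,1)$-vertices indegree~$3$, so I must match every $(1,2)$-vertex to a distinct adjacent $(2,1)$-vertex. Proving that such a perfect matching in the bipartite ``opposite-type adjacency'' graph always exists is the crux; I would try to verify Hall's condition from the degree structure left after suppression, and handle any obstructed configurations by small local gadgets. Here the slack in the statement helps: because $f$ is exponential, $\Ohstar$ absorbs polynomial factors and hence any additive $\Oh(\log n)$ increase in the number of vertices, so a bounded number of gadget vertices per obstruction does not affect the bound. Finally, a few fully degenerate inputs---for instance a single directed cycle, whose underlying graph may be an odd cycle admitting no perfect matching---must be recognized and solved directly in polynomial time.
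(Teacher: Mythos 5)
Your forward direction---splitting each vertex $v$ into $v^\inn,v^\outt$ joined by a forced weight-$0$ arc---is exactly the paper's proof of that implication and is correct. The reverse direction, however, contains a genuine error that goes beyond the matching-existence issue you yourself flag as the crux: the contraction operation you use is not solution-preserving. You contract an arc between a $(1,2)$-vertex $u$ and a $(2,1)$-vertex $v$ while keeping all other arcs incident to $u$ and $v$ (this is precisely how you compute indegree and outdegree exactly $2$ for the merged vertex), and this fails in two ways. First, such an arc is not forced ($u$ has another out-arc and $v$ has another in-arc), so the optimal Hamiltonian cycle may avoid it, and contracting it changes the optimum. Second, even when the contracted arc lies on the optimal cycle, the merged vertex can be traversed ``backwards'' in the contracted graph, entered along an in-arc of $v$ and left along an out-arc of $u$, which corresponds to no directed walk in $G$ and so creates spurious solutions. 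Concretely: take $u_1,u_2$ of type $(1,2)$ and $v_1,v_2$ of type $(2,1)$, arcs $u_i\to v_j$ for all $i,j\in\{1,2\}$ of weight $0$, and arcs $v_1\to u_2$, $v_2\to u_1$ of weight $1$. The unique Hamiltonian cycle is $u_1\to v_1\to u_2\to v_2\to u_1$, of weight $2$. Your procedure may legally pick the matching $\{u_1v_1,\,u_2v_2\}$; after contracting $u_1\to v_1$ and $u_2\to v_2$, the arcs $u_1\to v_2$ and $u_2\to v_1$ (which lie on no Hamiltonian cycle of $G$) form a Hamiltonian cycle of the contracted graph, and the reduction reports weight $0$ instead of $2$. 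No gadgetry around Hall's condition can repair this, because the defect is in the contraction itself, not in the existence of the matching.

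The paper's reverse reduction sidesteps the pairing question entirely by contracting only \emph{forced} arcs, together with the right deletion rule. Since all in- and outdegrees are in $\{1,2\}$ (otherwise there is no Hamiltonian cycle) and the total degree is at most $3$, every vertex has indegree exactly $1$ or outdegree exactly $1$, hence is incident to an arc that every Hamiltonian cycle must use. All such arcs are contracted, and when contracting $(u,v)$ one also deletes every other arc leaving $u$ and every other arc entering $v$, as these are unusable once $(u,v)$ is forced. With these deletions the merged vertex inherits precisely the in-arcs of $u$ and the out-arcs of $v$, so no contraction raises any in- or outdegree above $2$ \emph{regardless of the types of the endpoints}; your worry that contracting between two $(1,2)$-vertices creates outdegree $3$ arises only because you omit the deletions. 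Correctness is immediate because only forced arcs are contracted, and since every vertex takes part in at least one contraction, the resulting $(2,2)$-graph has at most half as many vertices, giving the claimed $\Ohstar(f(n))$ bound. Your preliminary suppression of $(1,1)$-vertices is subsumed by this, and no Hall-type argument, gadgets, or special handling of degenerate inputs is needed.
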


\begin{proof}
($\implies$) Let $G$ be a~digraph on $n$ vertices with all out- and indegrees at most~$2$.
Let $(\widehat{G},\widehat{w},M)$ be the instance of \probBFMTSP defined in the proof of Lemma~\ref{lem:reduce}.
We construct a weighted digraph $G'$ by orienting the edges of $E(\widehat{G})\setminus M$
from $V^\outt$ to $V^\inn$, and edges of $M$ from $V^\inn$ to $V^\outt$ (the weights stay the same).
We see that $G'$ has $2n$ vertices, has all total degrees at most~3, and Hamiltonian cycles in $G'$
correspond to Hamiltonian cycles in $G$ of the same weight.

($\impliedby$) Let $G_3$ be a digraph on $2n$ vertices with maximum total degree at most~3.
We may assume that all indegrees and outegrees in graph $G_3$ equal to $1$ or $2$, because otherwise $G_3$ has no Hamiltonian cycle.
Since the total degree of each vertex is at most~$3$, each vertex has exactly one incoming edge or exactly one outgoing edge.
We see that every Hamiltonian cycle in $G_3$ must contain all such edges,
hence they can be contracted.
When we contract an edge $(u, v)$ we also remove edges of the form $(u, \underline{\hspace{0.5em}})$ and $(\underline{\hspace{0.5em}}, v)$.
Let us denote the remaining graph by $G'$.
We claim that $G'$ has all out- and indegrees at most~$2$.
Indeed, consider a contraction of an edge $(u,v)$ to a new vertex $x$.
Since we remove all other edges of the form $(u, \underline{\hspace{0.5em}})$ and $(\underline{\hspace{0.5em}}, v)$, there is a one-to-one correspondence between the edges entering (resp. leaving) $x$ and the edges entering $u$ (resp. leaving $v$).
Hence, a single contraction does not increase the maximum out- or indegree, which implies that after all contractions all out- and indegrees are still at most 2. Moreover, every vertex from $G_3$ takes part in at least one edge contraction, and thus
$|V(G')| \leq |V(G_3)|/2 = n$.
\end{proof}

By combining Lemma~\ref{lem:cubic-dir} with Corollary~\ref{cor:2-regular}
we obtain the following.

\begin{corollary}
	\label{cor:totdeg3}
  \probATSP restricted to digraphs of maximum total degree~$3$ can be solved
  in time $\Ohstar(2^{n/6})$ and polynomial space.
\end{corollary}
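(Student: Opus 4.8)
The plan is to read Corollary~\ref{cor:2-regular} and Lemma~\ref{lem:cubic-dir} as the two halves of a single composition, choosing the function in the lemma so that the two running-time bounds line up. Concretely, I would instantiate Lemma~\ref{lem:cubic-dir} with $f(n)=2^{n/6}$ and then check that both sides of the equivalence match the bounds I already have.

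First, I would observe that Corollary~\ref{cor:2-regular} already supplies an algorithm for \probATSP on digraphs with all out- and indegrees at most~$2$ running in time $\Ohstar(2^{n/3})$ and polynomial space. With the choice $f(n)=2^{n/6}$ we have $f(2n)=2^{(2n)/6}=2^{n/3}$, so this corollary is precisely the assertion that the $(2,2)$-case is solvable in time $\Ohstar(f(2n))$, i.e.\ it is exactly the right-hand side of the equivalence in Lemma~\ref{lem:cubic-dir}.

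Second, I would invoke the $(\impliedby)$ direction of Lemma~\ref{lem:cubic-dir}: given a maximum-total-degree-$3$ digraph $G$ on $n$ vertices, the construction there contracts the forced (unique in- or out-) edges to produce an equivalent $(2,2)$-instance $G'$ with $|V(G')|\le n/2$, preserving edge weights and the existence of a minimum-weight Hamiltonian cycle. Running the algorithm of Corollary~\ref{cor:2-regular} on $G'$ then costs $\Ohstar(2^{|V(G')|/3})\le\Ohstar(2^{(n/2)/3})=\Ohstar(2^{n/6})=\Ohstar(f(n))$, which is exactly the claimed bound. Polynomial space is inherited, since the contraction step is polynomial-time preprocessing and the base algorithm from Corollary~\ref{cor:2-regular} is itself polynomial-space.

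I do not expect a genuine obstacle here, as the entire content is absorbed into the two cited results; the proof is just the forward reading of the biconditional. The single point requiring care is the vertex-count accounting: it is essential that a total-degree-$3$ digraph on $n$ vertices maps to a $(2,2)$-digraph on at most $n/2$ vertices rather than on $n$ vertices, since it is exactly this halving that turns the $2^{n/3}$ bound for $(2,2)$-graphs into the improved $2^{n/6}$ bound claimed here. I would therefore double-check the degenerate cases folded into Lemma~\ref{lem:cubic-dir}: vertices whose in- or outdegree is neither $1$ nor $2$ may be discarded outright (they preclude any Hamiltonian cycle), which guarantees that every retained vertex participates in at least one contraction and hence that the $|V(G')|\le n/2$ estimate indeed holds.
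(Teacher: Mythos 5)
Your proposal is correct and follows exactly the paper's route: the paper proves Corollary~\ref{cor:totdeg3} in one line by combining Lemma~\ref{lem:cubic-dir} (the contraction-based equivalence, whose $(\impliedby)$ direction halves the vertex count) with Corollary~\ref{cor:2-regular}, which is precisely your instantiation $f(n)=2^{n/6}$, $f(2n)=2^{n/3}$. Your added care about the vertex-count halving and the degenerate degree cases matches what the paper's proof of Lemma~\ref{lem:cubic-dir} itself handles.
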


\subsection{Meet in the middle technique}

Another approach for solving \probTSP in sparse graphs was suggested by Gebauer \cite{GebauerTSP}.
Although it was originally presented for undirected graphs of maximum degree~$4$,
we recall it here for digraphs with outdegrees bounded by~$D$, since the same method can be applied
to them.

\begin{theorem}[\cite{GebauerTSP}]
	\label{cor:d-regular}
	\probATSP restricted to digraphs with outdegrees bounded by~$D$
	can be solved in time $\Ohstar(D^{n/2})$
	and exponential space.
\end{theorem}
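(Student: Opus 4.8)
The plan is to use a meet-in-the-middle split of the sought Hamiltonian cycle into two halves, each of which can be enumerated by following outgoing edges, so that the outdegree bound $D$ caps the number of half-paths by $D^{n/2}$. First I would fix an arbitrary vertex $v_1$; since every Hamiltonian cycle visits all vertices, $v_1$ lies on the optimal cycle, and I may write this cycle as $v_1 \to v_2 \to \dots \to v_n \to v_1$. Put $k = \ceil{n/2}$ and let $u := v_{k+1}$ be the vertex ``opposite'' to $v_1$. Then the cycle decomposes into a simple directed path $P_1 = v_1 \to \dots \to u$ with $k$ edges and a simple directed path $P_2 = u \to \dots \to v_1$ with $n-k$ edges, where $P_1$ and $P_2$ are internally vertex-disjoint and together cover $V(G)$.

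Since $u$ is not known in advance, I would iterate over all $n-1$ candidates for the midpoint $u$, paying only a polynomial factor. For a fixed $u$, I enumerate by depth-first search all simple directed paths of length exactly $k$ that start at $v_1$ and end at $u$: at each of the $k$ extension steps there are at most $D$ out-edges to a new vertex, so there are at most $D^k$ such paths, listable in time $D^k n^{\Oh(1)}$ and exponential space. For each such path $P_1$, with vertex set $S_1 \ni v_1, u$ and weight $w_1$, I store in a dictionary $T$ the minimum weight seen for the key $S_1$. Analogously I enumerate all at most $D^{n-k}$ simple directed paths of length $n-k$ starting at $u$ and ending at $v_1$ (again forward, along out-edges, so the $D$-bound applies); call such a path $P_2$, with vertex set $S_2$ and weight $w_2$.

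The combination step exploits that $P_1$ and $P_2$ glue into a Hamiltonian cycle exactly when $S_1 \cap S_2 = \{v_1, u\}$ and $S_1 \cup S_2 = V(G)$, that is, precisely when $S_1 = (V(G) \setminus S_2) \cup \{v_1, u\}$ (the cardinalities then match automatically, since $|S_1| = k+1$, $|S_2| = n-k+1$, and $|S_1 \cap S_2| = 2$). Hence for each enumerated second half $P_2$ I compute this complementary set, look it up in $T$, and if $T[S_1]$ is present I record a candidate cycle of weight $T[S_1] + w_2$; the returned answer is the minimum over all $u$ and all $P_2$. Over the $n$ choices of $u$ the total time is $\Ohstar(\max(D^k, D^{n-k})) = \Ohstar(D^{\ceil{n/2}}) = \Ohstar(D^{n/2})$, and the space is dominated by the dictionary, which is exponential.

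The main thing to get right is the combination invariant: I must argue that matching a first half with the half whose vertex set is complementary yields two paths that are vertex-disjoint except at their common endpoints $v_1, u$ and jointly visit every vertex exactly once, so that their union is genuinely a single simple directed Hamiltonian cycle rather than a collection of shorter closed walks; the cardinality bookkeeping above makes this precise, and it also guarantees that every Hamiltonian cycle is found (for the correct guess of $u$ its two halves are enumerated and matched). The only other point requiring care is that both halves be generated in the \emph{forward} direction, so that the outdegree bound --- and not the indegree --- governs the branching of both searches; this is exactly why $P_2$ is produced by starting at $u$ and running toward $v_1$, rather than by reversing a search for a $v_1 \to u$ path.
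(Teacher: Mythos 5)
Your proposal is correct and takes essentially the same route as the paper, which proves this theorem by exactly the meet-in-the-middle sketch you describe: guess the vertex (or vertices) splitting the hypothetical Hamiltonian cycle into two halves, enumerate both halves \emph{forward} along out-edges so that the outdegree bound caps each family at roughly $D^{n/2}$, store one family in a dictionary keyed by vertex set, and match complementary sets. The only cosmetic difference is that you fix one endpoint arbitrarily and guess only the opposite vertex, while the paper guesses the pair $(u_1,u_2)$ --- a polynomial-factor distinction that does not affect the $\Ohstar(D^{n/2})$ bound.
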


The idea of this algorithm can be sketched as follows.
We guess a~pair of vertices $(u_1, u_2)$ which divide a~hypothetical Hamiltonian
cycle into two (almost) equal parts.
Next, we run a~branching procedure to generate all the paths $\mathcal{P}_1$ from~$u_1$ to~$u_2$ of length~$\lfloor n/2 \rfloor$,
and all the paths $\mathcal{P}_2$ from~$u_2$ to~$u_1$ of length~$\lceil n/2 \rceil$.
Finally, we try to combine such paths into one Hamiltonian cycle
by memorizing $\mathcal{P}_1$ in a~dictionary and iterating over paths $P_1 \in \mathcal{P}_1$.

For a~detailed description, we refer to the original work of Gebauer \cite{GebauerTSP}, and to Section~\ref{sec:gebauer},
where we generalize this result to digraphs of bounded \emph{average} outdegree.

\subsection{Algebraic methods}

Bj\"{o}rklund~\cite{Bjorklund:sparse-bipartite} shows the following result.

\begin{theorem}[\cite{Bjorklund:sparse-bipartite}]
	\label{thm:bjorklund-bipartite-sparse}
  There is a Monte Carlo algorithm which solves \probUndirHam
  restricted to bipartite graphs of average degree at most~$d$
  in time $\Ohstar((2 - 2^{1-d})^{n/2})$ and polynomial space.
\end{theorem}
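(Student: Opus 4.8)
The plan is to reprove this via Björklund's algebraic sieve over a field of characteristic $2$. First I would fix the bipartition $(A,B)$ with $|A|=|B|=n/2$ and observe that any Hamiltonian cycle in a bipartite graph is the edge-disjoint union of two perfect matchings that agree only through a distinguished anchor. I would attach an indeterminate $x_e$ to each edge $e$ and define a polynomial $P(\mathbf{x})$ over $\mathbb{F}_{2^k}$ that counts, with this labelling, the candidate closed structures spanning $V(G)$. The point of characteristic $2$ is cancellation: the non-Hamiltonian structures, i.e.\ those whose two-matching decomposition splits into at least two cycles, come in pairs under a fixed-point-free involution (swapping the two matchings on one chosen cycle), so their contributions cancel, while each genuine Hamiltonian cycle survives. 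Hence $P\not\equiv 0$ exactly when $G$ is Hamiltonian, and by the Schwartz--Zippel / DeMillo--Lipton--Zippel lemma a single evaluation at a random point of $\mathbb{F}_{2^k}^{|E|}$ with $k=\Theta(\log n)$ detects this with constant one-sided error. Evaluating $P$ at a fixed point needs only polynomial space.

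Next I would expose the $2^{n/2}$ baseline. The key is that $P$, evaluated at a point, can be written as a subset sum over one part of the bipartition,
\[
P(\mathbf{x}) = \sum_{X \subseteq B} D_X,
\]
where each $D_X$ is a determinant-like quantity built from the rows and columns indexed by $X$, computable in polynomial time and space. Since $|B| = n/2$, iterating over all $X$ without storing them yields the $\Ohstar(2^{n/2})$ running time and polynomial space of the dense case.

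The sparsity refinement comes from pruning subsets that contribute $0$. For a vertex $v$ of small degree, the summand $D_X$ vanishes whenever the trace of $X$ on the neighbourhood of $v$ falls into a forbidden pattern; organizing the enumeration of $X$ as a branching process that recurses only on surviving neighbourhood patterns, the effective number of live subsets attributable to $v$ drops from $2$ to $2 - 2^{1-\deg(v)}$. Making this precise, the total number of evaluated terms becomes $\prod_{v \in B} (2 - 2^{1-\deg(v)})$ up to polynomial factors. Finally I would pass to the average degree: writing $g(x) = 2 - 2^{1-x}$ and checking that $g$ is increasing with $(\log g)'' = (g''g - (g')^2)/g^2 < 0$, so that $\log g$ is concave, Jensen's inequality gives $\prod_{v\in B} g(\deg(v)) \le g(d)^{|B|} = (2 - 2^{1-d})^{n/2}$ under the hypothesis that the average degree is at most $d$, which is the claimed bound.

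The main obstacle I expect is twofold. First, setting up the involution so that precisely the Hamiltonian cycles survive is the delicate algebraic core: a naive pairing also kills Hamiltonian terms, since a Hamiltonian cycle itself splits into two matchings in two ways, so the anchor and the labelling must be chosen to break exactly that symmetry while keeping the surviving monomials nonzero. Second, the sparsity accounting must be shown valid \emph{simultaneously} with the algebra; I must verify that the pruned terms $D_X$ genuinely vanish (not merely that they rarely contribute) and that the branching enumeration both preserves polynomial space and realizes the product $\prod_{v\in B}(2 - 2^{1-\deg(v)})$ exactly, despite overlapping neighbourhoods. The concavity step is routine once $g$ is identified, but the interface between the determinant identity and the neighbourhood-based pruning is where the real care is needed.
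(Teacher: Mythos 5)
Your overall architecture (a determinant-type polynomial over a field of characteristic two, cancellation of non-Hamiltonian cycle covers via an anchored involution, a subset-sum expansion of the evaluation over one side of the bipartition, and a per-vertex saving factor $2-2^{1-\deg(v)}$) matches Bj\"orklund's construction, which this paper does not reprove but cites; its ingredients are recalled in the proof of Theorem~\ref{thm:hamilton-avgdeg}. However, there is a genuine gap at the heart of your sparsity step: you try to realize the bound $\prod_{v\in B}(2-2^{1-\deg(v)})$ \emph{deterministically}, by arguing that a summand $D_X$ vanishes whenever $X$ hits a single ``forbidden pattern'' on each neighbourhood, and that a branching enumeration visits exactly the surviving subsets. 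With one forbidden pattern per vertex this count is simply false when neighbourhoods overlap: if two vertices $j_1,j_2$ of degree $2$ share the same neighbourhood and the same forbidden pattern, then $3$ traces on that neighbourhood survive, whereas your product accounting allots $4\bigl(1-2^{-2}\bigr)^2=2.25$. No deterministic pruning keyed to one fixed pattern per vertex can yield the product bound, which is precisely the concern you flag but do not resolve.

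In Bj\"orklund's proof this is exactly the role of an auxiliary family of \emph{random} variables that your proposal lacks: the matrix entries have the form $z_{i,j}z_{j,k}(a_{j,k}+x_k)$, the column of a vertex $j$ is identically zero iff $a_{j,i}+x_i\equiv 0 \pmod 2$ for every neighbour $i$, and because the patterns $a_{j,\cdot}$ are fresh independent random bits for each $j$, the events ``column $j$ is not forced to zero'' are independent across $j$ for each fixed $x$. Hence the number of surviving assignments $x$ is bounded by $\prod_{j}(2-2^{1-d_j})$ only \emph{in expectation} over $a$; Markov's inequality then converts this into a Monte Carlo running-time guarantee, and the surviving $x$'s are enumerated as satisfying assignments of a CNF formula with one clause per vertex (a separate ingredient with its own enumeration algorithm, which you replace by an unspecified branching process). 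So the fix is not a more careful deterministic analysis but a change of approach: randomize the per-vertex patterns, bound the count in expectation, and enumerate via the CNF machinery. Your final Jensen step is fine as stated, since $\log\bigl(2-2^{1-x}\bigr)$ is indeed concave and the average degree of one side of the bipartition is at most $d$.
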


It turns out that the proof of Theorem~\ref{thm:bjorklund-bipartite-sparse} can be modified to get the following Theorem. The idea is to use the reduction of Lemma~\ref{lem:reduce} to get a sparse bipartite graph and modify the construction of Theorem~\ref{thm:bjorklund-bipartite-sparse} so that a relevant forced matching is a part of the resulting Hamiltonian cycle.

\begin{theorem}
  \label{thm:hamilton-avgdeg}
  There is a Monte Carlo algorithm which solves
  \probDirHam restricted to digraphs of average outdegree at most~$d$
  in time $\Ohstar((2-2^{-d})^n)$ and polynomial space.
\end{theorem}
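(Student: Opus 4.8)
The plan is to reduce \probDirHam on $G$ to a forced-Hamiltonicity question on the bipartite graph $\widehat{G}$ produced by Lemma~\ref{lem:reduce}, and then to run a suitably modified version of the algorithm of Theorem~\ref{thm:bjorklund-bipartite-sparse} on $\widehat{G}$. Concretely, I would first apply Lemma~\ref{lem:reduce} with the zero weight function: a digraph $G$ on $n$ vertices of average outdegree at most $d$ has a Hamiltonian cycle if and only if $\widehat{G}$ has a Hamiltonian cycle containing the forced perfect matching $M=\{v^\inn v^\outt\mid v\in V(G)\}$. By the degree part of Lemma~\ref{lem:reduce}, $\widehat{G}$ is bipartite on $2n$ vertices of average degree $d+1$, and both sides $V^\outt$ and $V^\inn$ have average degree $d+1$, since $v^\outt$ has degree $1+\outdeg(v)$ and $v^\inn$ has degree $1+\indeg(v)$.

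The second step is an accounting observation that pins down where all the work must go. If we could apply Theorem~\ref{thm:bjorklund-bipartite-sparse} to $\widehat{G}$ verbatim, we would obtain running time $\Ohstar((2-2^{1-(d+1)})^{(2n)/2})=\Ohstar((2-2^{-d})^n)$, which is exactly the target bound. The catch is that the black-box algorithm detects an \emph{arbitrary} Hamiltonian cycle of $\widehat{G}$, whereas we must detect one that contains $M$; a general Hamiltonian cycle of $\widehat{G}$ need not use any edge of $M$ at all. Hence the entire content of the theorem is to force $M$ into the cycle without spoiling the running time or the polynomial space and Monte~Carlo guarantees.

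For the forcing I would open up Bj\"orklund's construction rather than attach a gadget, since a gadget risks changing degrees and thus the base of the exponent. The algorithm of Theorem~\ref{thm:bjorklund-bipartite-sparse} detects a Hamiltonian cycle through a randomized, polynomial-space evaluation of a connectivity polynomial, in which connectivity is enforced by an inclusion--exclusion sieve over the subsets of one vertex side; sparsity enters because the local contribution of a vertex $v$ on the sieved side is governed by a factor of the form $2-2^{1-\deg(v)}$, and concavity of $x\mapsto 2-2^{1-x}$ bounds the product of these factors by $(2-2^{1-\delta})^{n}$ when the average degree on that side is $\delta$. Since in a bipartite Hamiltonian cycle the edge set splits into two perfect matchings and $|M|=n$ is exactly half the number of cycle edges, requiring $M\subseteq\widehat{H}$ is equivalent to fixing one of these two matchings to be $M$. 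I would therefore modify the polynomial so that the edges of $M$ are mandatory, committing $M$ as one matching and letting the sieve search for the complementary matching, while keeping $\widehat{G}$ (with all $M$-edges still present) as the host graph, so that the algorithm stays bipartite and each $M$-edge still contributes to the degree of its endpoints. Taking the sieve over the $n$-vertex side $V^\inn$, whose vertices have degree $1+\indeg(v)$, concavity then gives a per-term count $\prod_v\bigl(2-2^{-\indeg(v)}\bigr)\le (2-2^{-d})^n$, as required.

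The main obstacle I expect is precisely this forcing step: I must verify that $M$ can be made mandatory by an internal modification of the randomized polynomial identity that simultaneously (i) preserves one-sided correctness, so that the evaluated polynomial is nonzero with constant probability if and only if $\widehat{G}$ has an $M$-containing Hamiltonian cycle, and (ii) leaves the inclusion--exclusion sieve and its degree-based summand bound $\prod_v\bigl(2-2^{1-\deg(v)}\bigr)$ intact. In particular I must keep the host graph bipartite — so I cannot simply contract $M$, which would return to the directed setting and forfeit the use of Theorem~\ref{thm:bjorklund-bipartite-sparse} — which is also the reason the forced edges remain in the degree count and the base stays $2-2^{-d}$ rather than improving. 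Checking that the modification neither inflates the number of summands nor interferes with Bj\"orklund's isolation and randomization is where the real work lies; by contrast, the reduction and the convexity estimate are routine given Lemma~\ref{lem:reduce} and Theorem~\ref{thm:bjorklund-bipartite-sparse}.
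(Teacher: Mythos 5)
Your high-level plan is the same as the paper's: apply Lemma~\ref{lem:reduce} with trivial weights, note that a verbatim application of Theorem~\ref{thm:bjorklund-bipartite-sparse} to $\widehat{G}$ would give exactly $(2-2^{1-(d+1)})^{2n/2}=\Ohstar((2-2^{-d})^n)$, and then force the matching $M$ by modifying Bj\"orklund's construction internally rather than by a gadget or contraction. The reduction, the degree accounting, and the identification of the crux are all correct. However, the crux itself --- exhibiting a modified polynomial for which $M$ is mandatory while Bj\"orklund's correctness argument survives --- is exactly the content of the theorem, and you both leave it unexecuted and sketch a mechanism for it that would fail.

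Concretely, you propose ``committing $M$ as one matching and letting the sieve search for the complementary matching.'' In Bj\"orklund's construction the determinant terms correspond to pairs consisting of a permutation matching $P$ (the row-to-column choice) and a successor choice at each column vertex, and connectivity is \emph{not} enforced by a separate sieve over subsets: it comes from cancellation in characteristic two, where a non-Hamiltonian cycle cover is paired with the cover obtained by reversing the lexicographically first cycle avoiding the special edge $e^*$, and reversal swaps the roles of permutation edges and successor edges along that cycle. If you hard-wire the successor edges to be the $M$-edges, then every cycle cover of the form $P\cup M$ arises from exactly one term (one can check $P$ is determined by the edge set $P \cup M$), and the reversed configuration is no longer a term of the polynomial, since its successor edges would be $P$-edges rather than $M$-edges. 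The pairing breaks, disconnected $M$-containing cycle covers survive in $\phi$, and the algorithm answers yes on any digraph that has a cycle cover but no Hamiltonian cycle. The paper's design avoids precisely this: for $ij\in F$ the entry $\sum_{k\in I\setminus\{i\}} z_{i,j}z_{j,k}(a_{j,k}+x_k)$ leaves the successor of $j$ free, while for $ij\notin F$ the successor of $j$ is forced to be the $F$-partner of $j$. Thus each vertex's $F$-edge must occur either as a permutation edge or as a successor edge; this forces $F$ into every counted cover, yet keeps the family of terms closed under cycle reversal, so Bj\"orklund's three observations (no surviving $a$-variables, bijection with cycle covers, cancellation of disconnected covers), the column-zero probability bound, and the CNF enumeration all go through unchanged. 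This asymmetric ``either incoming or outgoing'' design is the missing idea in your proposal.
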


\begin{proof}
  We assume that the reader is familiar with the proof of Theorem~\ref{thm:bjorklund-bipartite-sparse}.
  We apply Lemma~\ref{lem:reduce} and we get a bipartite undirected graph $\widehat{G}=(I\cup J, \widehat{E})$ and a perfect matching $F\subseteq \widehat{E}$. 
  Recall that $\widehat{G}$ has $2n$ vertices and average degree at most $d+1$.
  The goal is to decide whether $\widehat{G}$ has a Hamiltonian cycle $H$ that contains $F$.
  
  Similarly as in~\cite{Bjorklund:sparse-bipartite} we define a polynomial matrix $M$ with rows indexed by the vertices of $I$, and columns indexed by the vertices of $J$, as follows.
  \[M(a,x,z)_{i,j} = \begin{cases}
  \sum_{k\in I\setminus\{i\}}z_{i,j}z_{j,k} (a_{j,k} + x_k) & \text{when $ij\in F$,} \\
  z_{i,j}z_{j,k} (a_{j,k} + x_k) & \text{when $ij\not\in F$, but $jk\in F$.}
  \end{cases}\]
  
  These polynomials have three types of variables: $x_i$ for every $i\in I$, 
  $a_{j,i}$ for every edge $ji\in\widehat{E}$, $j\in J$, $i\in I$.
  The third type of variable is somewhat special.
  Pick a fixed edge $e^*=i^*j^*\in F$.
  For every edge $ij\in\widehat{E}\setminus\{e^*\}$ there is one variable with two names  $z_{i,j}$ and $z_{j,i}$; there are also two different variables $z_{i^*,j^*}$ and $z_{j^*,i^*}$.
  Then we define a polynomial over a large enough field of characteristic two:
  \[\phi=\sum_{x\in\{0,1\}^{n/2}}\det(M(a,x,z))\]
  
  Now we should prove that thanks to cancellation in a field of characteristic two, $\phi=\sum_{H\in\mathcal{H}}\prod_{ij\in H}z_{i,j}$, where $\mathcal{H}$ is the set of all Hamiltonian cycles in $\widehat{G}$ which contain~$F$.
  Bj\"{o}rklund (Lemma 3 in~\cite{Bjorklund:sparse-bipartite}) shows this equality for the original polynomial using three observations: 1) after cancellation, the surviving terms do not contain $a$-variables, 2) each surviving term corresponds to a unique cycle cover in the graph, and 3) terms corresponding to non-Hamiltonian cycle covers pair-up and cancel-out, because if we reverse the lexicographically first cycle that does not contain $e^*$, then we get exactly the same term (and if we reverse a Hamiltonian cycle we get a different term, because of the asymmetry in defining $z$ variables). The arguments used in~\cite{Bjorklund:sparse-bipartite} for proving 1)-3) still hold for the new polynomial, essentially for the same reasons.
  
  The second ingredient of Bj\"{o}rklund's construction is an upper bound on probability that none of the columns of $M(a,x,z)$ is identically zero, where $x\in\{0,1\}^{n/2}$ is a fixed assignment, $z$ is the vector of all $z_{i,j}$ variables, and $a\in\{0,1\}^{n/2}$ is a {\em random} assignment. The calculation relies on the observation that if for a vertex $j\in J$ we have $a_{j,i}+x_i\equiv 0\pmod 2$ for all $ij\in \widehat{E}$, then the column of $j$ is identically zero. Note that this observation still holds for our new design. 
  It follows that the probability bounds derived in~\cite{Bjorklund:sparse-bipartite} apply also in our case. 
  
  The third ingredient is efficient identification of assignments $x\in\{0,1\}^{n/2}$, for which $\det(M(a,x,z))$ is non-zero (for fixed, random, values of $a$). 
  This is done by creating a Boolean variable $w_v$ corresponding to every variable $x_v$ and building a CNF formula such that its satisfying assignments correspond to a superset of all assignments of $x_v$ variables that result in non-zero $\det(M(a,x,z))$.
  Again, the fact that the resulting formula is in CNF follows from the fact that the $j$-th column is non-zero if for some $i\in I$ we have $a_{j,i}+x_i\equiv 1\pmod 2$, which is also true in our design. Finally, Bj\"{o}rklund~\cite{Bjorklund:sparse-bipartite} shows how to enumarate all satisfying assignments of the CNF formula efficiently, what is not altered in any way by our changes in the design of polynomial $\phi$.
\end{proof}

\subsection{Dynamic programming on pathwidth decompositions}

There are many works \cite{Fomin:pathwidth, FominHoie:pathwidth-cubic, Kneis:pathwidth}
which show that the pathwidth of sparse undirected graphs is relatively small,
and which provide a~polynomial time algorithm for computing the~corresponding decomposition.
(For a definition of pathwidth, see \cite{platypus}, section 7.2.)
These results, combined with algorithms working on a~path decomposition of the input graph
\cite{Cygan:pathwidth, Bodlaender:TSP},
often lead to the fastest algorithms for sparse undirected graphs (see Table~\ref{table:results-undirected}).

A natural question that arises here is whether these methods can be transferred
to the corresponding problems in sparse {\em directed} graphs.
There are two natural strategies for that: either use the path decomposition of the underlying undirected graph, or the path decomposition of the graph resulting from the reduction of Lemma~\ref{lem:reduce} or Lemma~\ref{lem:reduce2}.
Although in this way one can get algorithms faster than $\Ohstar(2^n)$ for some classes of sparse digraphs, it does not help to improve any of the bounds in Table~\ref{table:results-directed}, at least by combining currently known results.
For completeness, in the remainder of this section we provide calculations that support this claim. 

Let us try the direct approach first. We can use the following result of Cygan et al.

\begin{theorem}[\cite{Cygan:connectivity}]
	\label{thm:hamilton-treewidth-directed}
	There is a Monte Carlo algorithm which, given a~graph~$G$
	with a tree decomposition of its underlying undirected graph of width $\tw$,
	solves \probDirHam for $G$ in time ${6^{\tw} n^{\Oh(1)}}$
	and exponential space.
\end{theorem}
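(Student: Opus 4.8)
The statement is quoted verbatim from~\cite{Cygan:connectivity}, so the plan is to recall their Cut \& Count argument rather than invent a new proof. The overall strategy reduces detecting a Hamiltonian cycle to an algebraic parity-counting problem that is solved by a dynamic program over the given tree decomposition, with randomization entering through the Isolation Lemma. First I would view a candidate solution as an arc set $A\subseteq E(G)$ in which every vertex has in-degree and out-degree exactly~$1$, i.e.\ a cycle cover; a Hamiltonian cycle is precisely a \emph{connected} cycle cover. To isolate connectivity I would use the Cut part: fix a vertex $v^*$, and call a partition $(V_1,V_2)$ of $V(G)$ with $v^*\in V_1$ a \emph{consistent cut} of $(V,A)$ if no arc of $A$ has its endpoints on opposite sides. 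A cycle cover with $c$ connected components then admits exactly $2^{c-1}$ consistent cuts, a quantity that is odd if and only if $c=1$. Hence, summing over all cycle covers, the number of (cycle cover, consistent cut) pairs is congruent modulo~$2$ to the number of Hamiltonian cycles.

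Because a mod-$2$ count could annihilate a lone solution, I would first draw a random weight $\omega(e)\in\{1,\dots,2|E(G)|\}$ for every arc and, separately for each target total weight $W$, count modulo~$2$ only those pairs whose cycle cover has weight exactly~$W$. By the Isolation Lemma, with probability at least $1/2$ there is a weight $W$ at which the minimum-weight Hamiltonian cycle is unique, so the corresponding count is odd; conversely, if $G$ has no Hamiltonian cycle every count vanishes. Reporting \emph{yes} whenever some $W$ gives an odd count yields the claimed Monte Carlo guarantee with one-sided error.

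The computational core is to compute, for each fixed $W$, the parity of the number of (cycle cover, consistent cut) pairs of weight $W$ by a dynamic program over the supplied tree decomposition of the underlying undirected graph. For each vertex in the current bag the state records which side of the cut it is assigned to, together with how much of its in-degree and out-degree demand has already been met by the introduced arcs; tracking the in- and out-direction separately is exactly what forces the directed base to be $6$ rather than the $4$ of the undirected case, and one checks that six states per bag vertex suffice. The introduce, forget, and join operations are the standard ones, additionally indexed by the accumulated weight~$W$, and consistency of the cut is preserved automatically because arcs are only inserted between vertices on the same side. Since the table has $6^{\tw}n^{\Oh(1)}$ entries per weight class, both the running time and the space are $6^{\tw}n^{\Oh(1)}$, which accounts for the (exponential in $\tw$) space bound.

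The step requiring the most care is the combinatorial heart of Cut \& Count: verifying that a disconnected cycle cover contributes an \emph{even} number of consistent cuts while a connected one contributes an \emph{odd} number, and that the join operation counts each (solution, cut) pair exactly once without double-counting across the two children of a join node. The directed setting compounds this with the bookkeeping that distinguishes in- from out-degree when arcs are introduced, and keeping that encoding tight—so that invalid local degree profiles are rejected while still fitting into six states—is the crux of matching the stated $6^{\tw}n^{\Oh(1)}$ bound.
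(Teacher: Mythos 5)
This theorem is not proved in the paper at all---it is quoted verbatim from \cite{Cygan:connectivity}---and your sketch faithfully reconstructs that source's Cut \& Count argument (cycle covers paired with consistent cuts, the Isolation Lemma applied per weight class, and a dynamic program over the given tree decomposition), so you are following essentially the same route as the cited proof. The one point worth flagging is that the encoding you literally describe (cut side together with in- and out-demand) gives $2\times 2\times 2=8$ states per bag vertex rather than $6$: the base $6$ arises because a vertex's side must be stored only while exactly one of its two demands is met, so the profiles $(0,0)$ and $(1,1)$ carry no side, yielding $1+4+1=6$ states (and the join node then requires a fast convolution rather than naive pairing to stay within $6^{\tw}n^{\Oh(1)}$).
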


Consider a $(2,2)$-graph, i.e., a digraph with both out- and indegrees bounded by~$2$.
The undirected graph underlying a $(2,2)$-graph has maximum degree $4$, and hence it has pathwidth at most $n/3+o(n)$, according to Theorem~\ref{thm:pathwidth-degrees} below. 

\begin{theorem}[\cite{Fomin:pathwidth}]
	\label{thm:pathwidth-degrees}
	For every $\eps > 0$, there exists an~integer $N_\eps$ such that for every undirected graph~$G$
	on $n \geq N_\eps$ vertices the inequality
	\[
	\pw(G) \leq \tfrac{1}{6} n_3 + \tfrac{1}{3}n_4 + \tfrac{13}{50} n_5 + n_{\geq 6} + \eps n
	\]
	holds, where $n_k$ is the number of vertices of degree~$k$ in $G$, and $n_{\geq 6}$ is the number
	of vertices of degree at least~$6$.
	Moreover, a~path decomposition which witnesses the above inequality can be computed in polynomial time.
\end{theorem}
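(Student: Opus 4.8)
The plan is to derive the bound from a degree-sensitive balanced-separator theorem fed into the standard recursive construction of a path decomposition. The starting point is an auxiliary claim: assign to each vertex $v$ a weight $c(\deg v)$ with $c(3)=\tfrac{1}{6}$, $c(4)=\tfrac{1}{3}$, $c(5)=\tfrac{13}{50}$, $c(k)=1$ for $k\ge 6$, and $c(k)=0$ for $k\le 2$; then every subgraph $H$ of $G$ on $m$ vertices has a balanced vertex separator of size at most $\sum_{v\in V(H)}c(\deg_G v)+o(m)$. Degree-$\le 2$ vertices can be smoothed away at the outset, since subdividing or suppressing them changes the pathwidth by at most an additive constant, which justifies $c(k)=0$ there. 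Granting the claim, I would build the path decomposition by recursive bisection, keeping in the bag active at any point the union of the separators lying on the current root-to-leaf path of the bisection tree. Because each piece is split into pieces of (roughly) half the size, the separator weights decrease geometrically down the tree, so the accumulated bag weight stays linear in $n$ rather than logarithmic. Sweeping all the $o(\cdot)$ terms into $\eps n$ then gives $\pw(G)\le\sum_v c(\deg v)+\eps n$, and this is also where the hypothesis $n\ge N_\eps$ enters.

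The weights themselves come from genuinely different sources for each degree. The cubic coefficient $\tfrac{1}{6}$ is supplied by the Monien--Preis bisection theorem: every sufficiently large cubic graph has an edge bisection of width at most $(\tfrac{1}{6}+\eps)n$, which I would convert into a balanced vertex separator of comparable size. The coefficient $\tfrac{1}{3}$ for degree $4$ is then essentially free by a gadget reduction---split each degree-$4$ vertex into two adjacent cubic vertices, each inheriting two of the four incident edges; the resulting subcubic graph $G'$ has $G$ as a minor, so minor-monotonicity of pathwidth together with the cubic bound yields the $\tfrac{1}{6} n_3+\tfrac{1}{3} n_4$ contribution on $n_3+2n_4$ vertices. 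This uniform gadget does not explain $\tfrac{13}{50}$ (note that $6\cdot\tfrac{13}{50}$ is not an integer, so a degree-$5$ vertex cannot be routed through a whole number of cubic vertices), nor the degree-$\ge 6$ coefficient; those constants must be imported from degree-specific bisection-width estimates for maximum degrees $5$ and $6$ and plugged into the same weighted recursion.

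I expect the main obstacle to be establishing the degree-weighted separator bound with these exact constants, and in particular controlling the constant factor in the recursion. The naive union-of-separators layout loses a bounded multiplicative factor (the geometric sum of halved separator sizes is a constant times the top-level size), so a more careful linear-layout argument is needed to bring the accumulated width down to exactly $\sum_v c(\deg v)$ rather than a larger constant times it. The non-monotonicity $\tfrac{1}{3}>\tfrac{13}{50}$ is a further warning sign: the coefficients are not the output of one clean argument but the best value provable for each degree class by incomparable methods, so the real work splits into (i) quoting the correct bisection-width bound per bounded degree, (ii) converting an edge bisection into a balanced vertex separator without inflating the leading constant, and (iii) verifying that the bisections can be chosen balanced enough that the per-level accounting holds up to $o(m)$ at every level while the whole layout is produced in polynomial time. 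Since the deep ingredients are the Monien--Preis-type bisection theorems, I would cite them and concentrate the original effort on the weighted accounting and the balance control in the recursion.
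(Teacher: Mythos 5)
You should first note what you are being measured against: the paper contains no proof of this theorem at all --- it is imported verbatim (with a typo, see below) from \cite{Fomin:pathwidth}, whose argument in turn rests on the pathwidth bound for cubic graphs of \cite{FominHoie:pathwidth-cubic} and on the Monien--Preis bisection theorem. Measured against that source, your proposal gets the two easy ingredients right: the $\tfrac16$ does come from Monien--Preis via \cite{FominHoie:pathwidth-cubic}, and the $\tfrac13$ for degree~$4$ can indeed be obtained by splitting each degree-$4$ vertex into two adjacent cubic vertices and using minor-monotonicity of pathwidth. But the two steps you yourself flag as open are exactly the load-bearing ones, so what you have is a plan with the hard parts missing, not a proof. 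First, the conversion from balanced separators to a path decomposition \emph{without constant-factor loss} is precisely the content of the cubic-graph theorem: your union-of-separators recursion gives $\sum_i \tfrac16\cdot\tfrac{n}{2^i}\approx\tfrac n3$ for cubic graphs, a factor $2$ off, and you supply nothing in its place. The fix in the literature is that the bound is proved in weighted form from the start: every vertex incident to a cut edge has its degree reduced to at most $2$ in its half, hence leaves the weight function entirely, so each cut is paid for by a decrease of the weighted vertex count instead of being summed geometrically down the recursion tree. Without that accounting you can only recover constant multiples of the stated coefficients.

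Second, your conjecture that the degree-$5$ and degree-$\ge 6$ coefficients are "imported from degree-specific bisection-width estimates for maximum degrees $5$ and $6$" is not how the source obtains them (Monien--Preis concerns $3$- and $4$-regular graphs; no bisection theorem for degrees $5$ or $6$ is invoked). They come from a local vertex-deletion induction with the weight function $c$: deleting a vertex $v$ costs $1$ in pathwidth, and one deletes a degree-$5$ vertex having at least one neighbour of degree $3$ or $4$, whose worst case (four degree-$5$ neighbours, one of degree $3$ or $4$) yields the constraint $c_5+4(c_5-c_4)+\tfrac16\ge 1$, i.e.\ $c_5\ge\tfrac{13}{30}$; similarly deleting a degree-$6$ vertex with a neighbour of degree at most $5$ yields $c_6+5(c_6-c_5)+\tfrac1{10}\ge 1$, i.e.\ $c_6\ge\tfrac{23}{45}$ (which the paper here rounds up to $1$ for all degrees $\ge 6$). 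When no such vertex exists, the degree-$5$ (resp.\ degree-$6$) vertices form regular components, and one sacrifices a single deletion per large component --- a loss absorbed by the $\eps n$ term, which is also where $n\ge N_\eps$ is genuinely used. Incidentally, this numerology settles your "warning sign": the $\tfrac{13}{50}$ in the statement is a typo for $\tfrac{13}{30}$ (the paper's own Table~\ref{table:results-undirected} entries, e.g.\ $1.88^n$ for $\Delta=5$, are computed with $\tfrac{13}{30}$), so the coefficients are in fact monotone in the degree and are all outputs of one potential-function induction, not of incomparable methods. In short: right base case, wrong mechanism for the higher degrees, and the decisive lossless-recursion step is left unproven.
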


This, combined with Theorem~\ref{thm:hamilton-treewidth-directed} gives an algorithm for \probDirHam running in time $\Oh(6^{n/3+o(n)})=\Oh(1.82^n)$, much slower than in Corollary~\ref{cor:2-regular}.

Now consider a digraph of average outdegree degree $d$. 
Then, the underlying undirected graph has average degree $2d$, and we can bound its pathwidth using the following result.

\begin{theorem}[\cite{Kneis:pathwidth}]
	\label{thm:pathwidth-avgdeg}
	Let $G$ be an~$n$-vertex undirected graph of average degree $d$. Then
	\[
	\pw(G) \leq \tfrac{dn}{11.538} + o(n).
	\]
	Moreover, a~path decomposition which witnesses the above inequality can be computed in polynomial time.
\end{theorem}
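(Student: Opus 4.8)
The statement is quoted from \cite{Kneis:pathwidth}; the plan below reconstructs a proof of the kind one expects for such a sparsity-based pathwidth bound. Since the average degree is $d$, the graph has $m = dn/2$ edges, so the claimed inequality is equivalent to $\pw(G) \le m/5.769 + o(n)$ (note $11.538 = 2 \cdot 5.769$), and it is cleaner to argue in terms of the edge count $m$.

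First I would pass from pathwidth to a more combinatorial quantity and to a recursive construction. By Kinnersley's theorem $\pw(G)$ equals the vertex separation number, so it suffices to exhibit a good linear layout, and the natural way to build one is divide-and-conquer: find a small vertex set $S$ whose removal splits $G$ into parts carrying a bounded fraction of the edges, recurse on each part, and then \emph{concatenate} the resulting layouts linearly, inserting the separator $S$ into every bag of the block it governs. This concatenation (rather than the branching gluing one would use for treewidth) is what keeps the decomposition a path decomposition; it costs only $|S|$ extra per boundary bag. If along every root-to-leaf branch of the recursion the separators sum to at most $W$, then $\pw(G) \le W + o(n)$, the $o(n)$ absorbing lower-order glue terms and the cost of the recursion's shallow levels.

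The quantitative heart is a separator process tailored to sparse graphs, analysed amortized. Rather than extracting one global balanced separator, I would run a reduction loop that deletes vertices of degree at most $1$ for free, smooths out vertices of degree $2$, and branches only on the remaining higher-degree vertices; the degree-$2$ smoothing is exactly what keeps the effective instance sparse enough for a small constant to emerge. The right accounting is a potential $\Phi$ that weights vertices by their degree (low-degree vertices being cheap), and one charges each reduction or branching step against the drop in $\Phi$. Optimizing the degree weights so that the worst local configuration is exactly balanced against the potential drop is what produces the base constant $5.769$. Because every operation is a local, polynomial-time graph transformation, the recursion simultaneously outputs the witnessing path decomposition, which yields the ``moreover'' clause.

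The main obstacle is pinning down the \emph{exact} constant rather than merely a linear bound: a crude balanced-separator recursion already gives $\pw(G) = \Oh(m)$, but extracting the specific value $5.769$ requires choosing the degree weights in $\Phi$ precisely and checking that the worst-case branching vector, over all local configurations of the high-degree vertices that survive the reductions, is tight against that weighting. The delicate point is controlling how the recursion's imbalance interacts with the accumulation of separator vertices along a single branch, so that the per-branch separator sum telescopes to $m/5.769$ up to the $o(n)$ slack; verifying this balance simultaneously for all configurations, and confirming that smoothing never inflates $\Phi$, is where the technical work concentrates.
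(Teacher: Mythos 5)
First, a point of order: the paper never proves this statement at all --- Theorem~\ref{thm:pathwidth-avgdeg} is imported verbatim from \cite{Kneis:pathwidth} --- so your reconstruction has to be measured against the proof in that reference and the results it rests on (\cite{FominHoie:pathwidth-cubic}, \cite{Fomin:pathwidth}). Your skeleton is recognizably of the right family: recursive separation concatenated into a linear layout, free treatment of vertices of degree at most $2$, and per-vertex charging of high-degree vertices all occur in that lineage. But the proposal is missing the two ingredients that carry the entire quantitative content, and the step you explicitly defer (``choosing the degree weights in $\Phi$ precisely \ldots is where the technical work concentrates'') is not a verification that can be completed inside your framework --- it \emph{is} the theorem.

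Concretely, the known proof works as follows. The base case is graphs of maximum degree $3$: by the Monien--Preis theorem, every $3$-regular graph has a bisection of width at most $(\tfrac16+\eps)n$, and Fomin and H{\o}ie \cite{FominHoie:pathwidth-cubic} convert this, by essentially the recursive-separation-and-concatenation argument you sketch, into $\pw(G)\le \tfrac{n}{6}+\eps n$ for cubic graphs. That $1/6$ is a substantial external theorem about cubic graphs specifically; no degree-weighted potential or branching-vector optimization produces it, and in general sparse graphs your ``small vertex set $S$'' does not exist at all (constant-degree expanders have only linear-size balanced separators), so the constant in your recursion is exactly as good as the separator theorem you feed it --- which your proposal never supplies. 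On top of this base case one runs an induction that deletes a maximum-degree vertex, inserts it into every bag at cost $+1$, and charges the cost against the change in the degree profile of its neighbourhood; this yields per-vertex coefficients $\tfrac16,\ \tfrac13,\ \tfrac{13}{30}$ for degrees $3,4,5$ (as in the bound quoted here as Theorem~\ref{thm:pathwidth-degrees}), while vertices of degree at least $6$ may simply be deleted since $6\cdot\tfrac{13}{75}\ge 1$. The constant $5.769$ then falls out of an extremality computation, not a weight optimization: a degree-$d$ vertex costs $2c_d/d$ per incident edge, i.e.\ $\tfrac19,\ \tfrac16,\ \tfrac{13}{75}$ for $d=3,4,5$, so $5$-regular graphs are the worst case and $\pw(G)\le \tfrac{13}{75}m+o(n)=m/5.769+o(n)=dn/11.538+o(n)$. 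None of this --- the Monien--Preis input, the degree-$5$ extremality, the value $13/75$ --- is recoverable from your outline, so what you have is a genuine gap rather than an alternative route. (The ``moreover'' clause you do get right in spirit: the bisections and the greedy deletions are all polynomial-time computable.)
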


It follows that the algorithm from Theorem~\ref{thm:hamilton-treewidth-directed} applied on a digraph of average outdegree $d$ has running time of $\Oh(6^{2dn/11.538 + o(n)})=\Oh(1.365^{dn})$, which can be seen to be slower than, say, enumerating cycle covers (Corollary~\ref{cor:bregman-avgdeg}) for all values of $d$.

Now let us focus on the reduction approach. We can use the following two results.

\begin{theorem}[\cite{Cygan:pathwidth}]
\label{thm:hamilton-pathwidth}
  There is a Monte Carlo algorithm which, given a~graph~$G$
  with its path decompositions of width $\pw$,
  solves \probUndirHam for $G$ in time ${(2+\sqrt{2})^{\pw} n^{\Oh(1)}}$
  and exponential space.
  
  Moreover, if $G$ is subcubic, the running
  time is bounded by ${(1+\sqrt{2})^{\pw} n^{\Oh(1)}}$ instead.
\end{theorem}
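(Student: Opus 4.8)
The plan is to prove Theorem~\ref{thm:hamilton-pathwidth} by the \emph{Cut \& Count} method, which turns the detection of a Hamiltonian cycle into the computation, modulo two, of a suitably relaxed count that can be maintained by dynamic programming along the path decomposition. First I would set up the relaxation: instead of counting Hamiltonian cycles directly (a global connectivity constraint that is expensive to track), I count pairs $(X,(A,B))$, where $X\subseteq E(G)$ is a spanning subgraph in which every vertex has degree exactly $2$ (a cycle cover) and $(A,B)$ is a \emph{consistent cut} of $X$ — a bipartition of $V(G)$ with a fixed vertex $v_1$ pinned to $A$ and with no edge of $X$ crossing between $A$ and $B$. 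The crucial identity is that a cycle cover with $c$ connected components admits exactly $2^{c-1}$ consistent cuts pinned at $v_1$; hence, working modulo $2$, every disconnected cycle cover cancels and only the connected ones — precisely the Hamiltonian cycles — survive.

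Second, to upgrade ``parity of the count'' into a reliable existence test, I would invoke the Isolation Lemma: draw random integer weights on the edges, introduce a formal variable $z$, and maintain the generating polynomial $\sum z^{w(X)}$ with coefficients taken modulo two over all such cut-objects. With high probability a minimum-weight Hamiltonian cycle, if one exists, is unique, so some coefficient of the polynomial is odd exactly when $G$ is Hamiltonian; this is the source of the Monte Carlo guarantee. Third, I would compute this polynomial by dynamic programming over a nice path decomposition. For each bag I store, for every vertex currently in the bag, its partial degree (values $0$ or $1$; a vertex that reaches degree $2$ is \emph{finished} and can be dropped from the state) together with the side $A$ or $B$ to which the cut assigns it. This gives four live states per vertex, and the transitions for introduce-vertex, introduce-edge and forget-vertex are all local and run in time polynomial in the table size, yielding a first bound of $\Oh(4^{\pw}n^{\Oh(1)})$.

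The hard part — and the reason the claimed base is $2+\sqrt2\approx 3.42$ rather than $4$ — is to shave the cut information that is not yet committed. A degree-$0$ vertex has chosen no solution edge, so intuitively its side is free; but one cannot simply drop it from the state, because the consistent-cut count that drives the parity argument depends on the exact number of components, and two free vertices joined by an edge create a new component whose factor of two must still be accounted for. Reconciling these two pressures requires a careful amortized analysis of how the state space evolves across consecutive bags, and the growth rate that falls out is the dominant root of a quadratic, namely $2+\sqrt2$ (the root of $x^2-4x+2$) in the general case. I expect this balancing to be the main obstacle.

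Finally, for the subcubic case I would exploit that every vertex has at most one incident edge outside the solution, which sharply restricts the introduce-edge transitions and the reachable combinations of states; pushing the same amortized analysis through these extra constraints lowers the base to $1+\sqrt2$ (the root of $x^2-2x-1$). Since the dynamic-programming table has size exponential in $\pw$, the algorithm uses exponential space, matching the statement of the theorem.
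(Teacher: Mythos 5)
Your Cut \& Count setup is correct as far as it goes, but it only yields $\Oh(4^{\pw} n^{\Oh(1)})$, and the step you defer --- getting from base $4$ down to $2+\sqrt2$ --- is not a refinement of Cut \& Count at all: it is the entire content of the cited theorem, and your proposal contains no argument for it. Observing that $2+\sqrt2$ is a root of $x^2-4x+2$ (and $1+\sqrt2$ of $x^2-2x-1$) is numerology, not a derivation; no amortized analysis of cut-side bookkeeping is known to beat base $4$, which is precisely why the authors of \cite{Cygan:pathwidth} invented a different technique. (A smaller slip: you cannot simply ``drop'' a vertex once it reaches degree $2$ --- you must remember it is finished so that no third edge is ever added to it; the correct four states per vertex in Cut \& Count are degree $0$, degree $2$, and degree $1$ with side $A$ or $B$, the side of an untouched vertex being fixed lazily when its first solution edge appears.)

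The actual proof in \cite{Cygan:pathwidth} abandons cuts altogether. A partial solution at a bag is classified by the degree ($0$, $1$ or $2$) of each bag vertex together with the perfect matching on the degree-$1$ vertices recording which of them are endpoints of the same path; the key lemma is that the \emph{matchings connectivity matrix} $\mathcal{M}_t$ --- rows and columns indexed by perfect matchings of $K_t$, with entry $1$ iff the union of the two matchings is a single Hamiltonian cycle --- has rank exactly $2^{t/2-1}$ over $\mathrm{GF}(2)$, with an explicit basis. Hence, instead of storing all matchings, one keeps for every degree assignment only the coefficients of the (mod $2$, randomly weighted via the Isolation Lemma, whence Monte Carlo) partial-solution count with respect to this basis, i.e.\ a factor of only $2^{k/2}$ for a bag with $k$ degree-$1$ vertices. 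The table size and running time are then bounded by
\[
\sum_{k} \binom{\pw}{k}\, 2^{\pw-k}\, 2^{k/2}\, n^{\Oh(1)} \;=\; \bigl(2+\sqrt2\bigr)^{\pw} n^{\Oh(1)},
\]
each degree-$1$ vertex costing $\sqrt2$ and each other vertex costing $2$ for the choice between degrees $0$ and $2$. In subcubic graphs that binary choice disappears (knowing how many of the at most three incident edges have already been introduced, at most one of the degrees $0$ and $2$ remains feasible), which replaces the factor $2^{\pw-k}$ by $1$ and gives $(1+\sqrt2)^{\pw}$. So your plan would have to be replaced by, not patched into, this rank-based argument.
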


\begin{theorem}[\cite{Bodlaender:TSP}]
\label{thm:tsp-pathwidth}
  There is an algorithm which, given a~graph~$G$
  with its path decomposition of width $\pw$,
  solves \probTSP for $G$ in time $(2 + 2^{\omega/2})^{\pw} n^{\Oh(1)}$
  and exponential space, where $\omega$ is the matrix multiplication exponent.
  
  Moreover, if $G$ is subcubic, the running
  time is bounded by ${(1 + 2^{\omega/2}) n^{\Oh(1)}}$ instead.
\end{theorem}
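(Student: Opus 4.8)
Since this is a result from the literature, I would reproduce the \emph{rank-based} dynamic programming of Bodlaender, Cygan, Kratsch and Nederlof. The plan is to run a left-to-right dynamic program over a nice path decomposition of $G$ whose bags have size $\pw+1$ and which is refined into introduce-vertex, introduce-edge and forget-vertex operations. For a bag $X$ and a partial solution (a set of already chosen edges among the processed vertices), the natural signature records, for every $v\in X$, its \emph{degree demand} $\deg(v)\in\{0,1,2\}$ in the partial solution, together with the connectivity pattern among the endpoints $\deg^{-1}(1)$, encoded as a perfect matching that pairs up the two ends of each already-built path. A vertex is forgotten only once its demand reaches $2$. For every signature we keep the minimum total weight of a partial solution realising it, which is exactly what \probTSP needs.

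The difficulty is that the number of matchings on $t$ endpoints is $(t-1)!!$, which is superexponential in $\pw$, so storing all of them is hopeless. The key step is to invoke the central algebraic fact behind the method: the \emph{matchings connectivity matrix} $H_t$, whose rows and columns are indexed by perfect matchings on $t$ marked points and whose $(M_1,M_2)$ entry over $\mathrm{GF}(2)$ is $1$ iff $M_1\cup M_2$ is a single cycle, has rank exactly $2^{t/2-1}$. Consequently, only $2^{t/2-1}$ matchings are needed to \emph{represent} all the connectivity information that can ever influence whether two partial solutions glue into one Hamiltonian cycle. I would therefore maintain, after each operation, a \emph{representative set} of at most $2^{t/2}$ signatures carrying their minimum weights, and implement a \texttt{reduce} routine that shrinks the current candidate set back to a basis of the relevant matrix by Gaussian elimination. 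For the optimisation version this elimination acts on matrices of dimension $2^{t/2}$, and batching it with fast matrix multiplication makes the cost of processing the $t$ demand-$1$ vertices of a bag scale like $2^{\omega t/2}$ rather than $2^{t/2}$.

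The running time then follows by summing the work over the possible demand vectors of a bag of size $k=\pw+1$: each of the $k-t$ vertices of demand $0$ or $2$ contributes a factor $2$ (two possible demands), while the $t$ endpoints contribute the matrix-multiplication factor $2^{\omega t/2}$, so the binomial theorem gives
\[
\sum_{t=0}^{k}\binom{k}{t}\,2^{\,k-t}\,2^{\omega t/2}=\bigl(2+2^{\omega/2}\bigr)^{k},
\]
which is the claimed $(2+2^{\omega/2})^{\pw}n^{\Oh(1)}$ bound; the representative tables are what force exponential space. For the subcubic refinement I would exploit that every vertex is incident to at most three edges of $G$, which restricts the reachable demand combinations and removes one of the two free demand-choices per vertex, replacing the additive $2$ by $1$ and yielding $(1+2^{\omega/2})^{\pw}n^{\Oh(1)}$.

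I expect the main obstacle to be making the weighted optimisation compatible with the $\mathrm{GF}(2)$ rank bound while preserving the matrix-multiplication speed-up: the rank statement is algebraic and count-oriented, whereas \probTSP asks for a minimum, so the \texttt{reduce} step must retain, for each basis coordinate, a genuine minimum-weight representative and must commute correctly with the introduce and forget operations. Verifying that the representative set stays valid across all three operation types (i.e. that discarding a signature never destroys an optimal Hamiltonian cycle), and that the fast-matrix-multiplication implementation of \texttt{reduce} respects the weights, is the delicate part; the demand bookkeeping and the final binomial summation are routine by comparison.
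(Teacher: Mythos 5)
The paper states Theorem~\ref{thm:tsp-pathwidth} as a quoted result of Bodlaender, Cygan, Kratsch and Nederlof \cite{Bodlaender:TSP} and gives no proof of its own, so there is nothing internal to compare against; your sketch correctly reproduces the rank-based approach of that cited work (with the $2^{t/2-1}$ rank bound for the matchings connectivity matrix taken from \cite{Cygan:pathwidth}), namely the degree-demand-plus-matching signatures, the weight-aware Gaussian-elimination \emph{reduce} accelerated by fast matrix multiplication, and the binomial summation $\sum_t \binom{k}{t} 2^{k-t} 2^{\omega t/2} = (2+2^{\omega/2})^k$. One minor point in your favour: the subcubic bound as printed in the statement is missing its exponent and should read $(1+2^{\omega/2})^{\pw} n^{\Oh(1)}$, which is exactly what your argument (at most one of the demands $0$ and $2$ is reachable and extendable per vertex, given how many of its at most three incident edges have been introduced) yields.
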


Theorems~\ref{thm:hamilton-pathwidth} and~\ref{thm:tsp-pathwidth} combined with Theorem~\ref{thm:pathwidth-degrees}
give, in particular, $\Oh(1.16^n)$ and $\Oh(1.22^n)$ algorithms for \probUndirHam and \probTSP in subcubic graphs, respectively.
For undirected graphs of average degree at most~$d$ we can combine the above theorems with with Theorem~\ref{thm:pathwidth-avgdeg} to obtain algorithms in time $\Oh(1.12^{dn})$ and $\Oh(1.14^{dn})$ for \probUndirHam and \probTSP, respectively.

Now we turn to digraphs again. First consider $(2,2)$-graphs, i.e., digraphs with out- and indegrees bounded by~$2$.
Let $(G, w)$ be an~instance of \probATSP, where $G$ is such a digraph.
We use Lemma~\ref{lem:reduce2} to obtain an~equivalent instance $(\widetilde{G}, \widetilde{w})$
of (undirected) \probTSP. From the construction of $\widetilde{G}$ we see that $\widetilde{G}$
has~$3n$ vertices of which at most~$2n$ have degree~$3$, and the remaining ones have degree~$2$.
Hence, by Theorem~\ref{thm:pathwidth-degrees} we have $\pw(\widetilde{G}) \leq \frac{2n}{6} + o(3n) = \frac{n}{3} + o(n)$.
Therefore, Theorems~\ref{thm:hamilton-pathwidth} and~\ref{thm:tsp-pathwidth} give respectively the algorithms
running in time $(1+\sqrt{2})^{n/3 + o(n)}$ and $(2+2^{\omega/2})^{n/3 + o(n)}$
for \probDirHam and \probATSP in~$G$.
Both results are worse than the running time of the algorithm from Corollary~\ref{cor:2-regular}. 

Again, consider digraphs with bounded average outdegree $d$.
Let $(G, w)$ be an instance of \probATSP, where $G$ is such a~digraph.
We use Lemma~\ref{lem:reduce2} to obtain an~equivalent instance~$(\widetilde{G}, \widetilde{w})$ of \probTSP.
Then, $\widetilde{G}$ has $2n$~vertices of average degree~$d+1$, and $n$~vertices of degree~$2$.
The latter ones can increase the pathwidth only by~$1$ in total, hence
${\pw(\widetilde{G}) \leq \frac{2(d+1)n}{11.538} + o(3n) = \frac{(d+1)n}{5.769}} + o(n)$, and consequently,
\probDirHam and \probTSP in $G$ can be solved in time ${(2+\sqrt{2})^{(d+1)n/5.769 + o(n)}}$
and $(2+2^{\omega/2})^{(d+1)n/5.769 + o(n)}$, respectively.
Both results are worse than the algorithm enumerating cycle covers described in Subsection~\ref{subsec:reductions-bregman}.

\section{Polynomial space algorithm}
\label{sec:avgdeg}
This section is devoted to the proof of Theorem~\ref{thm:avgdeg}.
We begin with introducing some additional notions,
then we provide a~branching algorithm which will be later used
as a~subroutine,
and finally we describe and analyse an~algorithm
for digraphs of average outdegree at most~$d$.

\subsection{Preliminaries}
\heading{Interfaces and switching walks}
Let $G$ be a~directed graph (digraph).
For a vertex $v$ a set $I^\inn_v$ of all incoming edges to $v$
or a set $I^\outt_v$ of all outgoing edges from $v$ will be called
an \emph{interface} of $v$.
We define the type of an interface of $v$ so that $\type(I^\inn_v)=\inn$ and $\type(I^\outt_v)=\outt$.

Consider a sequence of distinct edges $\pi=e_1,\ldots,e_k$ in $G$ such that if we forget about the orientation of edges, then we get a walk $v_1,\ldots,v_{k+1}$ in the underlying undirected graph, where for $i=1,\ldots, k$ edge $e_i$ is an orientation of $v_iv_{i+1}$.
Assume additionally that for every $i=2,\ldots,k$ either both edges $e_{i-1}$ and $e_i$ enter $v_i$ or both leave $v_i$, in other words, the orientation of edges on the walk alternates.
Now, let $I_1, \ldots, I_{k+1}$ be the consecutive interfaces visited by $\pi$, i.e., for every $j=1,\ldots,k+1$ we have that $I_j$ is an interface of $v_j$ and for every $j=1,\ldots,k$, we have $e_j\in I_j\cap I_{j+1}$.
If $|I_1|, |I_k| > 2$ and $|I_j| = 2$, for $j = 2,\ldots, k-1$,
the~sequence~$\pi$ will be called a~\emph{switching walk}.
Similarly, if $|I_j| = 2$ for $j= 1, \ldots, k$, and $v_1=v_{k+1}$, i.e., the walk $v_1,\ldots,v_{k+1}$ is closed, then $\pi$ will be called a~\emph{switching circuit}.
In both cases, {\em length} of~$\pi$ is defined as~$k$.
The sequence $v_1,\ldots,v_{k+1}$ is called the {\em vertex sequence} of $\pi$.
Abusing the notation slightly, we will refer to~$\pi$ as a~set, when it is convenient.
The~motivation for introducing the notions of switching walks and circuits is given by the~following lemma.

\begin{lemma}
\label{lem:switching}
Let $\pi = \{e_1, \ldots, e_k\}$ be a~switching walk or a~switching circuit in a~digraph~$G$.
Let $H \subseteq E(G)$ be a~Hamiltonian cycle in~$G$.
Then, $H \cap \pi = \{e_{2i-1} \mid i = 1, \ldots, \lfloor \frac{k + 1}{2} \rfloor\}$,
or $H \cap \pi = \{e_{2i} \mid i = 1, \ldots, \lfloor \frac{k}{2} \rfloor \}$.
\end{lemma}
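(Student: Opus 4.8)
The plan is to exploit the single structural fact about Hamiltonian cycles in digraphs: since $H$ enters and leaves every vertex exactly once, it contains exactly one incoming and exactly one outgoing edge at each vertex. Hence for \emph{every} interface $I$ of any vertex we have $|H\cap I|=1$, because an interface is by definition the set of all incoming edges or the set of all outgoing edges of a vertex. I would combine this uniform constraint with the size restrictions that the definition places on the interfaces $I_1,\ldots,I_{k+1}$ visited by $\pi$.

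First I would pin down the size-$2$ interfaces. For an index $j$ with $|I_j|=2$, the edge $e_{j-1}$ lies in $I_{j-1}\cap I_j$ and the edge $e_j$ lies in $I_j\cap I_{j+1}$, so both $e_{j-1}$ and $e_j$ belong to $I_j$. As the edges of $\pi$ are distinct and $|I_j|=2$, this forces $I_j=\{e_{j-1},e_j\}$. Applying $|H\cap I_j|=1$ then yields the key local constraint: \emph{exactly one of $e_{j-1},e_j$ belongs to $H$}. Thus along $\pi$ the membership in $H$ alternates at every size-$2$ interface.

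For a switching walk, the size-$2$ interfaces are exactly those of the internal vertices $v_2,\ldots,v_k$, so the constraint above applies to every consecutive pair $(e_{j-1},e_j)$ with $j=2,\ldots,k$. These $k-1$ constraints chain together, so the whole membership pattern is determined once we know whether $e_1\in H$. If $e_1\in H$ then $e_2\notin H$, $e_3\in H$, and so on, giving $H\cap\pi=\{e_{2i-1}\mid i=1,\ldots,\lfloor\frac{k+1}{2}\rfloor\}$; if $e_1\notin H$ the complementary alternation gives $H\cap\pi=\{e_{2i}\mid i=1,\ldots,\lfloor\frac{k}{2}\rfloor\}$. A routine index check confirms these are exactly the two sets in the statement.

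The circuit case is where the one real subtlety lies, and I expect it to be the main point to get right. Here all interfaces along $\pi$ have size $2$ and the vertex sequence closes up, $v_1=v_{k+1}$. I would first argue that $k$ must be even: the alternation of orientations forces the edges $e_1,e_2,\ldots$ to alternate between ``forward'' and ``backward'' along the walk, and consistency of this alternation when the walk returns to $v_1$ requires an even number of edges. The same consistency check shows that $I_1$ and $I_{k+1}$ are the \emph{same} interface of $v_1$, so that $I_1=\{e_k,e_1\}$ and the local constraint also holds for the wrap-around pair. The cyclic chain of constraints is therefore consistent — this is precisely where parity is needed — and as in the walk case it yields the alternating pattern, i.e.\ one of the two claimed sets.
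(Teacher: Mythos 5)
Your proof is correct and follows essentially the same route as the paper's: both hinge on the observation that a Hamiltonian cycle contains exactly one edge of every interface, so each internal size-$2$ interface $I_j=\{e_{j-1},e_j\}$ forces membership in $H$ to alternate along $\pi$, leaving only the two stated possibilities. The paper dismisses the circuit case with ``the proof is analogous,'' whereas you additionally verify the wrap-around parity and that $I_1=I_{k+1}$; this extra care is sound (and arguably more complete than the paper), though the stated conclusion already follows from the alternation constraints at the internal interfaces alone.
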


\begin{proof}
  Let us assume that $\pi$ is a~switching walk. (For a~switching circuit the~proof is analogous.)
  Consider two consecutive edges $e_i, e_{i+1} \in \pi$.
  By the definition of a~switching walk, there is a~vertex $v$ with an~interface $I$ of size $2$
  such that $I = \{e_i, e_{i+1}\}$.
  Since the~cycle $H$ passes through~$v$, we obtain that $H$ must contain exactly one of the edges $e_i$ and $e_{i+1}$,
  and the lemma easily follows.
\end{proof}

In some cases it is convenient to study switching walks and circuits in the language of an auxiliary bipartite graph.
Let $V^\outt=\{v^\outt \mid v\in V(G)\}$ and $V^\inn=\{v^\inn \mid v\in V(G)\}$.
The {\em interface graph of $G$} is the bipartite graph $I_G$ such that $V(I_G)=V^\outt\cup V^\inn$ and
$E(I_G)=\{u^\outt v^\inn\mid (u,v)\in E(G)\}$.
Clearly, there is a one-to-one correspondence between interfaces in $G$ and vertices of $I_G$, and the degree of a vertex in $I_G$ is the size of the corresponding interface.
Moreover, if $\pi=e_1,\ldots,e_k$ is a switching walk in $G$ with a vertex sequence  $v_1,\ldots,v_{k+1}$ and interface sequence $I_1,\ldots,I_{k+1}$, then $\pi$ corresponds to a simple path $I(\pi)=v_1^{\type(I_1)},\ldots,v_{k+1}^{\type(I_{k+1})}$ in $G$ with endpoints of degree larger than 2, and all inner vertices of degree 2.
Similarly, a switching circuit $\pi$ corresponds to a simple cycle $I(\pi)$ in $I_G$ with all vertices of degree 2 in $I_G$, i.e., $I(\pi)$ forms a connected component in $I_G$.
Observe that both in the case of path and cycle above, the edges $I(\pi)$ are exactly the edges of $I_G$ corresponding to the edges of $\pi$. Using the equivalence described in this paragraph, the following lemma is immediate.

\begin{lemma}
	\label{lem:partition}
	Edges of every digraph can be uniquely partitioned into switching walks and circuits.
	Moreover, the partition can be computed in linear time.
\end{lemma}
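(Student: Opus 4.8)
The plan is to use the one-to-one correspondence with the interface graph $I_G$ that was set up in the paragraph immediately preceding the statement, which reduces the claim to a structural fact about bipartite graphs. First I would recall that, by that correspondence, switching walks in $G$ correspond exactly to maximal simple paths in $I_G$ whose endpoints have degree $\neq 2$ and whose interior vertices all have degree $2$, while switching circuits correspond exactly to the connected components of $I_G$ that are cycles with every vertex of degree $2$. Crucially, the edges of $I(\pi)$ are precisely the images of the edges of $\pi$, so a partition of $E(G)$ into switching walks and circuits is the same object as a partition of $E(I_G)$ into such paths and degree-$2$ cycles.

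Thus the heart of the proof is the following elementary claim about an arbitrary (multi)graph $I_G$: its edge set partitions uniquely into (i) maximal paths through degree-$2$ vertices connecting two vertices of degree $\neq 2$, and (ii) the edge sets of connected components that are single cycles all of whose vertices have degree $2$. To see existence, I would contract or suppress all degree-$2$ vertices and trace, for each edge, the unique maximal run of consecutive degree-$2$ vertices it belongs to: starting from any edge, extend it in both directions as long as the current endpoint has degree exactly $2$ (the alternating-orientation condition in $G$ is automatic, since passing through an interface of size $2$ forces the two incident edges to share that interface). This extension terminates either when both ends reach a vertex of degree $\neq 2$, yielding a switching walk, or when it closes up into a cycle of degree-$2$ vertices, yielding a switching circuit.

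For uniqueness I would argue that the partition is forced: every edge $e$ lies in exactly one maximal run of the above kind, because the neighbours of $e$ along a degree-$2$ vertex are uniquely determined. Concretely, at a degree-$2$ vertex the two incident edges are paired with no choice, so the block containing $e$ is obtained deterministically by the closure process, independent of any ordering; hence two blocks containing a common edge must coincide, and distinct blocks are edge-disjoint and cover $E(I_G)$. Translating back through $I(\cdot)$ gives the unique partition of $E(G)$ into switching walks and circuits. The main (very mild) obstacle is bookkeeping the boundary conditions in the definition of a switching walk, namely that the endpoints have interface size strictly greater than $2$ while all interior interfaces have size exactly $2$; I would handle this by noting these are exactly the vertices of degree $>2$ and degree $2$ in $I_G$, matching the definition verbatim.

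For the algorithmic claim, I would observe that the closure process is a single linear scan: build $I_G$ in linear time, then process each edge once, walking along degree-$2$ vertices and marking edges as they are assigned to a block, so that each edge is touched a constant number of times. This gives the partition in time $O(|V(G)|+|E(G)|)$, proving the moreover part.
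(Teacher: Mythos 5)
Your proposal takes essentially the same route as the paper: both pass to the interface graph $I_G$ via the correspondence set up before the lemma, reduce the claim to the fact that the edges of $I_G$ decompose uniquely into degree-$2$ cycles and maximal paths with degree-$2$ interior vertices, and translate back through $I(\cdot)$. The only difference is that you spell out the extension/closure argument and the linear-time scan that the paper dismisses as ``clear'' and ``straightforward,'' so your write-up is a correct, slightly more detailed version of the paper's own proof.
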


\begin{proof}
	Let $G$ be a digraph.
	Recall that by the definition of $I_G$, there is a one-to-one correspondence between edges of $G$ and edges of $I_G$.
	It is clear that edges of $I_G$ can be uniquely partitioned into (1) cycles with all vertices of degree 2 and (2) paths with both endpoints of degree at least 3 and all inner vertices of degree 2.
	The corresponding switching circuits and switching walks form the desired partition of $E(G)$.
	An algorithm which constructs the partition is straightforward.
\end{proof}

\heading{Another view on Corollary~\ref{cor:2-regular}}
The run of the algorithm from Corollary~\ref{cor:2-regular}
can be interpreted using the introduced notions as follows.
We apply Lemma~\ref{lem:partition} to partition~$E(G)$ into switching walks and circuits.
If there is a switching circuit~$\pi$ of length at least~$6$,
we guess the~intersection of~$\pi$ with a~hypothetical Hamiltonian cycle in $G$.
By~Lemma~\ref{lem:switching} there are two possibilities for this~intersection
(in both cases it consists of at least~$3$ edges of $G$).
We consider both cases by marking chosen edges as forced, and recursively calling
on the remaining graph.
If there is no switching circuit of length at least~$6$, it turns out that the remaining instance
can be solved by finding a~minimum spanning tree in an~auxiliary graph (see \cite{Eppstein:cubic} for the details).
Hence, the size of the recursion tree of this algorithm can be bounded by $2^{n/3}$.

\subsection{Branching subroutine}
\label{subsec:avgdeg-branching}

Let us consider a~digraph~$G$.
By $t_i(G)$ we will denote the number of vertices of~$G$ with outdegree equal to~$i$.
Let~$k = n - t_1(G)$ be the number of vertices of $G$ with outdegree at least~$2$,
and let $s_1, \ldots, s_k$ be the sequence of these outdegrees.
Then, let us denote the sum ${\sum_{i=1}^k (s_i - 2)}$ by~$S(G)$.
An~analogous sum for indegrees will be denoted by $S^-(G)$.
Note that if $G$ has no vertex of out- or indegree~$1$, then by the handshaking lemma
$S(G) = S^-(G)$.

\begin{theorem}
\label{thm:branching}
  \probATSP can be solved in time $\O^*(2^{(n - t_1(G))/3\ +\ \beta S(G)})$
  and polynomial space,
  where $\beta = \log_2 3 - 1 < 0.585$.
\end{theorem}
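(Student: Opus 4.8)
The plan is to give a recursive branching algorithm and to analyse its search tree with the potential
\[
  \mu(G) \;=\; \tfrac{1}{3}\bigl(n - t_1(G)\bigr) \;+\; \beta\, S(G),
\]
proving that the tree has at most $2^{\mu(G)}$ leaves, each solved in polynomial time and space; since the work at every node is polynomial this gives $\Ohstar(2^{\mu(G)})$. During the recursion I maintain a set of forced edges and reject as soon as a vertex reaches outdegree~$0$; no vertex is contracted, so $n-t_1(G)$ always counts the vertices of outdegree at least~$2$ and $S(G)$ their total outdegree excess. By Lemma~\ref{lem:partition} I partition $E(G)$ into switching walks and circuits in linear time, recomputing the partition after each step. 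In the interface-graph language a vertex of outdegree (resp.\ indegree) at least~$3$ is exactly an out-interface (resp.\ in-interface) of size at least~$3$, i.e.\ an endpoint of a switching walk, so whenever $S(G)>0$ or $S^-(G)>0$ a switching walk exists.

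The branching rule is: if some interface has size larger than~$2$, take any switching walk $\pi=e_1,\dots,e_k$ and split into the two cases of Lemma~\ref{lem:switching}, namely $H\cap\pi=\{e_1,e_3,\dots\}$ or $H\cap\pi=\{e_2,e_4,\dots\}$. In each branch I mark the chosen edges forced; whenever an edge $e=(u,w)$ becomes forced I resolve both interfaces it meets, deleting all other out-edges of~$u$ and all other in-edges of~$w$, and whenever an edge is declared unused I simply delete it. If no interface has size larger than~$2$, then all out- and indegrees are at most~$2$, so $S(G)=0$ and $\mu(G)=\tfrac13(n-t_1(G))$; I solve this base case by the reduction of Lemma~\ref{lem:reduce} to a subcubic \probForcedTSP instance and a single call to Theorem~\ref{thm:eppstein}. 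The forced set $F$ then contains the perfect matching ($n$ edges) together with the forced out-edge of each of the $t_1(G)$ outdegree-$1$ vertices, so $|F|\ge n+t_1(G)$ and the running time $2^{(2n-|F|)/3}n^{\Oh(1)}$ is at most $2^{(n-t_1(G))/3}n^{\Oh(1)}=2^{\mu(G)}n^{\Oh(1)}$; switching circuits are thus dealt with exactly as in Corollary~\ref{cor:2-regular}.

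The heart of the argument is verifying $2^{\mu(G)}\ge 2^{\mu(G_1)}+2^{\mu(G_2)}$ at every branching. The crucial point is that deleting the other in-edges at a resolved in-interface lowers the outdegrees of the vertices feeding it, and every such neighbour of outdegree exactly~$2$ then leaves the $n-t_1$ count. The extremal case is a length-$1$ walk $\pi=e_1=(v_1,v_2)$ with $\outdeg(v_1)=3$ and $\indeg(v_2)=3$ whose two other in-neighbours of $v_2$ have outdegree~$2$: the branch $e_1\notin H$ only shrinks the size-$3$ out-interface of $v_1$ to size~$2$, so $S$ drops by~$1$ and $\mu$ by~$\beta$; the branch $e_1\in H$ resolves $v_1$ (removing it from the $n-t_1$ count and dropping $S$ by~$1$) and, by resolving the in-interface of $v_2$, forces its two outdegree-$2$ neighbours down to outdegree~$1$, so $n-t_1$ drops by~$3$ and $\mu$ by~$\beta+1$. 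This yields $2^{-\beta}+2^{-(\beta+1)}=\tfrac32\,2^{-\beta}$, equal to~$1$ exactly when $2^{\beta}=\tfrac32$, i.e.\ $\beta=\log_2 3-1$, which pins down the constant. Every other configuration is at least as good: deleting an edge always lowers the outdegree of its tail, so a neighbour of outdegree at least~$3$ gives a drop of $\beta>\tfrac13$ in~$S$ rather than $\tfrac13$ in $n-t_1$; an outdegree-$1$ in-neighbour would force an edge conflicting with the resolved interface and prunes the branch; a walk with in-type endpoints likewise forces outdegree-$2$ neighbours down, giving a drop of at least~$1$ in each branch; and longer walks or larger interfaces only delete more edges and resolve more degree-$2$ interfaces. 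The main obstacle is precisely this bookkeeping: one must show that in \emph{every} branch the guaranteed decrease of $\tfrac13(n-t_1)+\beta S$ is at least what the extremal length-$1$ case prescribes, fully accounting for the ripple effect of edge deletions on the outdegrees of vertices outside the chosen walk. Correctness then follows from Lemma~\ref{lem:switching}, and the polynomial-space bound from running the recursion by depth-first search with the polynomial-space base case.
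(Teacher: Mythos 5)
Your strategy is recognizably the same as the paper's (the potential $\tfrac13\bigl(n-t_1(G)\bigr)+\beta S(G)$, branching via Lemma~\ref{lem:switching} down to the $(2,2)$-case, finishing through Lemma~\ref{lem:reduce} and Theorem~\ref{thm:eppstein}), and your extremal length-$1$ walk pins down the same constant $\beta=\log_23-1$ that the paper extracts from $\log_2 s\le 1+\beta(s-2)$ at $s=3$. However, there is a concrete gap: your claim that ``whenever $S(G)>0$ or $S^-(G)>0$ a switching walk exists'' is false unless interfaces of size $0$ and $1$ are eliminated first. By definition a switching walk needs \emph{both} endpoint interfaces of size at least $3$, and an alternating walk leaving a large interface may instead die at an interface of size $1$. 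Take the digraph on $\{u,a,b,c\}$ with the six edges $(u,x)$ and $(x,u)$ for $x\in\{a,b,c\}$: here $\outdeg(u)=\indeg(u)=3$, so your base case does not apply, yet there is no switching walk at all, because every alternating walk leaving an interface of $u$ immediately hits an in- or out-interface of size $1$. Your only remaining rules --- rejection at outdegree $0$ and marking edges forced \emph{via branching} --- never fire, so the algorithm is stuck. The paper avoids exactly this with rules $(b)$ and $(c)$ of Pseudocode~\ref{alg:atsp-branching}: reject on an empty interface, and when an interface has size $1$, force its edge and resolve, applied exhaustively. You need the same propagation rule; once it is added your existence claim becomes true (after propagation no alternating walk from a size-$\ge3$ interface can end at a size-$1$ interface), the rule only deletes edges so it never increases your potential, and your per-branch drops are unaffected.

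The second weakness is that the decisive part of the running-time analysis is asserted rather than proven: you verify the length-$1$ out--in walk and list plausible reasons why all other configurations are at least as good, but then concede that ``one must show'' the claimed decrease in every branch, ``fully accounting for the ripple effect.'' That accounting is precisely where the paper's proof does its work (Claims~\ref{claim:alg-potential}--\ref{claim:alg-switching}, Claim~\ref{lem:potential}, and the case-$(e)$/case-$(g)$ calculations), and it involves details your sketch does not address, e.g.\ ruling out double counting when the affected neighbours coincide, walks whose two endpoints lie in the \emph{same} interface (the paper's rule $(d)$), and arbitrary endpoint sizes and walk lengths. For what it is worth, I checked the recurrences for your scheme (two-way branching on an arbitrary switching walk, deletions instead of contractions, which is genuinely a little different from the paper's $s$-way branch on a large out-interface) and they do close with $\beta=\log_23-1$, with a second tight configuration at in--in walks of length~$2$ giving $2\cdot 2^{-1}=1$; so the approach is viable, but as written the central verification is an open obligation, not a proof. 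Finally, your opening accounting (``at most $2^{\mu(G)}$ leaves, each solved in polynomial time'') contradicts your own base case, which calls Eppstein's exponential-time algorithm; the bound still follows, but from the recurrence $T(G)\le\sum_i T(G_i)+n^{\Oh(1)}$ with $T(\text{base})=2^{\mu}n^{\Oh(1)}$, not from counting polynomial-time leaves.
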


\begin{proof}
{
\newcommand{\rightpar}{)}
\newcommand{\FuncName}[1]{{\sf #1}}
\newcommand{\FBranch}{\FuncName{AtspBranching}}
\newcommand{\FEdge}{\FuncName{AtspForcedEdge}}

\newcommand{\varResult}{{\sf result}}
\newcommand{\varWeight}{{\sf weight}}

\begin{algorithm}[t]
	\caption{$\FBranch(G, \varWeight)$}
	\label{alg:atsp-branching}
	\footnotesize
	\DontPrintSemicolon
	\SetKwProg{Fn}{Function}{:}{}

	\KwIn{$G$ -- a digraph on $n \geq 2$ vertices,\newline
		$\varWeight$ -- a function $E(G) \to \ZZ$}
	\KwOut{the minimum weight of a Hamiltonian cycle in $G$,\newline
		or $\infty$ if there is no such cycle\newline}

	\Fn{$\FEdge(G, \varWeight, e)$}{
		Let $e = (u, v)$\;
		$G_1 \leftarrow G$ with removed edges of the form $(v, u)$, $(u, x)$ and $(x, v)$ for $x \in V(G)$\;
		$G' \leftarrow G_1$ with contracted vertices $u$ and $v$\;
		$\varWeight' \leftarrow$ weights of $E(G')$ inherited from $G$ appropriately\;
		\Return{$\varWeight(e) + \FBranch(G', \varWeight')$}
	}

	\vspace{1em}

	\Fn{$\FBranch(G, \varWeight)$}{
		\If(\hfill$(a\rightpar$){$G$ has exactly two vertices $u$ and $v$}{
			\Return{$\varWeight((u, v)) + \varWeight((v, u))$ if $(u, v), (v, u) \in E(G)$, or $\infty$ otherwise}
		}
		\If(\hfill$(b\rightpar$){there is an empty interface in $G$ i.e. a vertex of out- or indegree $0$}{
			\Return{$\infty$}
		}
		\If(\hfill$(c\rightpar$){there is an interface $I = \{e\}$ of size $1$}{
			\Return{$\FEdge(G, \varWeight, e)$}
		}

		Use Lemma~\ref{lem:partition} to partition $E(G)$ into switching walks and circuits\;

		\If(\hfill$(d\rightpar$){there is a switching walk $\pi$ which begins and ends at the same interface $I$}{
			$G' \leftarrow G$ with removed edges of $I \setminus \pi$\;
			\Return{$\FBranch(G', \varWeight)$}
		}

		\If(\hfill$(e\rightpar$){there is a switching walk $\pi$ of even length}{
			Let $\pi = (e_1, \ldots, e_{2k})$\;
			\Return{$\min (\FEdge(G, \varWeight, e_1), \FEdge(G, \varWeight, e_{2k}))$}
		}
		\If(\hfill$(f\rightpar$){there is no interface of size at least $3$}{
			Apply Corollary~\ref{cor:2-regular} to $G$ and return the weight of the solution, or $\infty$\;
		}
		\Else(\hfill$(g\rightpar$){
			Let $I = \{e_1, \ldots, e_s\}$ be an out-interface of size $s \geq 3$\;
			$\varResult \leftarrow \infty$\;
			\For{$i = 1, \ldots, s$}{
				$\varResult \leftarrow \min(\text{result}, \FEdge(G, \varWeight, e_i))$\;
			}

			\Return{\varResult}
		}
	}
\end{algorithm}

The idea behind this algorithm is to branch on interfaces of size greater than~$2$,
reducing the initial problem to the case of $(2,2)$-graphs, and then to apply Corollary~\ref{cor:2-regular}.
A~detailed description is presented in Pseudocode~\ref{alg:atsp-branching}.
Our algorithm consists of two functions: $\FBranch(G, \varWeight)$ -- the main one,
which solves \probATSP in~$G$,
and an~auxiliary function $\FEdge(G, \varWeight, e)$ that returns the minimum weight of a~Hamiltonian cycle~$H$ in~$G$
such that~$e \in H$ (or $\infty$ if there is no such cycle).
Note that $\FEdge$ modifies the input digraph~$G$, and calls $\FBranch$ on the new digraph~$G'$.
We observe that every Hamiltonian cycle in~$G'$ of weight~$w$
corresponds to a~Hamiltonian cycle in $G$ of weight $w + \varWeight(e)$
and containing edge~$e$, and vice versa.

Given a~digraph~$G$ with a~function $\varWeight : E(G) \to \ZZ$, $\FBranch$
starts by considering a~number of trivial cases $(a) - (c)$,
where either $G$~has only $2$~vertices, or there is a~vertex with out- or indegree at most~$1$.
Next, we apply Lemma~\ref{lem:partition} to decompose $E(G)$ into switching walks and circuits, and we deal with
a~situation when there is a~switching walk~$\pi = (e_1, \ldots, e_{2k})$
of even length in~$G$ (cases $(d) - (e)$ in Pseudocode~\ref{alg:atsp-branching}).
Denote by~$I$, respectively $I'$, the interface which $\pi$~starts, respectively ends, at.
Consider a~Hamiltonian cycle~$H$ in~$G$.
By Lemma~\ref{lem:switching} we obtain that either $e_1 \in H \cap \pi$,
or $e_{2k} \in H \cap \pi$.
We consider the following two cases.
\begin{itemize}
  \item If $I = I'$, then we have $H \cap I \in \{e_1, e_{2k}\}$,
  and thus all edges of $I \setminus \pi$ can be safely removed
  as they cannot be extended to a~Hamiltonian cycle in~$G$.
  This is realized in step~$(d)$ of the pseudocode.
  Note that if a~switching walk~$\pi$ starts and ends at the same interface, then it must be of even length,
  since orientation of edges on $\pi$ alternates.

  \item If $I \neq I'$, we branch by guessing if $e_1 \in H \cap \pi$, or $e_{2k} \in H \cap \pi$
  (step $(e)$ of the pseudocode).
\end{itemize}
If none of the above cases holds, we check whether all interfaces consist of at most~$2$ edges (cases $(f) - (g)$ in Pseudocode~\ref{alg:atsp-branching}).
If so, then $G$ is a~$(2,2)$-graph, and we can solve \probATSP for $G$ by applying Corollary~\ref{cor:2-regular}.
If not, we choose an~out-interface~$I$ of size at least $3$, and we branch on it, by guessing which
of the edges of~$I$ to pick as a~part of a~Hamiltonian cycle.
Note that since $G$ has no interface of size~$1$, then it has an~interface of size at least~$3$
if and only if it has an~out-interface of size at least~$3$.

\heading{Time complexity analysis}
% Let $G$ be an~initial digraph on $n \geq 3$~vertices.
% We may assume that $t_1(G) = 0$, for otherwise our algorithm
% starts by choosing edges which form interfaces of size~$1$,
% which leads to a~digraph with at most $\max(2, n - t_1(G))$ vertices.
% Note that the value of $S(G)$ cannot increase
% when we either remove an~edge~$e$, or contract an~edge~$(u, v)$
% assuming that that there are no edges of the form~$(v, u), (u, \underline{\hspace{0.5em}}), (\underline{\hspace{0.5em}}, v)$.
% Hence, we may assume that none of conditions $(a) - (d)$ holds for $G$.
We begin with providing a~few simple facts concerning the properties of our algorithm.

\begin{claim}
\label{claim:alg-potential}
  During execution of algorithm $\FBranch$, the value of $S(G)$ cannot increase.
\end{claim}

{
\renewcommand\qedsymbol{$\lrcorner$}
\begin{proof}
  Clearly, removing an edge cannot increase the value of $S(G)$.
  Moreover, whenever we contract an~edge $(u, v)$ (call the resulting vertex $x$),
  we remove edges of the form $(v, u)$, $(u, \underline{\hspace{0.5em}})$, $(\underline{\hspace{0.5em}}, v)$.
  Hence, the out-interface, respectively the in-interface, of $x$ is a~subset
  of the out-interface of $v$, respectively the in-interface of $u$,
  and the other interfaces remain unchanged.
\end{proof}
}

\begin{claim}
\label{claim:alg-simple}
  During execution of algorithm $\FBranch$, graph~$G$ is simple, i.e. does not contain
  two edges of the same head and tail.
\end{claim}

{
\renewcommand\qedsymbol{$\lrcorner$}
\begin{proof}
  Without loss of generality, we may assume that the input graph is simple,
  for otherwise we just discard the lighter edge.
  Moreover, before contracting an~edge $(u, v)$ (call the resulting vertex $x$),
  we remove edges of the form $(v, u)$, $(u, \underline{\hspace{0.5em}})$, $(\underline{\hspace{0.5em}}, v)$.
  Hence, after contraction there is no loop $(x, x)$,
  every edge outgoing from $x$ corresponds to an~edge outgoing from~$v$,
  and every edge incoming to $x$ corresponds to an~edge incoming to~$u$.
\end{proof}
}

\begin{claim}
\label{claim:alg-switching}
  Let $\pi = (e_1, \ldots, e_k)$ be a~switching walk in~$G$.
  Assume that during the run of our algorithm we decided to take an~edge~$e_1$ by calling $\FEdge(G, \varWeight, e_1)$.
  Then, by exhaustively applying rule $(c)$ of $\FBranch$ to the resulting digraph,
  we will remove from~$G$ all edges of the form $e_{2i}$, and contract all edges of the form $e_{2i+1}$.
  An~analogous statement can be made if we start with discarding edge~$e_1$ instead of contracting it.
  \hfill$\lrcorner$
\end{claim}

Denote $f(n, S) = 2^{n/3 + \beta S}$, where $\beta$ is the constant from Theorem~\ref{thm:branching}.
We need to prove that the running time of our algorithm is bounded by $f(n - t_1(G), S(G)) n^{\Oh(1)}$.
We proceed by induction on $t_1(G) + S(G)$.

If $t_1(G) > 0$, then our algorithm
starts by choosing edges which form interfaces of size~$1$,
what leads to a~digraph with at most $\max(2, n - t_1(G))$ vertices.
Hence, by the induction hypothesis the running time is bounded by $f(n - t_1(G), S(G)) n^{\Oh(1)}$.

In what follows we assume $t_1(G) = 0$.
If $G$~satisfies condition $(a)$ or $(b)$, then our algorithm runs in polynomial time.
Similarly, we can assume that $G$ does not satisfy conditions $(c)$ and~$(d)$,
as applying the corresponding reductions exhaustively takes only
polynomial time and does not increase the value of $S(G)$,
according to Claim~\ref{claim:alg-potential}.

From now on, we assume that conditions $(a) - (d)$ do not hold for $G$.
If $S(G) = 0$, then our algorithm
executes the algorithm from Corollary~\ref{cor:2-regular} and therefore
its running time is bounded by $\Ohstar(2^{n/3})$, as desired.
Now, assume $S(G) > 0$.
It remains to analyse cases $(e)$ and $(g)$ of $\FBranch$.

\heading{Case $(e)$} Let us assume that there is a~switching walk $\pi = (e_1, \ldots, e_{2k})$
of even length in~$G$ which starts at interface~$I$ of size $s \geq 3$,
and ends at interface~$I' \neq I$ of size $s' \geq 3$.
Let $G'$ be a~digraph obtained from~$G$ by running $\FEdge(G, \varWeight, e_1)$
and exhaustively applying rules $(a) - (d)$ to the resulting digraph.

Since edge $e_1$ is contracted in $\FEdge$, we have $|V(G')| \leq |V(G)| - 1$.
We claim that $S(G') \leq S(G) - 2$.
Assume $\type(I) = \type(I') = \outt$.
By Claim~\ref{claim:alg-switching}, for all $i = 1,\ldots,k$, edge $e_{2i-1}$ was contracted,
and edge $e_{2i}$ was removed.
We observe that contracting edge $e_1$ results in removing interface~$I$
from the graph, and discarding edge $e_{2k}$ decreases the size of~$I'$ by~$1$.
By Claim~\ref{claim:alg-potential} operations performed on edges $e_2, \ldots, e_{2k-1}$ do not increase the value of~$S(G)$.
Hence, $S(G) - S(G') \geq (s - 2) + 1 \geq 2$,
as desired.
If $\type(I) = \type(I') = \inn$, then by the same reasoning, we obtain $S^-(G') \leq S(G) - 2$
but since there are no interfaces of size~$1$ in~$G'$, we have $S(G') = S^-(G')$,
and the claim follows.

Hence, by the induction hypothesis, the running time of our algorithm applied to $G'$
is bounded by $f(n - 1, S(G) - 2)$.
To obtain the desired bound for digraph~$G$
we need to show that $2f(n-1, S(G)-2) \leq f(n, S(G))$, or, equivalently $\log_2(2f(n-1, S(G)-2)) \leq \log_2 f(n, S(G))$. We obtain
\begin{align*}
  \log_2(2f(n-1, S(G)-2)) & = 1 + \tfrac{n-1}{3} + \beta (S(G)-2)
  = \tfrac{n}{3} + \beta S(G) + \tfrac{2}{3} - 2\beta \\
  & \leq \tfrac{n}{3} + \beta S(G)
  = \log_2 f(n, S(G)).
\end{align*}

\heading{Case $(g)$}
Now, we assume that $G$ does not satisfy conditions $(a)-(f)$.
Let $I$ be an~out-interface of size $s \geq 3$, and consider an~edge~$e \in I$.
Let $G'$ be a~digraph obtained from~$G$ after choosing edge~$e$
by running $\FEdge(G, \varWeight, e)$,
and let $G''$ be a~digraph obtained from~$G'$
by the subsequent exhaustive application of rules ${(a)-(d)}$ by $\FBranch$.
Define $\Delta n = |V(G)| - |V(G'')|$, and ${\Delta S = S(G) - S(G'')}$.

\begin{claim}
\label{lem:potential}
  It holds that $\Delta n \geq 1$, $\Delta S \geq s - 2 \geq 1$, and $\Delta n + \Delta S \geq s + 1$.
\end{claim}

{
\renewcommand\qedsymbol{$\lrcorner$}
\begin{proof}
  For a~digraph $G$ we denote $n(G) = |V(G)|$.
  First, we analyse a~direct impact of calling $\FEdge(G, \varWeight, e)$.
  All edges of~$I$ are removed from $G$, hence by Claim~\ref{claim:alg-potential}
  we have $\Delta S \geq S(G) - S(G') \geq s - 2 \geq 1$.
  Moreover, edge~$e$ gets contracted, and thus $\Delta n \geq n(G) - n(G') = 1$.
  We are left with proving that $(n(G') - n(G'')) + (S(G') - S(G'')) \geq 2$,
  since then we will have
  \begin{align*}
    \Delta n + \Delta S & = (n(G) - n(G') + n(G') - n(G'')) + (S(G) - S(G') + S(G') - S(G'')) \\
    & = (n(G) - n(G')) + (S(G) - S(G')) + (n(G') - n(G'') + S(G') - S(G'')) \\
    & \geq 1 + (s - 2) + 2 = s + 1.
  \end{align*}

  Let $\pi$ be the~switching walk which starts with edge~$e$.
  Let $e'$~be the last edge of~$\pi$ (it is possible that $\pi$ has length~$1$ and $e' = e$).
  We recall that at step~$(g)$ every switching walk in $G$ is of odd length.
  Take an~in-interface~$I'$ such that $e' \in I'$.
  By the definition of switching walk, $|I'| \geq 3$, so let $e', e_1', e_2'$
  be three different edges of~$I'$.
  For $j = 1,2$ denote by $\pi_j$ the~switching walk which ends with edge~$e_j'$.
  Let $e_j$ be the first edge of~$\pi_j$, and let~$I_j$ be an~out-interface such that~$e_j \in I_j$.

  Let $F, R \subseteq E(G)$ be edges of~$G$ which correspond to the edges that were taken (and hence, contracted)
  and removed, respectively, during the run of our algorithm which leads from digraph~$G$
  to digraph~$G''$.
  We have $e \in F$. By Claim~\ref{claim:alg-switching} applied to~$\pi$, we obtain $e' \in F$. Therefore, $e_1', e_2' \in R$,
  and again by Claim~\ref{claim:alg-switching} applied to~$\pi_1$ and $\pi_2$, we obtain $e_1, e_2 \in R$.
  Now, we consider a~few cases.
  \begin{itemize}
    \item If $I, I_1, I_2$ are pairwise different out-interfaces, then during processing of digraph~$G'$
    we removed edges $e_1, e_2$ from different out-interfaces of size at least~$3$.
    Therefore, $S(G') - S(G'') \geq 2$.

    \item If $I = I_1 = I_2$, then among switching walks $\pi, \pi_1, \pi_2$ there are least two
    of length greater than~$1$ (hence, of length at least~$3$),
    because otherwise the graph is not simple, contradicting Claim~\ref{claim:alg-simple}.
    Let us assume that these are
    walks $\pi$ and $\pi_1$ (the other cases are analogous).
    Then, by Claim~\ref{claim:alg-switching}, during processing of digraph~$G'$ we contracted edge $e'$
    and the second edge of~$\pi_1$.
    Therefore, $n(G') - n(G'') \geq 2$.

    \item If $I_1 = I_2 \neq I$, or $I = I_1 \neq I_2$, or $I = I_2 \neq I_1$,
    then at least one switching walk among~$\pi, \pi_1, \pi_2$ is of length at least~$3$,
    and there is another interface apart from~$I$ that gets smaller.
    Hence, we obtain in a similar way as before that
    $n(G'') - n(G') \geq 1$, and $S(G'') - S(G') \geq 1$.
  \end{itemize}
\end{proof}
}

Since $\Delta S \geq 1$, we have $S(G'') < S(G)$, and thus by the induction hypothesis
the running time of our algorithm applied to $G''$ is bounded by~$f(n(G''), S(G'')) = f(n - \Delta n, S(G) - \Delta S)$.
In step~$(g)$ of~$\FBranch$ we branch into $s$~such subcases, hence
we need to prove that $s\cdot f(n - \Delta n, S(G) - \Delta S)) \leq f(n, S(G))$.
We will show the equivalent $\log_2(s \cdot f(n - \Delta n, S(G) - \Delta S)) \leq \log_2 f(n, S(G))$.
Indeed,
\begin{align*}
  \log_2(s \cdot f(n - \Delta n, S(G) - \Delta S))
  & = \log_2 s + \tfrac{n - \Delta n}{3} + \beta (S(G) - \Delta S) \\
  & = \tfrac{n}{3} + \beta S(G) + \log_2 s - \tfrac{\Delta n}{3} - \beta \Delta S \\
  & \leq \tfrac{n}{3} + \beta S(G) + \log_2 s - \tfrac{s + 1 - \Delta S}{3} - \beta \Delta S && \text{(Claim \ref{lem:potential})}\\
  & = \tfrac{n}{3} + \beta S(G) + \log_2 s - \tfrac{s + 1}{3} - (\beta - \tfrac{1}{3}) \Delta S \\
  & \leq \tfrac{n}{3} + \beta S(G) + \log_2 s - \tfrac{s + 1}{3} - (\beta - \tfrac{1}{3})(s - 2) && \text{(Claim \ref{lem:potential})}\\
  & = \tfrac{n}{3} + \beta S(G) + \log_2 s - 1 - \beta (s - 2) \\
  & \leq \tfrac{n}{3} + \beta S(G) && (\bigtriangleup) \\
  & = \log_2 f(n, S(G)).
\end{align*}
where inequality~$(\bigtriangleup)$ follows from the fact that
the function $x \mapsto \frac{\log_2 x - 1}{x-2}$
is decreasing on~$[3, \infty)$, and thus it can be bounded by the value at $x = 3$ which is equal to $\beta$.
Consequently, the inequality $\log_2 s \leq 1 + \beta (s - 2)$ holds for $s \geq 3$.

}

\end{proof}

\subsection{General algorithm}

The idea behind our general algorithm is to run in parallel two algorithms:
our branching algorithm from Theorem~\ref{thm:branching} (which we will refer to as Algorithm~\encircle{a}),
and enumerating cycle covers from Subsection~\ref{subsec:reductions-bregman} (Algorithm~\encircle{b}) .
We finish when one of these algorithms terminates.
Our goal is to prove that the time complexity of such an~approach is bounded by
$\Ohstar(2^{\alpha(d-1)n})$ if we apply it to digraphs of average outdegree at most~$d$,
where $\alpha$ is the constant from Theorem~\ref{thm:avgdeg}.

Note that when implementing this algorithm, one may also compare the values of ${\frac{n}{3}+\beta S(G)}$
and $\alpha(d-1)n$, and, depending on the result, run either \algoref{a}, or \algoref{b}.

Let $G$ be a~digraph on $n$~vertices, of average outdegree~$d$.
We may assume that $d > 1$, for otherwise \probATSP in~$G$
can be solved in polynomial time.
Let $t_i := t_i(G)$ for $i = 1, \ldots, n-1$,
and denote $\vect{t} = (t_1, \ldots, t_{n-1})$.
The numbers $t_1, \ldots, t_{n-1}$ satisfy the conditions
\begin{equation}
\label{equ:ti-conditions}
\begin{cases}
  \sum_{i=1}^{n-1} t_i & = n \\
  \sum_{i=1}^{n-1} t_i\cdot i & = d n
\end{cases}
\tag{$\lozenge$}
\end{equation}

According to Theorem~\ref{thm:branching} and Corollary~\ref{cor:bregman-matchings},
the running time of \algoref{a} and \algoref{b} for~$G$,
up to a polynomial factor, can be bounded by functions
\[
  f(\vect{t}) := 2^{\frac{n-t_1}{3} + \beta (\sum_{i=2}^{n-1} t_i (i - 2))}
\]
and
\[
  g(\vect{t}) := \prod_{i=2}^{n-1} (i!)^{t_i / i},
\]
respectively.
Define $h(\vect{t}) = \min(f(\vect{t}), g(\vect{t}))$.
Our task can be rephrased in the following way: we want to find
the maximum value of $h(\vect{t})$
over vectors $\vect{t}$ which satisfy conditions~(\ref{equ:ti-conditions}).
From now on, we assume that the numbers $t_1, \ldots, t_{n-1}$ satisfy these conditions
and maximize the value of $h(\vect{t})$.
If there are many such valuations, then we choose the one which is lexicographically minimal.

\begin{claim}
\label{claim:ti}
  If $t_i, t_j \neq 0$, for $i, j \geq 2$, then $|i - j| \leq 1$.
  In particular, among numbers $t_2, \ldots, t_n$ at most two are nonzero.
\end{claim}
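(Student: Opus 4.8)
The plan is to argue by an exchange (local swap) argument, exploiting that $\vect{t}$ was chosen to maximize $h=\min(f,g)$ and, among all maximizers, to be lexicographically minimal. Suppose towards a contradiction that there are indices $i<j$ with $i,j\ge 2$, $t_i,t_j\neq 0$, and $j-i\ge 2$. I would then define a new vector $\vect{t}'$ obtained by decreasing $t_i$ and $t_j$ by one and increasing $t_{i+1}$ and $t_{j-1}$ by one (in the degenerate case $i+1=j-1$, that single coordinate simply goes up by two). Intuitively this move equalizes the two outdegrees by one step each, which is exactly the operation controlled by the Bregman inequality.

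First I would check that $\vect{t}'$ is feasible: nonnegativity holds because $t_i,t_j\ge 1$, and both constraints in~(\ref{equ:ti-conditions}) are preserved, since the swap changes the total count by $-1-1+1+1=0$ and the weighted sum by $-i-j+(i+1)+(j-1)=0$. The key point is then that the move leaves $f$ unchanged and cannot decrease $g$. For $f$, its exponent contains $\sum_{i\ge 2}t_i(i-2)$, and the swap changes this quantity by $(i-1)+(j-3)-(i-2)-(j-2)=0$ while leaving $t_1$ untouched; hence $f(\vect{t}')=f(\vect{t})$. (Equivalently, on the feasible set the exponent of $f$ simplifies to a function of $t_1$ alone.) For $g$, the ratio $g(\vect{t}')/g(\vect{t})$ equals $f(i+1)f(j-1)/\bigl(f(i)f(j)\bigr)$ with $f(\cdot)=((\cdot)!)^{1/(\cdot)}$, which is at least $1$ by Lemma~\ref{lem:bregman-inequality} applied with $d_1=i<d_2=j$. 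Therefore $h(\vect{t}')=\min(f(\vect{t}'),g(\vect{t}'))\ge h(\vect{t})$.

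Since $\vect{t}$ is a maximizer, this forces $h(\vect{t}')=h(\vect{t})$, so $\vect{t}'$ is also a maximizer; but $\vect{t}'$ agrees with $\vect{t}$ on all coordinates before $i$ and is strictly smaller at coordinate $i$, so $\vect{t}'$ is lexicographically smaller than $\vect{t}$, contradicting minimality. This establishes $|i-j|\le 1$ for all nonzero $t_i,t_j$ with $i,j\ge 2$. The ``in particular'' clause then follows immediately: the set of indices $\ge 2$ with $t_i\neq 0$ lies in a window of width one, so at most two such $t_i$ are nonzero. I expect the main obstacle to be identifying the right swap, namely the one that simultaneously respects both linear constraints of~(\ref{equ:ti-conditions}) and keeps the $f$-exponent invariant; once that move is in hand, the monotonicity of $g$ is a direct consequence of Lemma~\ref{lem:bregman-inequality}, and the degenerate case $i+1=j-1$ is handled without change by the same computations.
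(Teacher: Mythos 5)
Your proof is correct and is essentially identical to the paper's own argument: the same swap $(t_i,t_{i+1},t_{j-1},t_j)\mapsto(t_i-1,t_{i+1}+1,t_{j-1}+1,t_j-1)$ (with the degenerate case $i+1=j-1$ handled the same way), the same verification that the constraints and $f$ are preserved, the same appeal to Lemma~\ref{lem:bregman-inequality} for $g$, and the same contradiction with lexicographic minimality. Your write-up is in fact slightly more detailed than the paper's, e.g.\ in checking feasibility and in noting that $t_1$ is untouched since $i\ge 2$ forces $i+1, j-1\ge 3$.
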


{
\renewcommand\qedsymbol{$\lrcorner$}
\begin{proof}
  Suppose that $t_i, t_j > 0$ where $i < i + 2 \leq j$.
  We define a~vector $\vect{t'}$ such that $t_k' = t_k$ for all $k$
  except for $k = i, i+1, j-1, j$ where $t_k'$ is equal to $t_i-1, t_{i+1}+1, t_{j-1}+1, t_j-1$,
  respectively.
  (However, we set $t_{i+1}' = t_{i+1} + 2$ if $i + 1 = j - 1$.)
  We observe that conditions~(\ref{equ:ti-conditions}) still hold for $\vect{t'}$.
  Moreover, we have $f(\vect{t'}) = f(\vect{t})$, and by Lemma~\ref{lem:bregman-inequality}
  $g(\vect{t'}) \geq g(\vect{t})$.
  Hence, $h(\vect{t'}) \geq h(\vect{t})$, and $\vect{t'} <_{\text{lex}} \vect{t}$
  which contradicts the choice of vector~$\vect{t}$.
\end{proof}
}

In further analysis we denote $r_i := t_i / ((d-1)n)$ for $i=1, \ldots, n-1$.
By Claim~\ref{claim:ti}
it is enough to consider the following two cases.

\heading{Case 1} $t_i = 0$ for $i \not\in \{1, 2, 3\}$.

Then, conditions~(\ref{equ:ti-conditions}) take the form of $t_1 + t_2 + t_3 = n$
and $t_1 + 2t_2 + 3t_3 = dn$.
Hence, $t_2 + 2t_3 = (d-1)n$, or, equivalently,
$r_2 + 2r_3 = 1$.  Thus, we have
\begin{align*}
  \tfrac{1}{(d-1)n}\log_2 f(\vect{t}) = \tfrac{1}{3}(r_2 + r_3) + \beta r_3 = \tfrac{1}{3}(1 - 2r_3 + r_3) + \beta r_3
  = \tfrac{1}{3} + \left(\beta - \tfrac{1}{3}\right)r_3 \\
  \tfrac{1}{(d-1)n}\log_2 g(\vect{t}) = \tfrac{1}{2}r_2 + \tfrac{\log_2 6}{3} r_3
  = \tfrac{1}{2}(1 - 2r_3) + \tfrac{\log_2 6}{3} r_3 = \tfrac{1}{2} - \left(1 - \tfrac{\log_2 6}{3}\right) r_3.
\end{align*}

Hence, we obtain that $f(\vect{t}) = 2^{\frac{1}{3}(d-1)n}$ if $r_3 = 0$, and $f(\vect{t})$ is increasing
as a~function of~$r_3$. Similarly, $g(\vect{t})$ is a~decreasing function of~$r_3$ with $g(\vect{t}) = 2^{\frac{1}{2}(d-1)n}$
for $r_3 = 0$.
Therefore, the minimum of $f(\vect{t})$ and $g(\vect{t})$ can be upper bounded by the value of $f$ at point $\vect{t_0}$
such that $f(\vect{t_0}) = g(\vect{t_0})$. In our case this equality holds if
\[
  \tfrac{1}{3} + \left(\beta - \tfrac{1}{3}\right)r_3 = \tfrac{1}{2} - \left(1 - \tfrac{\log_2 6}{3}\right) r_3
\]
from which we obtain
\[
  r_3 = \frac{\frac{1}{2} - \frac{1}{3}}{\beta - \frac{1}{3} + 1 - \frac{\log_2 6}{3}}
  = \frac{\frac{1}{6}}{\beta - \frac{1}{3} + 1 - \frac{\log_2 3 + 1}{3}} = \frac{\frac{1}{6}}{\beta - \frac{\log_2 3 - 1}{3}}
  = \frac{\frac{1}{6}}{\beta - \frac{\beta}{3}} = \frac{1}{4\beta}.
\]
To obtain a~bound on $h(\vect{t})$ it remains to plug the above value into the formula for $f(\vect{t})$:
\[
  \tfrac{1}{(d-1)n}\log_2 h(\vect{t}) \leq \tfrac{1}{3} + \left(\beta - \tfrac{1}{3}\right)\tfrac{1}{4\beta}
  = \tfrac{7}{12} - \tfrac{1}{12\beta} = \alpha
\]
Hence, $h(\vect{t}) \leq 2^{\alpha(d-1)n}$, as desired.

\heading{Case 2} $t_i = 0$ for $i \not\in \{1, s, s+1\}$, where $s \geq 3$.

We will show that in this case $g(\vect{t}) = o(2^{0.431(d-1)n}) = o(2^{\alpha(d-1)n})$.
Conditions~(\ref{equ:ti-conditions}) take now the form of $t_1 + t_s + t_{s+1} = n$ and $t_1 + st_s + (s+1)t_{s+1} = dn$.
Hence, we obtain $(s-1)t_s + st_{s+1} = (d-1)n$, or, equivalently, $(s-1)r_s + sr_{s+1} = 1$. Therefore
\begin{align*}
  \tfrac{1}{(d-1)n}\log_2 g(\vect{t}) & = \tfrac{\log_2 s!}{s} r_s + \tfrac{\log_2 (s+1)!}{s+1} r_{s+1}
  = \tfrac{\log_2 s!}{s(s-1)} (1 - sr_{s+1}) + \tfrac{\log_2 (s+1)!}{s+1} r_{s+1} \\
  & = \tfrac{\log_2 s!}{s(s-1)} - \left(\tfrac{\log_2 s!}{s - 1} - \tfrac{\log_2 (s+1)!}{s + 1}\right)r_{s+1}.
\end{align*}

One can verify that the inequality $\frac{\log_2 s!}{s - 1} > \frac{\log_2 (s+1)!}{s + 1}$ holds for $s \geq 2$,
and that the sequence $\left(\frac{\log_2 s!}{s(s-1)}\right)_{s \geq 3}$ is decreasing. Therefore
\[
  \tfrac{1}{(d-1)n}\log_2 g(\vect{t}) \leq \tfrac{\log_2 6}{6} < 0.431.
\]
Hence, $g(\vect{t}) < 2^{0.431(d-1)n}$, which ends the proof.

\section{Exponential space algorithm}
\label{sec:gebauer}
In this section we establish Theorem \ref{thm:gebauer_sparse}.

Let $G$ be a~digraph with $n$~vertices and $m = dn$~edges.
For simplicity, we assume in this section that $n$ is even,
for otherwise we can pick an~arbitrary vertex~$v$, and split
it into two vertices $v^\inn$ and $v^\outt$
with edges inherited from $v$ appropriately
and with one additional edge $(v^\inn, v^\outt)$
-- this operation
adds one vertex to the graph but does not increase the average
outdegree.
We will say that a~simple path~$P$ in~$G$ is \mbox{\emph{$(l,D)$-light}}
if~the~length of~$P$ is~$l$,
and the sum of outdegrees of inner vertices of~$P$ is bounded by~$D$.
For a~vertex $v \in V(G)$, and positive integers $l$,~$D$,
by~$\mathcal{P}_{v, l, D}$ we will denote
the family of all $(l,D)$-light paths in~$G$ which start at vertex~$v$.

Our algorithm relies on the following two lemmas.

\begin{lemma}\label{lem:circle}
  Let $H$ be a~Hamiltonian cycle in $G$. Then, the edges of $H$ can be partitioned
  into two $(n/2,\ m/2)$-light paths.
\end{lemma}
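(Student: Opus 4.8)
The plan is to parametrise all ways of splitting $H$ into two paths of length $n/2$ and to locate a balanced split by a discrete intermediate-value argument. Write the vertices of $H$ in cyclic order as $u_0, u_1, \ldots, u_{n-1}$ (indices mod $n$), put $d_i = \outdeg(u_i)$, and recall that $\sum_i d_i = m$. Any partition of the $n$ edges of $H$ into two paths of length $n/2$ is obtained by cutting $H$ at two antipodal vertices $u_a$ and $u_{a+n/2}$; each resulting sub-arc is a simple path of length $n/2$, and their inner vertices are exactly $\{u_{a+1}, \ldots, u_{a+n/2-1}\}$ and $\{u_{a+n/2+1}, \ldots, u_{a-1}\}$, which are disjoint and together omit only $u_a$ and $u_{a+n/2}$. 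Hence the two inner-outdegree sums add up to $m - d_a - d_{a+n/2} \le m$, so on average each is at most $m/2$; the whole difficulty is to find a single cut at which \emph{both} are simultaneously $\le m/2$.

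I would phrase this using the sliding window $U_p = \sum_{i=p}^{p+n/2-1} d_i$ of $n/2$ consecutive outdegrees, which satisfies $U_p + U_{p+n/2} = m$ and $U_{p+1} = U_p - d_p + d_{p+n/2}$. A short rewriting of the two inner-outdegree sums in terms of $U_p$ turns the requirement that both paths be $(n/2,m/2)$-light into the two-sided inequality
\[ m/2 - d_{p-1} \ \le\ U_p \ \le\ m/2 + d_{p+n/2-1}, \]
where the cut uses $u_{p-1}$ and $u_{p+n/2-1}$ as endpoints; note that the slack on each side is exactly the outdegree of a cut vertex.

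Since the values $U_p$ average to $m/2$ around the cycle, there is a consecutive pair with $U_p \le m/2 \le U_{p+1}$ (and if all $U_p$ equal $m/2$ then every cut already works). At such a pair the upper bound above holds automatically for index $p$, and the lower bound holds automatically for index $p+1$, so it only remains to rule out that index $p$ violates its lower bound while index $p+1$ simultaneously violates its upper bound. I would do this by subtracting the two assumed violations and substituting $U_{p+1} - U_p = d_{p+n/2} - d_p$; the inequalities then force $d_{p-1} + d_p < 0$, contradicting the nonnegativity of outdegrees. Hence at least one of the two antipodal cuts, at $p$ or at $p+1$, yields two $(n/2,m/2)$-light paths, which is the desired partition.

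The main obstacle is precisely this last step: a naive intermediate-value argument fails because $U_p$ can jump by a whole outdegree between consecutive positions, so the crossing of the level $m/2$ need not leave both window halves inside $[\,0,m/2\,]$ at once. The fix is that the light condition carries slack $d_{p-1}$ and $d_{p+n/2-1}$ on its two sides, and this slack is exactly what absorbs the jump $U_{p+1}-U_p$; verifying that the slack always dominates the jump is where nonnegativity of outdegrees enters, and it is the only place the argument could go wrong.
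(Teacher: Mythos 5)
Your proof is correct and takes essentially the same route as the paper's: both are discrete intermediate-value arguments on the sliding window of $n/2$ consecutive outdegrees, locating a crossing of the level $m/2$ (the paper phrases this as a sign change of $R_i = S_i - S_{i+k}$, which is equivalent since $S_i + S_{i+k} = m$). The only difference is in the endgame: at a crossing $U_p \le m/2 \le U_{p+1}$ the paper checks directly that the single cut at $(u_p,\, u_{p+n/2})$ already works---note $U_{p+1} = U_p - d_p + d_{p+n/2} \le m/2 + d_{p+n/2}$, so your second candidate cut in fact never fails---whereas your two-cut contradiction argument is valid but proves a slightly weaker disjunction than what holds.
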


\begin{lemma}\label{lem:generate_paths}
  For a~digraph $G$, a vertex $v$, and integers $l$, $D$,
  the family $\mathcal{P}_{v, l, D}$ can be computed in time $\tau(D/(l-1))^{l-1} n^{\Oh(1)}$
  where the~function $\tau$ is defined as in Theorem~\ref{thm:gebauer_sparse}.
\end{lemma}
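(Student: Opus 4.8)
The plan is to generate $\mathcal{P}_{v,l,D}$ by a depth-first branching that grows a simple path from $v$ one edge at a time, and to bound the running time by counting the nodes of the resulting search tree (each node being processed in polynomial time). Concretely, I would maintain the current simple path $v = u_0, u_1, \ldots, u_i$ together with the running sum $\sigma = \sum_{k=1}^{i-1}\outdeg(u_k)$ of the outdegrees of its inner vertices; to extend the path I branch over the out-edges of $u_i$ going to vertices not yet on the path, which turns $u_i$ into an inner vertex and increases $\sigma$ by $\outdeg(u_i)$. Whenever $\sigma$ would exceed $D$ I prune, and whenever the path reaches length $l$ I output it. This enumerates exactly the $(l,D)$-light paths from $v$, so it remains to bound the total number of visited nodes. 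We may assume $D \ge l-1$, as otherwise $\mathcal{P}_{v,l,D}=\emptyset$ (each of the $l-1$ inner vertices of a length-$l$ path has outdegree at least one), which the procedure detects quickly.

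The heart of the argument is a clean recurrence. For integers $r \ge 0$ and $B \ge 0$ let
\[
  M(r, B) = \max\Big\{\textstyle\prod_{k=1}^{r} a_k \;:\; a_k \in \NN,\ a_k \ge 1,\ \textstyle\sum_{k=1}^r a_k \le B\Big\},
\]
the maximum product of $r$ positive integers of bounded sum (with $M(0, B) = 1$). Writing $T(w, r, B)$ for the number of nodes explored while extending a path currently ending at $w$ by $r$ further edges, where the budget $B$ is charged against the outdegrees of $w$ and of the next $r-1$ vertices, one has $T(w, r, B) = 1 + \sum_{x:\,(w,x)\in E(G)} T(x, r-1, B - \outdeg(w))$ for $r \ge 1$, and $T(w,0,B)=1$. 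I would prove by induction on $r$ that $T(w, r, B) \le (r+1)\,M(r, B)$: the vertex $w$ has $a := \outdeg(w)$ children, each of whose subtrees has at most $r\,M(r-1, B-a)$ nodes by the inductive hypothesis, and the elementary inequality $a\cdot M(r-1, B-a) \le M(r, B)$ (prepend the factor $a$ to an optimal product witnessing $M(r-1,B-a)$) closes the step, using $M(r,B)\ge 1$. Since the first vertex $v$ is not inner, I spend a free factor $\outdeg(v) \le n$ on the initial branch and apply the bound with $r = l-1$, $B = D$, obtaining at most $n\,l\,M(l-1, D)$ nodes in total.

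It then remains to identify $M(l-1, D)$ with $\tau(D/(l-1))^{\,l-1}$. This is a balancing argument in the spirit of Lemma~\ref{lem:bregman-inequality}: an optimal tuple uses the whole budget (increasing any $a_k \ge 1$ only increases the product), and it has all entries within $1$ of each other, since replacing a pair $a_k \le a_j - 2$ by $a_k+1,\,a_j-1$ strictly increases the product. Hence an optimum consists of $\gamma = D - (l-1)\floor{D/(l-1)}$ entries equal to $\ceil{D/(l-1)}$ and the remaining $l-1-\gamma$ entries equal to $\floor{D/(l-1)}$, and a direct computation then gives a product of exactly $\tau(D/(l-1))^{\,l-1}$, just as in the derivation of $\mu(d)$ in Corollary~\ref{cor:bregman-avgdeg}. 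Combining the two bounds yields the claimed running time $\tau(D/(l-1))^{\,l-1}\, n^{\Oh(1)}$.

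The step I expect to be most delicate is the time analysis rather than the counting of outputs: a short prefix carries the full budget $D$ spread over fewer inner vertices, so its per-vertex budget $D/(i-1)$ exceeds the final $D/(l-1)$, and one might fear that intermediate levels of the search tree are far more numerous than the output itself (indeed $M(i-1,D)$ can exceed $M(l-1,D)$ for intermediate $i$). The recurrence above is exactly what rules this out: the inequality $a\cdot M(r-1,B-a)\le M(r,B)$ shows that the product constraint is global and does not compound across levels, so the \emph{entire} tree, not merely its leaves, is controlled by $M(l-1,D)$. The only remaining care is the degenerate boundary $D<l-1$, where $\mathcal{P}_{v,l,D}$ is empty and $\tau(D/(l-1))=0$; there the DFS explores only prefixes of length at most $D+1<l$, which is a routine check.
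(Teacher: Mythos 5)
Your pruning rule is too weak, and this is exactly where both your algorithm and your induction fail. You prune only when the running sum $\sigma$ of inner outdegrees would exceed $D$, whereas the paper's procedure (the line marked $(\checkmark)$ in Pseudocode~\ref{alg:generate_paths}) additionally reserves one unit of budget for every inner vertex still to come: it recurses only if $\outdeg(v_1)+(l-2)\le D$. Without this lookahead, the recursion enters states whose remaining budget can no longer pay for the remaining inner vertices, i.e.\ states with $B-a<r-1$. At such states $M(r-1,B-a)$ is a maximum over the empty set, your ``prepend the factor $a$'' inequality has no witnessing tuple to prepend to, and the inductive hypothesis cannot be applied --- so the step of the induction silently assumes feasibility that your pruning does not guarantee. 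This is not a repairable bookkeeping slip in the analysis alone: the algorithm as you specified it genuinely violates the claimed running time. Take $D=l-1$, so that $\tau\bigl(D/(l-1)\bigr)^{l-1}=1$ and the claimed time is polynomial; let $G$ consist of a circulant-type subgraph on $n-1$ vertices in which every vertex has outdegree $3$ (say $i\to i+1,i+2,i+3$), plus a start vertex $v$ with a single edge into it. Every prefix $v,c_0,\ldots,c_h$ with $3h\le D$ passes your test, so your search tree has at least $3^{\lfloor D/3\rfloor}$ nodes --- exponential in $n$ when $l=\Theta(n)$, as in the application with $l=n/2$ --- even though $\mathcal{P}_{v,l,D}=\emptyset$. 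The danger you yourself flag (intermediate levels being far more numerous than the output) is precisely what happens here; the recurrence does not rule it out, because it is only valid on budget-feasible states. The same example also refutes your closing remark about the degenerate case $D<l-1$: the DFS depth is indeed at most $D+1$, but the number of explored prefixes can still be exponential, so emptiness is not ``detected quickly.''

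The repair is exactly the paper's test: recurse into a new vertex only if its outdegree plus the number of inner vertices still to be chosen fits into the remaining budget. Under that rule every state reached satisfies $B\ge r$, so $M(r,B)\ge 1$ is always well defined, the prepend inequality applies precisely because the pruning condition $B-a\ge r-1$ guarantees a witness, and your recurrence-based induction then goes through; combined with your balancing computation $M(l-1,D)\le\tau\bigl(D/(l-1)\bigr)^{l-1}$ (which is the paper's Lemma~\ref{lem:tau}) this gives a correct proof. Indeed, once fixed, your argument is a clean alternative to the paper's, which instead bounds the number of \emph{leaves} by surgery on the search tree (padding all leaves to depth $l$, then replacing all depth-$i$ subtrees by the one with the most leaves, level by level) before invoking Lemma~\ref{lem:tau}; your global recurrence avoids that surgery and bounds all nodes directly. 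But as written, the proposal describes a too-slow algorithm and an induction whose key step fails exactly on the states that cause the slowdown.
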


Before we proceed to the proofs of above lemmas,
let us see how to derive Theorem~\ref{thm:gebauer_sparse} from them.
Given a~digraph~$G$, the algorithm starts by iterating over all pairs of distinct vertices
$u_1$ and $u_2$.
For each such a~pair we use Lemma~\ref{lem:generate_paths} to obtain
the families $\mathcal{P}_1 = \mathcal{P}_{u_1, n/2, m/2}$
and~$\mathcal{P}_2 = \mathcal{P}_{u_2, n/2, m/2}$.
By filtering them, we may assume that all paths from $\mathcal{P}_1$ end at~$u_2$,
and all paths from $\mathcal{P}_2$ end at~$u_1$.
Next, we create a~dictionary $\mathcal{D}$ with an~entry
$\{\text{key}: V(P_1), \text{value}: \text{weight}(P_1)\}$ for every path $P_1 \in \mathcal{P}_1$.
(In case there is more than one path on the same set of vertices we keep only one~entry
with the minimum weight.)
Then, we iterate over all paths $P_2 \in \mathcal{P}_2$,
and we look up in $\mathcal{D}$ a~subset ${V'(P_2) := (V(G) \setminus V(P_2)) \cup \{u_1, u_2\}}$.
For every hit we calculate the sum: $\text{weight}(P_2)$ + $\mathcal{D}[V'(P_2)]$,
and we return the minimum of these values.

The correctness of this procedure is a~direct corollary from Lemma~\ref{lem:circle}.
Moreover, the running time of the algorithm is dominated, up to a~polynomial factor,
by the running time of the~algorithm from Lemma~\ref{lem:generate_paths}, which
in our case is bounded by
\[
  \tau \left(\frac{\tfrac{m}{2}}{\tfrac{n}{2}-1} \right)^{n/2 - 1} n^{\Oh(1)}
  = \tau \left(\frac{d}{1-\tfrac{2}{n}} \right)^{n/2 - 1} n^{\Oh(1)}
  = \tau(d)^{n/2} n^{\Oh(1)}
\]
where the last equality follows from the fact that when $d$ is fixed,
then for sufficiently large~$n$
we have $\lfloor d/(1-\frac{2}{n}) \rfloor = \lfloor d \rfloor$.
Note that we implement the dictionary~$\mathcal{D}$ as a~balanced tree, so each lookup
takes time $\Oh(\log |\mathcal{D}|) = \Oh(n)$.

\begin{proof}[Proof of Lemma \ref{lem:circle}]
  Let $k = n/2$,
  and let $d_0, d_1, \ldots, d_{2k-1}$ be the outdegrees of consecutive vertices on~$H$.
  Denote $S_i = d_i + d_{i+1} + \ldots + d_{i+k-1}$.
  (In this proof indices are understood modulo $2k$.)
  We need to prove that for some index~$j$ both expressions
  $S_j - d_j$ and $S_{j+k} - d_{j+k}$ do not exceed $m/2$.

  Let $R_i := S_i - S_{i+k}$.
  We observe that $R_k = S_k - S_0 = -R_0$.
  Hence, there exists an~index~$j \in \{0, \ldots, k-1\}$ such that
  $R_j \cdot R_{j+1} \leq 0$.
  Without loss of generality, we may assume that $R_j \leq 0$ (equivalently, $S_j \leq S_{j+k}$),
  for otherwise we can just shift all indices by~$k$.
  Then, $R_{j+1} \geq 0$ (equivalently, $S_{j+1+k} \leq S_{j+1}$).
  Thus we obtain
  \begin{align*}
    S_j - d_j & \leq S_j \leq \tfrac{1}{2}(S_j + S_{j+k}) = \tfrac{m}{2} \\
    S_{j+k} - d_{j+k} & \leq S_{j+k+1} \leq \tfrac{1}{2}(S_{j+k+1} + S_{j+1}) = \tfrac{m}{2}.
  \end{align*}
  This ends the proof.
\end{proof}

Before we proceed to the proof of Lemma \ref{lem:generate_paths}, we state a technical lemma.
  \begin{lemma}
  	\label{lem:tau}
	Let $a_1, \ldots, a_k$ be integers with an average bounded by $\bar{a}$.
	Then, $a_1\cdot\ldots \cdot a_k \leq \tau(\bar{a})^k$.
\end{lemma}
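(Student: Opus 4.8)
The plan is to reduce the whole statement to a single one-dimensional concavity fact about the logarithm. Since the numbers $a_1,\dots,a_k$ arise as outdegrees of the inner vertices of a path, I will take them to be positive integers; this is the only case needed for Lemma~\ref{lem:generate_paths}, and it is essential, as allowing non-positive values would make the product uncontrollable. Write $k$ for the number of terms and $\mu=\frac1k\sum_{i=1}^k a_i\le\bar a$ for their true average.

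The crux is the observation that $\log\tau$ is exactly the piecewise-linear interpolation of $\log$ through the integer points. Let $\hat\ell$ denote the function that agrees with $x\mapsto\log x$ at every integer and is linear on each interval $[n,n+1]$. Writing $\bar a=(1-t)\lfloor\bar a\rfloor+t(\lfloor\bar a\rfloor+1)$ with $t=\bar a-\lfloor\bar a\rfloor$ and $1-t=\lfloor\bar a\rfloor+1-\bar a$, one reads off directly from the definition of $\tau$ that
\[
\log\tau(\bar a)=(\lfloor\bar a\rfloor+1-\bar a)\log\lfloor\bar a\rfloor+(\bar a-\lfloor\bar a\rfloor)\log(\lfloor\bar a\rfloor+1)=\hat\ell(\bar a).
\]
Two elementary properties of $\hat\ell$ will drive the argument: it is increasing on $[1,\infty)$, since the interpolation slopes $\log(n+1)-\log n$ are positive; and it is concave, because $\log$ is concave, so these same slopes $\log(1+\tfrac1n)$ are non-increasing in $n$, and the piecewise-linear graph bends downward.

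With this in hand the estimate is immediate. Because each $a_i$ is an integer, $\log a_i=\hat\ell(a_i)$, so Jensen's inequality applied to the concave function $\hat\ell$ gives
\[
\frac1k\sum_{i=1}^k\log a_i=\frac1k\sum_{i=1}^k\hat\ell(a_i)\le\hat\ell\Bigl(\tfrac1k\sum_{i=1}^k a_i\Bigr)=\hat\ell(\mu).
\]
Since $\mu\le\bar a$ and $\hat\ell$ is increasing, $\hat\ell(\mu)\le\hat\ell(\bar a)=\log\tau(\bar a)$. Multiplying by $k$ and exponentiating yields $\prod_{i=1}^k a_i\le\tau(\bar a)^k$, as required.

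The only steps requiring any care are the two bookkeeping facts about $\hat\ell$ — the identity $\log\tau(\bar a)=\hat\ell(\bar a)$ and the concavity/monotonicity of $\hat\ell$ — both routine once stated; after that the result is a one-line application of Jensen together with monotonicity. An alternative in the spirit of Lemma~\ref{lem:bregman-inequality} would be a smoothing argument: replacing a pair $a_i\le a_j-2$ by $a_i+1,a_j-1$ strictly increases the product while preserving the sum, so a maximizer has all values in $\{\lfloor\mu\rfloor,\lceil\mu\rceil\}$. I prefer the interpolation viewpoint because matching the resulting integer-optimal split against the continuous weights appearing in $\tau(\bar a)$ is slightly awkward, whereas the concavity argument handles non-integer $\bar a$ uniformly.
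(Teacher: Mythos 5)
Your proof is correct, but it follows a genuinely different route from the paper's. The paper proves the lemma by the smoothing argument you sketch only as an alternative at the end: it first argues that a product-maximizing integer sequence must have all values in $\{\lfloor \bar a\rfloor,\lceil \bar a\rceil\}$ (otherwise increase some $a_i$ by $1$, or replace $a_i<a_j$ by $a_i+1,a_j-1$), then uses the constraint $\sum_i a_i\le k\bar a$ to lower-bound the number $k_1$ of values equal to $\lfloor\bar a\rfloor$ by $k(\lfloor\bar a\rfloor+1-\bar a)$, and finally matches ${\lfloor\bar a\rfloor}^{k_1}{\lceil\bar a\rceil}^{k-k_1}$ against $\tau(\bar a)^k$, treating $\bar a\in\Z$ as a separate case. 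Your argument instead identifies $\log\tau$ as the piecewise-linear interpolation $\hat\ell$ of $\log$ through the integer points, and then gets the bound in one stroke from concavity (Jensen) plus monotonicity of $\hat\ell$; this handles integer and non-integer $\bar a$ uniformly and avoids the multiplicity bookkeeping, which is exactly the awkward step in the paper's counting. It also buys a small but real gain in rigor: you state explicitly that the $a_i$ must be positive, which is indeed necessary (with, say, two entries equal to $-10$ and $\bar a=1$ the claimed inequality fails), and which the paper leaves implicit — its maximizer argument only makes sense for positive integers, as in the intended application to outdegrees of inner path vertices. What the paper's exchange argument buys in return is elementarity and consistency with the rest of the text: it mirrors the use of Lemma~\ref{lem:bregman-inequality} in Corollary~\ref{cor:bregman-avgdeg} and needs no appeal to Jensen or to properties of interpolants.
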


\begin{proof}
	First, we see that if the product of numbers $a_i$ is maximum, then for all $i$ we have
	$a_i = \lfloor \bar{a} \rfloor$ or $a_i = \lceil \bar{a} \rceil$,
	for otherwise we could either increase some number by~$1$, or replace some two numbers $a_i$, $a_j$ ($a_i < a_j$)
	with numbers $a_i + 1$, $a_j - 1$.
	Assume that among numbers $a_1, \ldots, a_k$ there are $k_1$ numbers equal to~$\lfloor \bar{a} \rfloor$
	and $k - k_1$ numbers equal to~$\lceil \bar{a} \rceil$. Moreover, assume that $\bar{a} \not\in \Z$,
	and consequently $\lceil \bar{a} \rceil = \lfloor \bar{a} \rfloor + 1$. Then
	\[
	k_1 \lfloor \bar{a} \rfloor + (k - k_1)(\lfloor \bar{a} \rfloor + 1) = a_1 + \ldots + a_k \leq k \bar{a}
	\]
	and thus $k_1 \geq k(\lfloor \bar{a} \rfloor + 1 - \bar{a})$. Hence
	\[
	a_1\cdot \ldots \cdot a_k
	= {\lfloor \bar{a} \rfloor}^{k_1} {\lceil \bar{a} \rceil}^{k-k_1}
	\leq {\lfloor \bar{a} \rfloor}^{k(\lfloor \bar{a} \rfloor + 1 - \bar{a})}
	(\lfloor \bar{a} \rfloor + 1)^{k(\bar{a} - \lfloor \bar{a} \rfloor)} = \tau(\bar{a})^k.
	\]
	To finish the proof, we observe that if $\bar{a} \in \Z$, then $\tau(\bar{a}) = \bar{a}$,
	so we obtain a~bound $\bar{a}^k$, as desired.
\end{proof}

\begin{proof}[Proof of Lemma \ref{lem:generate_paths}]

  {
  \newcommand{\rightpar}{)}
  \newcommand{\FuncName}[1]{{\sf #1}}
  \newcommand{\FGen}{\FuncName{GeneratePaths}}

  \newcommand{\varPath}{{\sf path}}

  \begin{algorithm}[t]
  \caption{$\FGen(G,\ \varPath,\ l,\ D)$}
  \label{alg:generate_paths}
  \footnotesize

    \KwIn{$G$ -- a digraph,\newline
    $\varPath$ -- a sequence of vertices forming a path in $G$,\newline
    $l$, $D$ -- positive integers}

    \KwOut{A collection of all simple paths of the form: $\varPath \#(v_1, \ldots, v_l)$ such that
    $\sum_{i=1}^{l-1} \outdeg(v_i) \leq D$\newline}

    $u \leftarrow $ the last vertex on $\varPath$\;
    \For{vertex $v_1$ such that $(u, v_1) \in E(G)$ $\mathbf{and}$ $v_1 \not\in \varPath$}{
      \If{$l = 1$}{
        \MyPrint{$\varPath \# v_1$}
      }
      \ElseIf(\hfill $(\checkmark\rightpar$){$\outdeg(v_1) + (l - 2) \leq D$}{
        $\FGen(G,\ \varPath \# v_1,\ l - 1,\ D - \outdeg(v_1))$\;
      }
    }
  \end{algorithm}

  We apply a simple branching procedure which starts at vertex $v$,
  and at each step guesses the next vertex on a~path by considering
  all \emph{reasonable} possibilities.
  A detailed description of the algorithm can be found in Pseudocode \ref{alg:generate_paths}.
  (To compute the family $\mathcal{P}_{v,l,D}$
  we call the function $\FGen$ with the arguments $(G, \{v\}, l, D)$.)
  Note that before appending a~vertex to the current path we check whether
  the sum of outdegrees on the new path is not too large (line marked with $(\checkmark)$ in the Pseudocode).
  More precisely, we check whether appending a~sequence of vertices of outdegree~$1$
  to the new path would give us a~correct $(l,D)$-path.

  The correctness of such a procedure is straightforward.
  It remains to estimate its time complexity.
  Let $\Tree(G)$ be a~search tree representing execution of this algorithm.
  We claim that $\Tree(G)$ contains at most $\tau(D / (l-1))^{l-1}n$ leaves,
  where $\tau(d) = \lfloor d \rfloor ^ {\lfloor d \rfloor + 1 - d} {(\lfloor d \rfloor + 1) ^ {d - \lfloor d \rfloor}}
  \leq d$

  We will say that a~directed rooted tree $\Tree$ has the~property $(\star)$ if for any path
  $u_0, \ldots, u_h$ from the root of $\Tree$ to some leaf $u_h$ we have $h \leq l$, and the sum
  $\sum_{i=1}^{h-1} \outdeg(u_i)$ is bounded by $D - (l - h)$.
  From the description of the algorithm we see that $\Tree(G)$ has the~property~$(\star)$.
  Indeed, let $u_0, \ldots, u_h$ be a~path from the~root to a~leaf in~$\Tree(G)$. Before the algorithm
  entered the~vertex $u_{h-1}$ the following condition was checked:
  \[
    \outdeg(u_{h-1}) + (l - (h-2) - 2) \leq D - \sum_{i=1}^{h-2} \outdeg(u_i)
  \]
  which is equivalent to $\sum_{i=1}^{h-1} \outdeg(u_i) \leq D - (l - h)$.
  Hence it is enough to prove the following claim. (A similar statement appears in the work of Gebauer~\cite{Gebauer:thesis}.)

  \begin{claim}
  \label{claim:leaves}
    Any tree $\Tree$ with the property $(\star)$ has at most $\tau(D / (l-1))^{l-1}n$ leaves.
  \end{claim}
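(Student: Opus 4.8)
The plan is to reduce the counting of leaves to a single-path product optimisation that Lemma~\ref{lem:tau} already resolves. The first observation I would make is that a node of $\Tree$ labelled by a vertex $u$ branches over a subset of the out-neighbours of $u$, so it has at most $\outdeg(u)$ children; in particular the root has fewer than $n$ children. Property $(\star)$ deliberately omits the root label $u_0$ from the weight sum $\sum_{i=1}^{h-1}\outdeg(u_i)$, so the root is ``free'' and will contribute exactly the factor $n$ in the final bound, whereas the remaining branching nodes along a full-length path carry a total outdegree budget of at most $D$.

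The technical nuisance is that leaves may sit at various depths $h\le l$. I would remove this by a padding step: transform $\Tree$ into a tree $\Tree'$ with the same number of leaves in which every root-to-leaf path has length exactly $l$. Concretely, to a leaf $u_h$ at depth $h<l$ I attach a chain of $l-h$ fresh nodes, each given a single child (so its branching factor, hence its weight, is $1$), the last one becoming the new leaf. Invoking $(\star)$, the non-root internal nodes on the resulting path carry total weight $\sum_{i=1}^{h-1}\outdeg(u_i)+(l-h)\le (D-(l-h))+(l-h)=D$. Thus in $\Tree'$ every root-to-leaf path has exactly $l-1$ non-root internal nodes whose weights sum to at most $D$, each node still has at most (its weight) children, and the root still has at most $n$ children.

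After factoring out the root (at most $n$ children), it remains to bound the number of leaves of a depth-$(l-1)$ weighted tree in which each internal node $v$ has at most $w(v)$ children and, along every root-to-leaf path, the $l-1$ internal weights sum to at most $D$. Let $N(D,h)$ denote the maximum number of leaves of such a tree of depth $h$. Maximising leaves subtree by subtree gives the recursion $N(D,h)=\max_{w\ge 1} w\cdot N(D-w,h-1)$ with $N(\cdot,0)=1$, which unrolls to $N(D,l-1)=\max\{\prod_{i=1}^{l-1} w_i : w_i\in\NN,\ w_i\ge 1,\ \sum_i w_i\le D\}$. The $l-1$ integers $w_i$ have average at most $D/(l-1)$, so Lemma~\ref{lem:tau} yields $N(D,l-1)\le \tau(D/(l-1))^{l-1}$. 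Combining this with the root factor gives at most $n\cdot\tau(D/(l-1))^{l-1}$ leaves, as claimed.

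The step I expect to be the main obstacle is precisely the passage from a per-path product bound to a bound on the \emph{total} leaf count: a bound on $\prod_i \outdeg(u_i)$ along each individual path does not by itself limit the number of leaves, since different paths may split the budget $D$ differently. The recursion for $N(D,h)$ is what makes this rigorous, as it shows that a leaf-maximising configuration spends the budget identically on all paths, collapsing the problem to the single-sequence optimisation of Lemma~\ref{lem:tau}. The padding step and the ``free root'' bookkeeping are then routine.
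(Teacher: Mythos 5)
Your proof is correct, and it shares the paper's outer structure: the padding of every leaf to depth exactly $l$ by weight-$1$ chain nodes (verified against property $(\star)$ in exactly the same way), the factor $n$ for the root (which, as you note, comes from the algorithm rather than from $(\star)$ itself --- the paper makes the same implicit use of $d_0 \leq n$), and the final appeal to Lemma~\ref{lem:tau}. Where you genuinely diverge is in the step you correctly single out as the crux: turning per-path budget constraints into a bound on the \emph{total} number of leaves. The paper does this by an exchange/symmetrization argument on the tree itself: iteratively, for $i = 1, \ldots, l-1$, it replaces every subtree rooted at depth $i$ by a copy of the one with the most leaves, argues that property $(\star)$ survives (this needs the invariant that all vertices at depths less than $i$ have equal outdegrees, so that all path prefixes carry the same weight), and ends with a level-regular tree $\Tree_l$ whose leaf count is literally the product $d_0 \prod_{i=1}^{l-1} d_i$ with $\sum_{i=1}^{l-1} d_i \leq D$. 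You instead run an induction on depth via the recursion $N(D,h) = \max_{w \geq 1} w \cdot N(D-w, h-1)$, which bounds the maximum leaf count by $\max\{\prod_i w_i : \sum_i w_i \leq D\}$ without modifying the tree beyond the padding. The two routes meet at the same integer optimization problem and the same use of Lemma~\ref{lem:tau}. Your recursion is arguably the more elementary and robust of the two: it sidesteps the somewhat delicate verification that wholesale subtree replacement preserves $(\star)$, and the fact that a leaf-maximizing tree spends its budget identically on all paths falls out of the recursion for free, whereas the paper has to engineer that uniformity by hand. What the paper's symmetrization buys in exchange is an explicit extremal object --- the level-regular tree --- which makes it transparent where the bound $n \cdot \tau(D/(l-1))^{l-1}$ is (nearly) attained.
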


  Given a~tree $\Tree$ with the property $(\star)$ we modify it so that the property $(\star)$ is preserved
  and the number of leaves in it does not increase.
  First, we may assume that all leaves in $\Tree$ are at depth exactly $l$.
  Indeed, let $u_0, \ldots, u_h$ be a~path from the root of $\Tree$ to some leaf $u_h$
  at depth $h < l$. Then, we may append to it a~path $u_h, u_{h+1}, \ldots, u_l$
  -- this operation does not change the number of leaves, and the property $(\star)$
  is preserved because
  \[
  \sum_{i=1}^{l-1} \outdeg(u_i) = \sum_{i=1}^{h-1} \outdeg(u_i) + (l - h) \leq D
  \]

  Next, we modify $\Tree$ iteratively. Let $\Tree_1 := \Tree$. At $i$-th step, for $i = 1, \ldots, l-1$,
  we consider the family $\mathcal{S}_i$ of all subtrees in $\Tree_i$ with a~root at depth $i$.
  Let $S_i \in \mathcal{S}_i$ be a~subtree with the maximum number of leaves.
  We create a~tree~$\Tree_{i+1}$ by substituting in~$\Tree_i$
  all subtrees from $\mathcal{S}_i$ with $S_i$.
  We observe that for every $i=1,\ldots, l-1$ tree~$\Tree_i$ has depth~$l$, the number of leaves in~$\Tree_i$
  is bounded by the number of leaves in $\Tree_{i+1}$,
  and all vertices in~$\Tree_i$ at the same depth $j \leq i-1$
  have the same outdegree.
  Combining the latter property with the~fact that the condition~$(\star)$ holds for leaves in the~subtree~$S_i$,
  we obtain inductively that every tree~$\Tree_i$ still has the property~$(\star)$.

  Now, we consider the~tree $\Tree_l$.
  For $i = 0, \ldots, l-1$ let $d_i$ be the outdegree of any vertex at depth~$i$ in~$\Tree_l$.
  Then we may bound the number of leaves in $\Tree_l$ by
  \[
    d_0 \cdot \prod_{i=1}^{l-1} d_i \leq n \left( \frac{\sum_{i=1}^{l-1} d_i}{l - 1} \right)^{l-1}
    \leq n \left(\frac{D}{l - 1} \right)^{l-1}
  \]

  To obtain a~tighter bound on the size of $\Tree_l$ we observe
  that in the above estimation we obtain an~equality only if $d_i = D / (l-1)$
  for $i = 1, \ldots, l-1$. However, this is impossible unless expression $D / (l-1)$ is integral. After applying Lemma~\ref{lem:tau} we get the tighter bound which proves the claim of Lemma~\ref{lem:generate_paths}.
  }
  \end{proof}

\bibliographystyle{plain}

\end{document}